\newcommand{\norm}[1]{\left\Vert#1\right\Vert}
\newcommand{\abs}[1]{\left\vert#1\right\vert}
\begin{document}
 
\title{Multivariate Shortfall Risk Allocation and Systemic Risk}

\author[a,1,t2]{Yannick Armenti}
\author[a,2,t1,t4]{St\'ephane Cr\'epey}
\author[b,3,t5]{Samuel Drapeau}
\author[c,4,t3,t4]{Antonis Papapantoleon}

\address[a]{Universit\'e d'Evry, 23 Boulevard de France, 91037 Evry, France}
\address[b]{School of Mathematical Sciences \& Shanghai Advanced Institute for Finance (CAFR/CMAR), Shanghai Jiao Tong University, 211 West Huaihai Road, Shanghai, P.R. 200030 China}
\address[c]{Institute of Mathematics, Technical University Berlin, Stra\ss e des 17. Juni 136, 10623 Berlin, Germany}

\eMail[1]{yannick.armenti@gmail.com}
\eMail[2]{stephane.crepey@univ-evry.fr}
\eMail[3]{sdrapeau@saif.sjtu.edu.cn}
\eMail[4]{papapan@math.tu-berlin.de}

\myThanks[t1]{Financial support from the EIF grant ``Collateral management in centrally cleared trading'', from the Chair ``Markets in Transition'', F\'ed\'eration Bancaire Fran\c caise, and from the ANR 11-LABX-0019.}
\myThanks[t2]{Financial support from LCH.Clearnet Paris.}
\myThanks[t3]{Financial support from the EIF grant ``Post-crisis models for interest rate markets''.}
\myThanks[t4]{Financial support from the DAAD PROCOPE project ``Financial markets in transition: mathematical models and challenges''.}
\myThanks[t5]{Financial support from the National Science Foundation of China, ``Research Fund for International Young Scientists'', Grant number 11550110184.}

\abstract{
The ongoing concern about systemic risk since the outburst of the global financial crisis has highlighted the need for risk measures at the level of sets of interconnected financial components, such as portfolios, institutions or members of clearing houses.
The two main issues in systemic risk measurement are the computation of an overall reserve level and its allocation to the different components according to their systemic relevance.
We develop here a pragmatic approach to systemic risk measurement and allocation based on multivariate shortfall risk measures, where acceptable allocations are first computed and then aggregated so as to minimize costs.
We analyze the sensitivity of the risk allocations to various factors and highlight its relevance as an indicator of systemic risk.
In particular, we study the interplay between the loss function and the dependence structure of the components.
Moreover, we address the computational aspects of risk allocation.
Finally, we apply this methodology to the allocation of the default fund of a CCP on real data.
}
\keyWords{Systemic risk, risk allocation, multivariate shortfall risk, sensitivities, numerical methods, CCP, Default Fund.}


\date{\today}
\ArXiV{1507.05351}
\keyAMSClassification{91G, 91B30, 91G60}
\maketitle\frenchspacing

\section{Introduction}\label{s:intr}

The ongoing concern about systemic risk since the onset of the global financial crisis has prompted intensive research on the design and properties of multivariate risk measures.
In this paper, we study the risk assessment for financial systems with interconnected risky components, focusing on two major aspects, namely:
\begin{itemize}
    \item The quantification of a monetary risk measure corresponding to an overall reserve of liquidity such that the whole system can overcome unexpected stress or default scenarios;
    \item The allocation of this overall amount between the different risk components in a way that reflects the systemic risk of each one.
\end{itemize}
Our goal is fourfold.
First, we introduce a theoretically sound but numerically tractable class of systemic risk measures.
Second, we study the impact of the intrinsic dependence on the risk allocation and it sensitivity.
Third, we address the computational aspect and challenges of systemic risk allocation.
Finally, we present empirical results, based on real data provided by LCH S.A., on the risk allocation of the default fund of a CCP.

\paragraph{Review of the Literature:}
Monetary risk measures have been the subject of intensive research since the seminal paper of \citet{artzner1999}, which was further extended by \citet{foellmer2002} and \citet{fritelli2002}, among others. 
The corresponding risk measures, including conditional value-at-risk by \citet{artzner1999}, shortfall risk measures by \citet{foellmer2002} or optimized certainty equivalents by \citet{bental2007}, can be applied in a multivariate framework that models the dependence of several financial risk components.
Multivariate market data-based risk measures include the marginal expected shortfall of \citet{AcharyaPedersenPhilipponRichardson10}, law invariant convex risk measures for portfolio vectors of \citet{rueschendorf2006}, the systemic risk measure of \citet{AcharyaEngleRichardson12} and \citet{BrownleesEngle12}, the delta conditional value-at-risk of \citet{AdrianBrunnermeier11} or the contagion index of \citet{Cont2013}.
In parallel, theoretical economical and mathematical considerations have led to multivalued and set-valued risk measures, in static or even dynamic setup; see for instance \citet{CascosMolchanov14,hamel2011} and \citet{JouiniMeddebTouzi04}. 

Recently, the risk management of financial institutions raised concerns about the allocation of the overall risk among the different components of a financial system. 
A bank, for instance, for real time monitoring purposes, wants to channel to each trading desk a cost reflecting its responsibility in the overall capital requirement of the bank. 
A central clearing counterparty --- CCP for short, also known as a clearing house --- is interested in quantifying the size of the so-called default fund and allocating it in a meaningful way among the different clearing members, see \citep{cont2015,crepey2016,ghamami2016}.
On a macroeconomic level, regulators are considering to require from financial institutions an amount of capital reflecting their systemic relevance.
The aforementioned approaches can only address the allocation problem indirectly, through the sensitivity of the risk measure with respect to the different risk components.
For instance, the so-called \emph{Euler rule} allocates the total amount of risk according to the marginal impact of each risk factor.
However, a practical limitation of the Euler rule is that it is based on G\^ateaux derivatives which in general is difficult to compute beyond simple cases.
Also this Euler rule consider the marginal risk of one element with respect to the full system rather than the marginal risk with respect to each individual components.
In addition, the Euler risk allocation does not add up to the total risk, unless the univariate risk measure that is used in the first place is sub-additive, see \citep{tasche2008}.
In other words, the Euler rule does not automatically fulfill the so-called \emph{full allocation} property.
The work by \citet{cheridito2014} addresses systematically the question of allocation of systemic risk with regard to certain economic properties:
\begin{itemize}
    \item Full allocation: the sum of the components of the risk allocation is equal to the overall risk measure;
    \item Riskless allocation: if a risk factor is riskless, the corresponding component of the risk allocation is equal to it;
    \item Causal responsibility: any system component bears the entire additional costs of any additional risk that it takes.
\end{itemize} 
More specifically, \citet{cheridito2014} propose a framework where an overall capital requirement is first determined by utility indifference principles and then allocated according to a rule such that the above three properties are fulfilled, at least at a first order level of approximation.
In fact, as far as dependence is concerned, whether the last two properties should hold is debatable.
One may argue that each component in the system is not only responsible for its own risk taking but also for its relative exposure to other components.
This is also what comes out from the present study, see Section \ref{ss:sensi}.
In a general framework, \citet{kromer2016} characterized systemic risk out of axioms allowing for a decomposition between and aggregation function and a univariate risk measure.
In the spirit of this aggregation function, in two recent papers, \citet{feinstein2015} and \citet{BiaginiFouqueFrittelliMeyer15} proposed a general approach similar in spirit to ours.
We precise thereafter and later in the paper the relationship to these references and in which sense our take on differs.

\paragraph{Contribution and Outline of the Paper:}
Our approach addresses simultaneously the design of an overall risk measure regarding a financial system of interconnected components and the allocation of this risk measure among the different risk components; the emphasis lies on the allocation and its sensitivities.
In contrast to \citep{cheridito2014, chen2013}, we \emph{first allocate} the monetary risk among the different risk components and \emph{then aggregate} and minimizes the risk allocations in order to obtain the overall capital requirement.
As previously mentioned, \citep{kromer2016}, \citep{feinstein2015} and \citep{BiaginiFouqueFrittelliMeyer15} develop approaches in a similar spirit, covering allocation first followed by aggregation, in general frameworks with different aggregation procedures.
They focus on the resulting risk measure, conducting systematic studies of their properties in terms of set valued functions, diversification and monotonicity, among others.
The multivariate shortfall risk measure of this paper can be viewed as a special case of their definition, in a way precised in Remark \ref{rem:WebBia02}.
Sharing with these references the ``allocate first, then aggregate'' perspective, our approach is restricted to a systemic extension of shortfall risk measures, see \citep{foellmer2002}, based on multivariate loss functions.
However, in contrast to the aforementioned references, we focus on the resulting risk allocation in terms of existence, uniqueness, sensitivities and numerical applications.
In our framework, the \emph{systemic risk} is the risk that stems specifically from the intrinsic dependence structure of an interconnected system of risk components.
In this perspective, the risk allocation and its properties provide a ``cartography'' of the systemic risk, see Section \ref{sec04:computation} on the numerical aspects of risk allocation and the empirical study in Section \ref{sec05:empiricalstudy} on real data for an illustration thereof.
It turns out that special care has to be given to the specifications of the loss function in order to stress the systemic risk.
In \citep{BiaginiFouqueFrittelliMeyer15}, by allowing random allocations, the impact of the interdependence structure can be observed in the future.
Such random allocations may be interesting in view of a posterior management of defaults.
By contrast, our deterministic allocation is sensitive to the dependence of the system already at the moment of the quantification, see Section \ref{sec03:sensis} and see \emph{a contrario} Proposition \ref{prop:marginals}.
We study the sensitivity of the risk allocation with respect to external shocks as well as internal dependence structure.
We show in particular that a causal responsibility can be derived in marginal terms, see Proposition \ref{prop:causalresponsability}.
In addition, we discuss computational aspects of risk allocation and finally, we provide an empirical study on the risk allocation of a default fund of a CCP based on real data provided by LCH S.A.

The univariate shortfall risk measure as a law invariant risk measure holds additional properties as an operator on probability distributions.
Indeed, as studied by \citet{weber2006} and \citet{schied2014}, it has some continuity properties with respect to the $\psi$-weak topology on distributions.
It has been furthermore characterised in \citep{weber2006} as the only convex law invariant convex risk measure on the level of distributions and therefore the unique one having elicitability properties, a wishful statistical property, see \citep{osband1985,bellini2015}. 
Extensions of these results, such as elicitability characterization in multidimensional case as proposed by \citet{ziegel2014} and \citet{ziegel2015}, as well as the axiomatic characterization along the lines of \citep{weber2006}, are highly non trivial and therefore let for further study.
A set-valued multivariate shortfall risk measure has been introduced by \citet{ararat2014}.
However, allocation is the not focus of their work and the loss function that they then consider is decoupled in the sense of \ref{C2}, which from our viewpoint is too restrictive in view of Proposition \ref{prop:marginals}.

The paper is organized as follows:
Section \ref{sec01:msra} introduces the class of systemic loss functions, acceptance sets and risk measures that we use in the paper.
Section \ref{sec02:riskallocation} establishes the existence and uniqueness of a risk allocation.
Section \ref{sec03:sensis} focuses on sensitivities with respect to external shocks, dependence structure, nature of the loss function as well as the properties of full allocation, causal responsibility and riskless allocation mentioned beforehand.
Section \ref{sec04:computation} discusses the computational aspects and challenges of risk allocation.
Section \ref{sec05:empiricalstudy}, applies our approach to the concrete allocation of the default fund of a CCP.
Appendices \ref{appendix:01} and \ref{appendix:02} gather classical facts from convex optimization and results on multivariate Orlicz spaces.
Appendix \ref{appendix:03} provides additional insight on the data of the empirical study.


\subsection{Basic Notation}
Let $x_k$ denote the generic coordinate of a vector $x \in \mathbb{R}^d$, and $e_k$ the $k$-th unit vector.
By $\geqslant$ we denote the lattice order on $\mathbb{R}^d$, that is, $x \geqslant y$ if and only if $x_k\geq y_k$ for every $1\le k\le d$.
We denote by $\norm{\cdot}$ the Euclidean norm and by ${}^\pm,\wedge, \vee, \abs{\cdot}$ the lattice operations on $\mathbb{R}^d$.
For $x,y \in \mathbb{R}^d,$ we write $x>y$ for $x_k>y_k$ componentwise, $x\cdot y = \sum x_ky_k$, $xy=(x_1y_1,\ldots ,x_dy_d)$ and $x/y=(x_1/y_1,\ldots, x_d/y_d)$.
We denote by $f^\ast(y)=\sup_{x}\{x\cdot y-f(x)\}$ the convex conjugate of a function $f:\mathbb{R}^d\to [-\infty,\infty]$, and for $C\subseteq \mathbb{R}^d$, we denote by $\delta(\cdot |C)$ the indicator function of $C$ being equal to $0$ on $C$ and $\infty$ otherwise.

Let $(\Omega,\mathcal{F},P)$ be a probability space, and denote by $L^0(\mathbb{R}^d)$ the space of $\mathcal{F}$-measurable $d$-variate random variables on this space identified in the $P$-almost sure sense.
The space $L^0(\mathbb{R}^d)$ inherits the lattice structure of $\mathbb{R}^d$, hence we can use the above notation in a $P$-almost sure sense.
For instance, for $X$ and $Y$ in $L^0(\mathbb{R}^d),$ we say that $X\geqslant Y$ or $X>Y$ if $P[X\geqslant Y]=1$ or $P[X>Y]=1$, respectively.
Since we mainly deal with multivariate functions or random variables, to simplify notation we drop the reference to $\mathbb{R}^d$ in $L^0(\mathbb{R}^d)$, writing simply $L^0$ unless necessary.


\section{Multivariate Shortfall Risk}\label{sec01:msra}

Let $X=(X_1,\ldots,X_d) \in L^0$ be a random vector of financial losses, that is, negative values of $X_k$ represent actual profits.
We want to determine an overall monetary measure $R(X)$ of the risk of $X$ as well as a sound risk allocation $RA_k(X), k=1,\ldots,d,$ of $R(X)$ among the $d$ risk components. 
We consider a flexible class of risk measures defined by means of loss functions and sets of acceptable monetary allocations.
This class allows us to discuss in detail the properties of the resulting risk allocation as an indicator of systemic risk.
Inspired by the shortfall risk measure introduced in \citep{foellmer2002} in the univariate case, we start with a loss function 
$\ell$ defined on $\mathbb{R}^d,$ used to measure the expected loss $E[\ell(X)]$ of the financial loss vector $X$.
\begin{definition}\label{def:loss}
    A function $\ell:\mathbb{R}^d\to (-\infty, \infty]$ is called a \emph{loss function} if
    \begin{enumerate}[label=\textbf{(A\arabic*)}]
        \item\label{a1} $\ell$ is increasing, that is, $\ell(x)\geq \ell(y)$ if $x\geqslant y$;
        \item\label{a2} $\ell$ is convex, lower semi-continuous with $\inf \ell <0$;
        \item\label{a3} $\ell(x)\geq \sum x_k-c$ for some constant $c$.
    \end{enumerate}
    A loss function $\ell$ is \emph{permutation invariant} if $\ell(x)=\ell(\pi(x))$ for every permutation $\pi$ of the components.
\end{definition}
A risk neutral assessment of the losses corresponds to $E[\sum X_k]=\sum E[ X_k]$.
Thus, \ref{a3} expresses a form of risk aversion, whereby the loss function puts more weight on high losses than a risk neutral evaluation.
As for \ref{a1} and \ref{a2}, they express the respective normative facts about risk that ``the more losses, the riskier'' and ``diversification should not increase risk''; see \citep{drapeau2013} for related discussions.
\begin{remark}\label{rem:permutation}
    The choice of the terminology ``loss function'' stems from \cite{foellmer2002} for which this paper is a multivariate extension.
    Our notion of a loss function coincide with the one of ``aggregation function'' in \citep{feinstein2015, BiaginiFouqueFrittelliMeyer15}, in the sense that it aggregate several loss profiles into a univariate random variable for which it can be decided whether or not it is acceptable, see Remark \ref{rem:WebBia02}.
    Due to the obvious extension from the shortfall risk measure, throughout this paper we stick to the terminology ``loss function''.

    As for the permutation invariance, the considered risk components are often of the same type --- banks, members of a clearing house or trading desks within a trading floor.
    In that case, the loss function should not discriminate a particular component against another.
\end{remark}
\begin{example}\label{ex:loss-functions}
    Let $h:\mathbb{R}\to \mathbb{R}$ be a one-dimensional loss function such as for instance
    \begin{equation*}
        h(x)=x^+-\beta x^-,\, 0\leq \beta <1,\quad h(x)= x+(x^+)^2/2\quad \text{ or }\quad h(x)=e^x-1.
    \end{equation*}
    Using these as building blocks, we obtain the following classes of multivariate loss functions, which will be used for illustrative purposes in the discussion of systemic risk, see Section \ref{sec02:riskallocation} and \ref{sec03:sensis}.
    \begin{enumerate}[label=\textbf{(C\arabic*})]
        \item\label{C1} $\ell(x)=h(\sum x_k)$;
        \item\label{C2} $\ell(x)=\sum h(x_k)$;
        \item\label{C3} $\ell(x)=\alpha h(\sum x_k)+\beta\sum h(x_k)$, where $\alpha,\beta \geq 0$ non both zero.
    \end{enumerate}
    Note that each of these loss functions are permutation invariant.
\end{example}
For integrability reasons we consider loss vectors in the following multivariate Orlicz heart:\footnote{Orlicz spaces are natural spaces in this context. The theory of Orlicz spaces has been used for long in the theory of risk measures, see \citep{delbaen2002,biagini2008,cheridito2009,biagini2010}.}
\begin{equation*}
    M^\theta =\left\{X \in L^0 \colon E\left[ \theta\left( \lambda X \right) \right]<\infty\text{ for all }\lambda \in \mathbb{R}_+\right\},
\end{equation*}
where $\theta(x)=\ell(\abs{x})$, $x \in \mathbb{R}^d$; see Appendix \ref{appendix:02}.
\begin{remark}
\end{remark}
\begin{definition}\label{e:dA}
    A monetary allocation $m \in \mathbb{R}^d$ is \emph{acceptable} for $X$ if
    \begin{equation*}
        E\left[\ell\left( X-m \right) \right]\leq 0.  
    \end{equation*}
    We denote by 
    \begin{equation}\label{eq:acceptance}
        A(X):=\left\{m \in \mathbb{R}^d: E\left[\ell\left( X-m \right) \right]\leq 0\right\}
    \end{equation}
    the corresponding set of \textit{acceptable monetary allocations}.
\end{definition}
\begin{example}\label{e:clearinghouse}
In a centrally cleared trading setup, each clearing member $k$ is required to post a default fund contribution $m_k$ in order to make the risk of the clearing house acceptable with respect to a risk measure accounting for extreme and systemic risk.
The default fund is a pooled resource of the clearing house, in the sense that the default fund contribution of a given member can be used by the clearing house not only in case the liquidation of this member requires it, but also in case the liquidation of another member requires it.
For the determination of the default fund contributions, the methodology of this paper can be applied to the vector $X$ defined as the vector of stressed losses-and-profits of the clearing members.
According to the findings of Section \ref{sec02:riskallocation} and \ref{sec03:sensis}, a ``systemic'' loss function such as \ref{C3} with $\alpha>0$  would be consistent with the purpose of a default fund.
Note however that our setup applied to clearing houses takes the view of a closed system, so an internal assessment.
In principle we ignore additional systemic risk such as a competition between clearing houses with common membership, or the external risk to which these members may be subject to, as addressed for instance in \citep{glasserman2016}.
However, our method could also assess such a systemic risk by taking $X$ as the overall vector of positions of each member in each clearing house.
\end{example}

The next proposition gathers the main properties of the sets of acceptable monetary allocations.
The convexity property in \ref{cond01} means that a diversification between two acceptable monetary allocations remains acceptable.
If a monetary allocation is acceptable, then any greater amount of money should also be acceptable, which is the monotonicity property in \ref{cond01}.
As for \ref{cond02}, it says that, if the losses $X$ are less than $Y$ almost surely, then any monetary allocation that is acceptable for $Y$ is also for $X$.
Next, \ref{cond03} means that a convex combination of allocations acceptable in two markets is still acceptable in the diversified market.
In particular, the acceptability concept pushes towards a greater diversification among the different risk components.
From the viewpoint of a clearing house for instance, a diversified position of its members is preferable to a concentrated one and therefore may enforce default fund allocations that incite its members towards this goal.
Also, from a trading floor supervision, an overall diversified position of the traders is preferable, an incentive which is a current practice, see example \ref{ss:trivariate}.
Finally, \ref{cond04} means that acceptable positions translate with cash in the sense of scalar monetary risk measures \`a la \citep{artzner1999,foellmer2002,fritelli2002}.
As an immediate consequence of these properties, $X\mapsto A(X)$ defines a monetary set-valued risk measure in the sense of \citep{hamel2011}, that is, a set-valued map $A$ from $M^\theta  $ into the set of monotone, closed and convex subsets of $\mathbb{R}^d$.

\begin{proposition}\label{prop:01}
    For $X,Y$ in $M^\theta$, it holds:
    \begin{enumerate}[label=\textit{(\roman*)}]
        \item\label{cond01} $A(X)$ is convex, monotone and closed;
        \item\label{cond02} $A(X)\supseteq A(Y)$ whenever $X\leqslant Y$;
        \item\label{cond03} $A(\alpha X+(1-\alpha)Y)\supseteq \alpha A(X)+(1-\alpha)A(Y),$ for any $\alpha \in (0,1)$;
        \item\label{cond04} $A(X+m)=A(X)+m,$ for any $m\in \mathbb{R}^d$;
        \item\label{cond05} $\emptyset \neq  A(X)\neq  \mathbb{R}^d$.
    \end{enumerate}
    If furthermore 
    \begin{enumerate}[label=\textit{(\roman*)},resume]
        \item\label{cond06} $\ell$ is positive homogeneous, then $A(\lambda X)=\lambda A(X)$ for every $\lambda>0$;
        \item\label{cond07} $\ell$ is permutation invariant, then $A(\pi(X))=\pi(A(X))$ for every permutation $\pi$;
    \end{enumerate}
\end{proposition}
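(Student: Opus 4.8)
The plan is to reduce everything to the scalar functional $\rho_X(m):=E[\ell(X-m)]$ and to observe that $A(X)=\{m\in\mathbb{R}^d:\rho_X(m)\leq 0\}$ is precisely its $0$-sublevel set. The first step is to record the three structural features of $\rho_X$ that carry the proof: since $m\mapsto X-m$ is affine, linearity of the expectation together with \ref{a2} makes $\rho_X$ convex; since $m'\geqslant m$ forces $X-m'\leqslant X-m$, property \ref{a1} makes $\rho_X$ nonincreasing; and $\rho_X$ is lower semicontinuous. For the last point I would take $m_n\to m$, use lower semicontinuity of $\ell$ to get $\liminf_n\ell(X-m_n)\geq\ell(X-m)$ pointwise, and invoke \ref{a3} to supply the integrable minorant $\sum_k(X_k-m_{n,k})-c$ (integrable because $M^\theta\subseteq L^1$ and the $m_n$ remain bounded), so that Fatou's lemma gives $\liminf_n\rho_X(m_n)\geq\rho_X(m)$. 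Granting these, item \ref{cond01} is immediate: convexity of $A(X)$ as a sublevel set of a convex function, monotonicity from $\rho_X$ nonincreasing, closedness from $\rho_X$ lower semicontinuous. Items \ref{cond02}, \ref{cond03}, \ref{cond04} are then purely algebraic. For \ref{cond02}, $X\leqslant Y$ and $m\in A(Y)$ give $X-m\leqslant Y-m$, so \ref{a1} yields $\rho_X(m)\leq\rho_Y(m)\leq 0$. For \ref{cond03}, with $m_1\in A(X)$, $m_2\in A(Y)$ and $m=\alpha m_1+(1-\alpha)m_2$, I would write $\alpha X+(1-\alpha)Y-m=\alpha(X-m_1)+(1-\alpha)(Y-m_2)$ and apply \ref{a2} pointwise before integrating. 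Item \ref{cond04} is the change of variable $n\mapsto n-m$ inside the defining inequality.

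The genuinely analytic part is the non-triviality \ref{cond05}. For $A(X)\neq\emptyset$ I would push the allocation to $+\infty$ along the diagonal $\mathbf 1:=(1,\dots,1)$. Because $\ell$ is increasing, for every fixed $x$ one has $\ell(x-t\mathbf 1)\downarrow\inf\ell$ as $t\to\infty$, since any point is eventually dominated by $x-t\mathbf 1$; hence $\ell(X-t\mathbf 1)\downarrow\inf\ell<0$ pointwise. Monotone (equivalently dominated, with dominating value $\ell(X-\mathbf 1)$, integrable in $M^\theta$) convergence then gives $\rho_X(t\mathbf 1)\to\inf\ell<0$, so $t\mathbf 1\in A(X)$ for $t$ large. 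For $A(X)\neq\mathbb{R}^d$ I would push in the opposite direction: by \ref{a3}, $\rho_X(-t\mathbf 1)\geq E\big[\sum_k X_k\big]+td-c\to\infty$, so sufficiently negative allocations fail acceptability.

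Finally, items \ref{cond06} and \ref{cond07} are one-line substitutions on $\rho_X$. If $\ell$ is positively homogeneous then $\rho_{\lambda X}(m)=E[\ell(\lambda(X-m/\lambda))]=\lambda\,\rho_X(m/\lambda)$, whose sign is that of $\rho_X(m/\lambda)$, giving $A(\lambda X)=\lambda A(X)$. If $\ell$ is permutation invariant then, $\pi$ being linear, $\pi(X)-m=\pi(X-\pi^{-1}(m))$, so $\rho_{\pi(X)}(m)=\rho_X(\pi^{-1}(m))$ and $A(\pi(X))=\pi(A(X))$. I expect the main obstacle to be exactly the two limiting arguments flagged above, namely the lower semicontinuity of $\rho_X$ underlying closedness in \ref{cond01} and the interchange of limit and expectation in \ref{cond05}: both are where the choice of the Orlicz heart $M^\theta$ is essential, as \ref{a3} furnishes the integrable control from below while the defining condition of $M^\theta$ furnishes the control from above. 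Every other assertion is a formal consequence of monotonicity, convexity and cash-translation of $\rho_X$.
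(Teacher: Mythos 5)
Your proof is correct and follows essentially the same route as the paper: both reduce everything to the functional $m\mapsto E[\ell(X-m)]$, read \ref{cond01}--\ref{cond04} off its convexity, monotonicity and lower semicontinuity, obtain \ref{cond05} by pushing the allocation to $\pm\infty$ along the diagonal with monotone convergence and the bound \ref{a3}, and get \ref{cond06}--\ref{cond07} by the same substitutions; you merely supply more detail (Fatou with the integrable minorant from \ref{a3}) where the paper is terse. The only blemish is a reversed phrase in \ref{cond05}: you want that $x-t\mathbf 1$ is eventually dominated by any fixed point $y$ (so $\ell(x-t\mathbf 1)\leq\ell(y)$), not the other way around, but the conclusion $\ell(x-t\mathbf 1)\downarrow\inf\ell$ is correct as you use it.
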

\begin{proof}
    Since $\ell$ is convex, increasing and lower semi-continuous, it follows that $(m,X)\mapsto E[\ell (X-m)]$ is convex and lower semi-continuous, decreasing in $m$ and increasing in $X$.
    This implies the properties \ref{cond01} through \ref{cond03} by Definition \ref{e:dA} of $A(X)$.
    Regarding \ref{cond04}, a change of variables yields
    \begin{align*}
        A(X+m)&=\left\{n \in \mathbb{R}^d\colon E\left[\ell\left( X+m-n \right) \leq 0\right]\right\}=\left\{n+m \in \mathbb{R}^d\colon E\left[\ell\left( X-n \right) \right]\leq 0\right\}=A(X)+m.
    \end{align*}
    As for \ref{cond05}, on the one hand, $\ell(X-m)\searrow \ell(-\infty)< 0$ as $m\to \infty$ component-wise.
    Since $X\in M^\theta$ it follows that $\ell(X) \in L^1$, thus monotone convergence yields $E[\ell(X-m)]\searrow \ell(-\infty)<0$ and in turns the existence of $m \in \mathbb{R}^d$ such that $E[\ell(X-m)]\leq 0$, showing that $A(X)\neq \emptyset$.
    On the other hand, $\ell$ being increasing and such that $\ell(x)\geq \sum x_k-c$, it implies that $\ell(X-m)\geq \sum X_k-\sum m_k -c\nearrow \infty $ as $m\to -\infty$, component-wise.
    Hence, monotone convergence yields $E[\ell(X-m)]\nearrow \infty>0$, therefore there exists $m \in \mathbb{R}^d$ such that $E[\ell(X-m)]>0$, that is, $m \not \in A(X)$.
    As for \ref{cond06}, if $\ell$ is positive homogeneous, for any $\lambda >0$ it holds $E[\ell(\lambda X-m)]=\lambda E[\ell(X-m/\lambda)]$.
    Hence $m$ is in $A(\lambda X)$ if and only if $m/\lambda$ is in $A(X)$ if and only if $m$ is in $\lambda A(X)$.
    Finally, if $\ell$ is permutation invariant, for any permutation $\pi$ it holds $E[\ell(\pi(X)-m)]=E[\ell(\pi(X-\pi^{-1}(m))]=E[\ell(X-\pi^{-1}(m))]$.
    Hence $m$ is in $A(\pi(X))$ if and only if $\pi^{-1}(m)$ is in $A(X)$, if and only if $m$ is in $\pi(A(X))$ showing \ref{cond07}.
\end{proof}

Figure \ref{fig:021} shows sets of acceptable monetary allocations for a bivariate normal distribution with varying correlation coefficient.
The location and shape of these sets change with the correlation: the higher the correlation, the more costly the acceptable monetary allocations, as expected in terms of systemic risk.
As discussed in Sections \ref{sec02:riskallocation} and \ref{sec03:sensis}, this feature is not always immediate and depends on the specification of the loss function.
\begin{figure}[ht]
    \centering
        \includegraphics[width=0.8\textwidth]{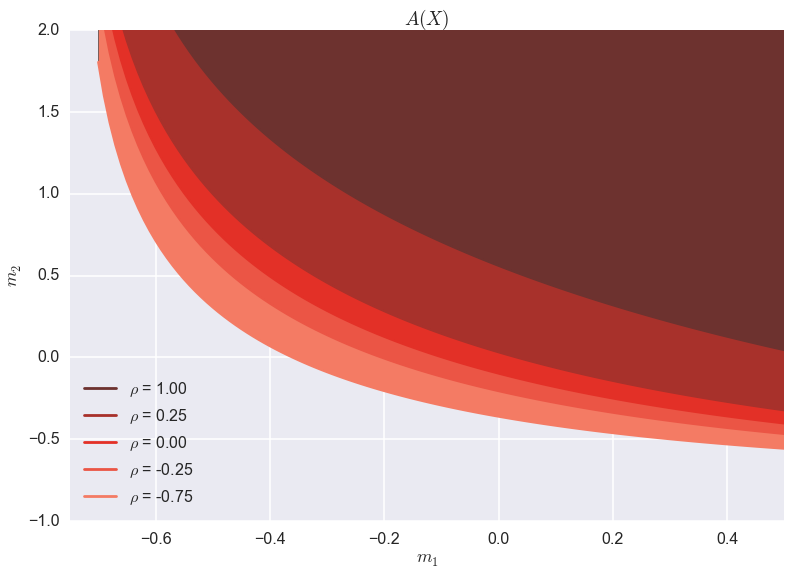}
        \caption{Acceptance sets $A(X)$ corresponding to the case study of Section \ref{subsec:exep01} for different correlations.}
        \label{fig:021}
\end{figure}

Given an acceptable monetary allocation $m\in A(X)$, its aggregated liquidity cost is $\sum m_k$.
The smaller the cost, the better, which motivates the following definition.
\begin{definition}
    The \textit{multivariate shortfall risk} of $X\in M^\theta  $ is  
    \begin{equation} \label{eq:def_shortfall_risk}
        R(X) :=\inf \left\{ \sum m_k\colon m \in A(X)\right\}=\inf \left\{ \sum m_k \colon E\left[ \ell\left( X-m \right) \right] \leq 0\right\}.
    \end{equation}
\end{definition}

\begin{example}
    Following up on the central clearing house Example \ref{e:clearinghouse}, any acceptable allocation $m\in A(X)$ yields a corresponding value for the default fund.
    Clearing houses are in competition with each other, hence they are looking for the cheapest acceptable allocation to require from their members.
\end{example}
\begin{remark}\label{rem:WebBia02}
    When $d=1$, the above definition corresponds exactly to the shortfall risk measure in \citep{foellmer2002}, of which this paper is a  multivariate extension.

    The set valued risk measure $X\mapsto A(X)$ introduced in \eqref{eq:acceptance} can be seen as an example of the set valued systemic risk measures presented in \citep{feinstein2015}, which in their notation translates as follows
    \begin{equation*}
        A(X)=R(Y,k)=\left\{ m \in \mathbb{R}^d\colon Y_{k+m}\in \mathcal{A} \right\}
    \end{equation*}
    where the aggregation is given by $Y_{k+m}=\Lambda(X-k-m)$ for $\Lambda(x)=\ell(x)$ and the acceptance set is $\mathcal{A}:=\{X\colon E[X]\leq 0\}$.
    Their setting considers more general random fields $Y_k$ associated with capital allocations denoted by $k$ accommodating for instance the modelling of financial networks, among others.
    The case we consider can be embedded into \citep[Case (ii), Page 5]{feinstein2015}.
    Even if set valued risk measure is not the primary focus of \citep{BiaginiFouqueFrittelliMeyer15}, it is included in the definition of the acceptance family which, in their notation, is given as follows
    \begin{equation*}
        \mathcal{A}^{m}=\mathcal{A}^{\mathbf{Y}}=\left\{ X\colon E\left[ \ell(X-m) \right] \leq 0\right\}, \quad \mathbf{Y}\in \mathcal{C}
    \end{equation*}
    where $\mathcal{C}=\mathbb{R}^d$ and $\mathbf{Y}=\mathbb{R}^d$.
    The resulting systemic risk measure can also be translated in their notation and denomination in terms of an aggregating function $\Lambda(x)=\ell(x)$, acceptance set $\mathcal{A}=\{X\colon E[X]\leq 0\}$ and a measure of risk $\pi(m)=\sum m_k$, resulting into
    \begin{equation*}
        R(X)=\inf \left\{ \pi(m)\colon \Lambda(X-m)\in \mathcal{A} \right\}.
    \end{equation*}
    Therefore the case we consider can be embedded into the class presented in \citep[Section 1.3]{BiaginiFouqueFrittelliMeyer15}.
\end{remark}
Our next result, which uses the concepts and notation of Appendix \ref{appendix:02}, shows that all the classical properties of the shortfall risk 
measure, including its dual representation, can be extended to 
the multivariate case.
We denote by
\begin{equation*}
    \mathcal{Q}^{\theta^\ast}:=\left\{ \frac{dQ}{dP}:=(Z_1,\ldots,Z_d)\colon Z \in L^{\theta^\ast}, Z\geqslant 0 \text{ and such that }E\left[1\cdot Z\right]=E\left[\sum Z_k\right]=1 \right\}
\end{equation*}
the set of $d$-dimensional measure densities in $L^{\theta^\ast}$ normalized to $E[1\cdot Z]=1$.
For the sake of simplicity, we use the notation $E_Q[X]:=E[dQ/dP \cdot X]$ for $dQ/dP\in \mathcal{Q}^{\theta^\ast}$ and $X\in M^{\theta}$.
\begin{theorem}\label{thm:rep}
    The function
    \begin{equation*}
        R(X)=\inf\left\{ \sum m_k\colon m \in A(X)\right\}, \quad X\in M^\theta  ,
    \end{equation*}
    is real valued, convex, monotone and translation invariant.\footnote{In the sense that $R(X+m)=R(X)+\sum m_k$.}
    In particular, it is continuous and sub-differentiable.
    If $\ell$ is positive homogeneous, then so is $R$.
    Moreover, it admits the dual representation
    \begin{equation}
        R(X)=\max_{Q \in \mathcal{Q}^{\theta^\ast}}\left\{ E_Q\left[  X\right]-\alpha(Q) \right\}, \quad X \in M^\theta  ,
        \label{eq:dualrep}
    \end{equation}
    where the penalty function is given by
    \begin{equation}\label{e:dualrepr}
        \alpha(Q)=\inf_{\lambda>0}  E\left[ \lambda \ell^\ast\left( \frac{dQ}{\lambda dP} \right) \right]  , \quad Q \in \mathcal{Q}^{\theta^\ast}.
    \end{equation}
\end{theorem}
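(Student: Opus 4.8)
The plan is to dispatch the order-theoretic and algebraic properties first, since they transfer immediately from Proposition~\ref{prop:01}, and then to obtain the dual representation as a Fenchel--Moreau biconjugation on the Orlicz heart $M^\theta$; the only genuinely technical point is the computation of the conjugate $R^\ast$ together with the identification of its effective domain with $\mathcal{Q}^{\theta^\ast}$. First I would record that $R$ is convex, monotone, translation invariant and (when $\ell$ is positively homogeneous) positively homogeneous, each obtained by taking the infimum of $\sum m_k$ over the acceptance sets and invoking the corresponding inclusion in Proposition~\ref{prop:01}\ref{cond03}, \ref{cond02}, \ref{cond04} and \ref{cond06}. Finiteness uses both structural assumptions on $\ell$: $R(X)<\infty$ because $A(X)\neq\emptyset$ by Proposition~\ref{prop:01}\ref{cond05}, while $R(X)>-\infty$ follows from the coercivity \ref{a3}, since for any $m\in A(X)$ one has $0\geq E[\ell(X-m)]\geq E[\sum X_k]-\sum m_k-c$, hence $\sum m_k\geq E[\sum X_k]-c$, the right-hand side being finite because $M^\theta\hookrightarrow L^1$.

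Next I would upgrade finiteness to continuity. As a finite-valued convex, monotone, translation-invariant functional on the Orlicz heart $M^\theta$, a Banach lattice (Appendix~\ref{appendix:02}), $R$ is continuous for the Orlicz-norm topology by the extended Namioka--Klee theorem, and therefore sub-differentiable at every point, being a finite continuous convex function. Norm-continuity makes $R$ lower semi-continuous for the duality $\scal{Z}{X}=E[Z\cdot X]$ between $M^\theta$ and its norm dual $L^{\theta^\ast}$ (Appendix~\ref{appendix:02}), so Fenchel--Moreau yields $R(X)=\sup_{Z\in L^{\theta^\ast}}\{E[Z\cdot X]-R^\ast(Z)\}$.

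It then remains to compute $R^\ast$ and to see that the supremum runs effectively over $\mathcal{Q}^{\theta^\ast}$ and is attained. Writing $R^\ast(Z)=\sup_X\{E[Z\cdot X]-R(X)\}$ and substituting $Y=X-m$ separates the two variables: monotonicity forces $Z\geqslant 0$ on $\dom R^\ast$, whereas translation invariance forces the mean normalization of $\mathcal{Q}^{\theta^\ast}$, lest the affine term $\sum_k m_k(E[Z_k]-1)$ be unbounded in $m\in\mathbb{R}^d$ and $R^\ast(Z)=+\infty$. This reduces the computation to $R^\ast(Z)=\sup\{E[Z\cdot Y]\colon E[\ell(Y)]\leq 0\}$. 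I would then attach a multiplier $\lambda\geq 0$ to the single scalar constraint $E[\ell(Y)]\leq 0$, use the strict feasibility provided by $\inf\ell<0$ in \ref{a2} as a Slater condition to exchange $\sup_Y$ with $\inf_\lambda$, and for fixed $\lambda>0$ pass the supremum inside the expectation by Rockafellar's interchange theorem to get $\sup_Y E[Z\cdot Y-\lambda\ell(Y)]=E[(\lambda\ell)^\ast(Z)]=E[\lambda\ell^\ast(Z/\lambda)]$. Taking the infimum over $\lambda>0$ identifies $R^\ast(Z)=\alpha(Q)$ on $\mathcal{Q}^{\theta^\ast}$, and the passage from $\sup$ to $\max$ in \eqref{eq:dualrep} is precisely the sub-differentiability already established, since any element of $\partial R(X)$ realizes the dual maximum.

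The main obstacle is this computation of $R^\ast$: both exchanges---strong Lagrangian duality $\sup_Y\inf_\lambda=\inf_\lambda\sup_Y$ and Rockafellar's swap of supremum and integral---require care in the Orlicz framework of Appendix~\ref{appendix:02}, and it is the Slater condition furnished by $\inf\ell<0$ that both closes the duality gap and lets the multiplier be chosen strictly positive. The identification $(M^\theta)^\ast=L^{\theta^\ast}$, also drawn from Appendix~\ref{appendix:02}, is what guarantees that the abstract dual variable is an integrable density and hence that $\dom R^\ast\subseteq\mathcal{Q}^{\theta^\ast}$.
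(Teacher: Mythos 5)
Your proposal is correct and follows essentially the same route as the paper: finiteness from \ref{a3} and Proposition~\ref{prop:01}\ref{cond05}, continuity and sub-differentiability via the extended Namioka--Klee theorem on the Banach lattice $M^\theta$, Fenchel--Moreau for the duality $(M^\theta,L^{\theta^\ast})$, restriction of $\dom R^\ast$ to nonnegative normalized densities via monotonicity and translation invariance, and a Slater-based Lagrangian argument for the explicit penalty. The only cosmetic difference is that you compute $R^\ast(Z)$ directly by attaching the multiplier $\lambda$ to the constraint $E[\ell(Y)]\leq 0$, whereas the paper first proves $R(X)=\sup_{\lambda>0}S(\lambda,X)$ for the multivariate optimized certainty equivalent $S$ and then dualizes each $S(\lambda,\cdot)$ --- the same saddle-point computation read in the opposite order.
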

\begin{remark}
    This robust representation can also be inferred from the general results of \citep{FarkasKoch15}.
    However, for the sake of completeness and since the multivariate shortfall risk measure is closely related to a multidimensional version of the optimized certainty equivalent, we give a self contained proof tailored to our context.

    The argumentation follows the original one by \citep{foellmer2002}, which however cannot be directly applied on the product space $\Omega \times \{1,\ldots,d\}$ since the optimization is done here according to multidimensional allocations $m\in \mathbb{R}^d$ rather than one dimensional allocations $m \in \mathbb{R}$.
    Moreover, in the course of our derivation of the dual representation we extend to the multidimensional setting the following relationship between the optimized certainty equivalent and the shortfall risk provided in \citep[Chapter 5.2]{bental2007}
    \begin{equation*}
        R(X)=\inf_{m\in \mathbb{R}}\left\{ m\colon E\left[ \ell(L-m) \right] \leq 0\right\}=\sup_{\lambda>0}\inf_{m\in \mathbb{R}}\left\{ m+\lambda E\left[ \ell(X-m) \right]\right\},
    \end{equation*}
    where
    \begin{equation*}
        S(\lambda,X)=\inf_{m\in \mathbb{R}}\left\{ m+\lambda E\left[ \ell(X-m) \right] \right\}=\sup_{Q\ll P}\left\{ E^Q[X]-E\left[\lambda \ell^\ast\left( \frac{dQ}{\lambda dP} \right)  \right] \right\}
    \end{equation*}
    is the optimized certainty equivalent of $X$.\footnote{Here $\ell$ is a one dimensional loss function and $X$ a one dimensional random variable.}
\end{remark}
\begin{proof}
    By Proposition \ref{prop:01} \ref{cond05}, we have $A(X)\neq \emptyset$ and in turn $R(X)<\infty$.
    If $R(X)=-\infty$ for some $X\in M^\theta  $, then there exists a sequence $(m^n) \subseteq A(X)$ such that $\sum m_k^n \to -\infty$, in contradiction with $0\geq E[\ell(X-m^n)]\geq E[\sum X_k]-\sum m_k^n-c$.
    Hence, $R(X)>-\infty$.
    Monotonicity, convexity and translation invariance readily follow from Proposition \ref{prop:01} \ref{cond02}, \ref{cond03} and \ref{cond04}, respectively.
    In particular, $R$ is a convex, real-valued and increasing functional on the Banach lattice $M^\theta$.
    Hence, by \citep[Theorem 4.1]{cheridito2009}, $R$ is continuous and 
sub-differentiable.
    Therefore, the results recalled in Appendix \ref{appendix:02} and the 
Fenchel-Moreau theorem imply
    \begin{equation}\label{eq:0001}
        R(X)=\sup_{Y \in L^{\theta^\ast}} \left\{ E\left[X \cdot Y\right]-R^\ast(Y) \right\}=\max_{Y \in L^{\theta^\ast}}\left\{ E\left[X \cdot Y\right]-R^\ast(Y) \right\},
    \end{equation}
    where $R^\ast(Y)=\sup\{ E[ X \cdot Y ] -R(X)\colon X \in M^\theta \}$, $Y \in L^{\theta^\ast}$.
    By the bipolar theorem, for $Y\not \geqslant 0$, there exists $K\in M_{\theta}$, $K\geqslant 0$ with $E[Y\cdot K]<-\varepsilon<0$ for some $\varepsilon>0$.
    By monotonicity of $R$, it follows that $R(-\lambda K)\leq R(0)<\infty$ for every $\lambda >0$.
    Hence
    \begin{align*}
        R^\ast(Y)=\sup_{X \in M^\theta}\left\{ E\left[ Y\cdot X \right] -R(X)\right\}\geq \sup_{\lambda >0}\left\{ -\lambda E[Y\cdot K]-R(-\lambda K) \right\}\geq \sup_{\lambda} \lambda \varepsilon -R(0)=\infty,
    \end{align*}
    Furthermore, by translation invariance, setting $X=(r, \ldots,r)$ for $r \in \mathbb{R}$, it follows that 
    \begin{equation*}
        R^\ast(Y)\geq r E\left[ 1 \cdot Y  \right] -R(0)-rd =r \left(E\left[ 1 \cdot Y  \right] -d  \right)-R(0),
    \end{equation*}
    where the right hand side can be made arbitrarily large whenever $E\left[ 1 \cdot Y  \right] \neq d$.
    It shows that the supremum and maximum in \eqref{eq:0001} can be restricted to the set of those $Y \in L^{\theta^\ast}$ such that $Y\geqslant 0$ and $E[1\cdot Y]=1$, that is, can be identified to $\mathcal{Q}^{\theta^\ast}$.
    In order to obtain a more explicit expression of the penalty function $\alpha(Q):=R^\ast(dQ/dP)=R^\ast(Y)$, we set
    \begin{equation*}
        \begin{split}
            L(m,\lambda, X) &= \sum m_k +\lambda E\left[\ell\left( X-m \right) \right] \\
            S (\lambda,X)    &= \inf_{m \in \mathbb{R}^d}L(m,\lambda,X)=\inf_{m\in \mathbb{R}^d}\left\{ \sum m_k +\lambda E\left[\ell\left( X-m \right) \right] \right\}.
        \end{split}
    \end{equation*}
    The functional $X\mapsto S(\lambda, X)$ is a multivariate version of the so called optimized certainty equivalent, see \citep{bental2007}.
    Clearly,
    \begin{equation*}
        R(X)=\inf_{m\in \mathbb{R}^d}\sup_{\lambda>0} L( m,\lambda, X)\geq \sup_{\lambda >0}\inf_{m\in \mathbb{R}^d} L( m,\lambda,  X)=\sup_{\lambda >0}S(\lambda, X).
    \end{equation*}
    Since $A(X)$ is nonempty and monotone, there exists $m \in \text{Int}(A(X))$ and so the Slater condition is fulfilled.
    As a consequence of \citep[Theorem 28.2]{rockafellar1970}, there is no 
duality gap.
    Namely, $R(X) = \sup_{\lambda >0}S(\lambda, X)$.
    Via the first part of the proof, an easy multivariate adaptation of \citep[Chapter 4]{bental2007} and \citep[Chapter 2]{antonis2012} yields
    \begin{equation*}
        S(\lambda, X)=\sup_{Q \in \mathcal{Q}^{\theta^\ast}}\left\{ E_Q\left[ X \right] -E\left[ \left( \ell_{\lambda} \right)^\ast \left( \frac{dQ}{dP} \right)\right] \right\},
    \end{equation*}
    where $\ell_{\lambda}(m)=\lambda\ell(m)$, hence $\ell_\lambda^\ast(m^\ast)=\lambda\ell^\ast(m^\ast /\lambda )$.
    Combining this with $R(X)=\sup_{\lambda>0}S(\lambda,X),$ the dual representation \eqref{e:dualrepr} follows.
\end{proof}
\begin{example}
    We consider the two positive homogeneous loss functions of the empirical study:
    \begin{align}
        \ell_1(x) &=\beta \sum x_k^+-\alpha \sum x_{k}^- \label{eq:loss01}\\
        \ell_2(x) &=\beta \sum x_k^+-\alpha \sum x_{k}^-+\beta\sum_{k<j}\left( x_k+x_j \right)^+-\alpha\sum_{k<j}(x_k+x_j)^- \label{eq:loss02}
    \end{align}
    for $0<\alpha<1<\beta$. 
    A simple computation yields that $\ell_i^\ast=\delta(\cdot | C_i)$ where
    \begin{equation*}
        \begin{split}
            C_1&=\left\{x\colon \alpha \leq x_k\leq \beta \text{ for all }k  \right\}\\
            C_2&=\left\{ x=\sum_{1\leq j\leq d} x_{0j}e_k+\sum_{1\leq k<j\leq d}x_{kj}(e_k+e_j)\colon \alpha\leq  x_{kj}\leq \beta \text{ for all }0\leq k<j\leq d\right\}
        \end{split}
    \end{equation*}
    Note that $[\alpha,\beta]=C_1\subseteq C_2 \subseteq [\alpha,d\beta]$ where $\alpha$ and $\beta$ are identified with their vector of equal components.
    Furthermore, $d\beta$ is an extreme point of $C_2$.
    It follows in particular that $R_1\leq R_2$.
    By positive homogeneity, $\alpha^\ast_i$ only takes values $0$ or $\infty$.
    It follows that $\alpha^\ast_i(Q)=0$ if and only if there exits $\lambda>0$ such that $dQ/dP\in \lambda C_i$ almost surely.
    Since $1$ has to be in $\lambda C_i$ for this to happen, we can constrain $1/\beta \leq \lambda \leq 1/\alpha$ in the case of $C_1$ and $1/(d\beta)\leq \lambda\leq 1/\alpha$ in the case of $C_2$.
    Thus
    \begin{equation*}
        \begin{split}
            R_1(X)&=\sup\left\{ E_Q\left[ X \right]\colon  \frac{dQ_k}{dP} \in \lambda C_1 \text{ for some } 1/\beta\leq \lambda\leq 1/\alpha \right\}\\
            R_2(X)&=\sup\left\{ E_Q\left[ X \right]\colon  \frac{dQ}{dP} \in \lambda C_2 \text{ for some } 1/(d\beta)\leq \lambda\leq 1/\alpha \right\}
        \end{split}
    \end{equation*}
\end{example}

\section{Risk Allocation}\label{sec02:riskallocation}

We have established in Theorem \ref{thm:rep} that the infimum over all allocations $m\in\mathbb{R}^d$ used for defining $R(X)$ is real valued and has the desired properties of a risk measure.
Beyond the question of the overall liquidity reserve, the allocation of this amount between the different risk components is key for systemic risk purposes.
We therefore address in this section the following questions:
\begin{itemize}
    \item The existence of a risk allocation;
    \item The uniqueness of a risk allocation;
    \item The impact of the interdependence structure,
\end{itemize}
The first question is important in some applications such as the default fund contribution of each member of a clearing house or the allocation of the capital among the different business lines of a bank.
As for the second question, non-uniqueness can become an issue when this allocation is a regulatory cost for the different members or desks.
If no additional clear rule is provided, the members would then face arbitrariness as for their contributions for the same overall risk.
As for the last question, systemic risk should reflect the level of dependence of the system.
For instance, highly correlated losses, while having the same marginal risk, should result into a higher systemic risk and different optimal allocations.
\begin{definition}
    A \emph{risk allocation} is an acceptable monetary allocation $m\in A(X)$ such that $R(X)=\sum m_k$.
    When a risk allocation is uniquely determined, we denote it by $RA(X)$.
\end{definition}
\begin{remark}
    By definition, if a risk allocation exists, then the full allocation property automatically holds; see also Section \ref{ss:sensi}.
\end{remark}
In contrast to the univariate case, where the unique risk allocation is given by $m=R(X)$, existence and uniqueness are no longer straightforward in the 
multivariate case.
The following example shows that existence may fail.
\begin{example}\label{ex:001}
    Consider the loss function $\ell(x,y)=x+y+(x+y)^+/(1-y)-1$ if $y<1$ and $\infty$ otherwise.
    It follows that $A(0)=\{ m \in \mathbb{R}^2\colon m_2>-1 \text{ and } 1\geq-m_1-m_2+(-m_1-m_2)^+/(1+m_2) \}$.
    Computations yield $R(0)=\inf_{m_2>-1}\{ m_2-(m_2^2+3m_2+1)/(m_2+2)\}=-1$.
    However, the infimum is not attained.
\end{example}
Our next result introduces conditions towards the existence and uniqueness of a risk allocation.
\begin{definition}
    We call a loss function $\ell$ \emph{permutation invariant} if, $\ell(x)=\ell(\pi(x))$ holds for every permutation $\pi$ of the components of the vector $x$.
\end{definition}
Note that the loss function used in Example \ref{ex:001} is not permutation invariant.
We denote by $Z=\{u\in \mathbb{R}^d\colon \sum u_k=0\}$ the set of \emph{zero-sum allocations}.
\begin{theorem}\label{thm:monalloc}
    If $\ell$ is a permutation invariant loss function, then, for every $X \in M^\theta  $, risk allocations $m^\ast$ exist.
    They are characterized by the first order conditions
    \begin{equation}\label{eq:lagrangecond}
        1 \in \lambda^\ast E\left[ \nabla \ell\left( X-m^\ast \right) \right]\quad \text{ and }\quad E\left[\ell\left(X-m^\ast \right)\right]=0,
    \end{equation}
    where $\lambda^\ast$ is a Lagrange multiplier.
    In particular, when $\ell$ has no zero-sum direction of recession\footnote{We refer the reader to Appendix \ref{appendix:01} regarding the notions and properties of recession cones and functions. In particular, if $\ell$ has no zero-sum direction of recession except $0$, then $\ell$ is an unbiased loss function.} except $0$, the set of the solutions $(m^\ast,\lambda^\ast)$ to the first order conditions \eqref{eq:lagrangecond} is bounded.
    
    If $\ell(x+\cdot)$ is strictly convex along zero-sums allocations for every $x$ with $\ell(x)\geq 0$, then the risk allocation is unique.
\end{theorem}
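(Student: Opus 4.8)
The plan is to recast the problem as the minimization of the closed proper convex function $F(m):=\sum_k m_k+\delta(m\mid A(X))$ over $\mathbb{R}^d$. By Proposition~\ref{prop:01}\ref{cond05} and the first part of the proof of Theorem~\ref{thm:rep}, $F$ is proper, closed (convex and lower semi-continuous) and bounded below, and a risk allocation is precisely a minimizer of $F$. I would then invoke the attainment theory of \citep[Theorem~27.1]{rockafellar1970}: $F$ attains its infimum as soon as every direction of recession of $F$ is a direction in which $F$ is constant, i.e. as soon as the recession cone $\{d:F^\infty(d)\le 0\}$ is a linear subspace on which $F^\infty$ vanishes. So existence reduces to a structural statement about this recession cone, and permutation invariance enters exactly there.

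First I would compute the recession data. Writing $\phi(m)=E[\ell(X-m)]$, so that $A(X)=\{\phi\le 0\}$, the convexity of $\ell$ makes the difference quotients $\tfrac1t(\ell(y-td)-\ell(y))$ increase to the recession function $\ell^\infty(-d)$, and monotone convergence (licit since $\ell(X)\in L^1$ for $X\in M^\theta$) gives $\phi^\infty(d)=\ell^\infty(-d)$; hence the recession cone of $A(X)$ is $\{d:\ell^\infty(-d)\le 0\}$ (see Appendix~\ref{appendix:01}). Assumption \ref{a3} yields $\ell^\infty(e)\ge\sum e_k$, so any recession direction $d$ of $A(X)$ satisfies $\sum d_k\ge 0$; since $F^\infty(d)=\sum d_k$ on that cone and $+\infty$ off it, we get $F^\infty\ge 0$ everywhere and $\{F^\infty\le 0\}=L$, where $L:=\{d\in Z:\ell^\infty(d)\le 0\}$ is the cone of zero-sum directions of recession of $\ell$ (here I use $d,-d\in Z$ and $\ell^\infty\ge 0$ on $Z$). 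The crux is that $L$ is a convex cone which permutation invariance renders invariant under the $S_d$-action on $Z$. The key structural lemma is that an $S_d$-invariant convex cone contained in $Z$ is automatically a linear subspace: for $e\in Z$ one has $\sum_{\pi\in S_d}\pi(e)=0$, hence $-e=\sum_{\pi\neq\mathrm{id}}\pi(e)$ is a finite sum of elements of $L$, so $-e\in L$. Thus $L$ is symmetric, $F^\infty=\sum d_k=0$ on $L$, the recession cone of $F$ coincides with its constancy space, and \citep[Theorem~27.1]{rockafellar1970} delivers a minimizer $m^\ast$.

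For the characterization and boundedness I would proceed by Lagrangian duality. The program is convex and admits a Slater point (as $\mathrm{Int}\,A(X)\neq\emptyset$, used already in the proof of Theorem~\ref{thm:rep}), so the Karush--Kuhn--Tucker conditions are necessary and sufficient: $m^\ast$ is optimal iff there is $\lambda^\ast\ge 0$ with $0\in \mathbf{1}-\lambda^\ast E[\partial\ell(X-m^\ast)]$ and $\lambda^\ast E[\ell(X-m^\ast)]=0$, where I exchange subdifferential and expectation via the calculus for integral convex functionals. Since $\lambda^\ast=0$ would force $\mathbf{1}=0$, necessarily $\lambda^\ast>0$ and the constraint is active, which is exactly \eqref{eq:lagrangecond}. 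If $\ell$ has no zero-sum direction of recession except $0$ then $L=\{0\}$, so $F^\infty(d)>0$ for all $d\neq 0$ and \citep[Theorem~27.1]{rockafellar1970} gives a nonempty \emph{bounded} set of optimal $m^\ast$; pairing the stationarity relation with $\mathbf{1}$ gives $d=\lambda^\ast E[\mathbf{1}\cdot\nabla\ell(X-m^\ast)]$, and on the compact set of optimizers the right-hand factor stays bounded away from $0$, which bounds $\lambda^\ast$.

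Finally, for uniqueness I would take two risk allocations $m^1,m^2$, so that $u:=m^2-m^1\in Z$, and set $m^s=m^1+su$. Optimality forces $E[\ell(X-m^s)]=0$ for all $s\in[0,1]$ (otherwise a strict inequality would allow a cheaper feasible allocation, contradicting $R(X)$), and the equality case of integrated Jensen then forces $s\mapsto \ell(X(\omega)-m^s)$ to be affine for almost every $\omega$. Because $E[\ell(X-m^1)]=0$, the event $\{\ell(X-m^1)\ge 0\}$ has positive probability, and on it the hypothesis that $\ell(x+\cdot)$ be strictly convex along zero-sum directions for every $x$ with $\ell(x)\ge 0$ contradicts affineness along $u$ unless $u=0$; hence $m^1=m^2$. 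I expect the main obstacle to be the existence part, and within it the recession identity $\phi^\infty(d)=\ell^\infty(-d)$ together with the lemma that a permutation-invariant convex cone in $Z$ is a subspace: this is precisely where invariance (rather than the stronger absence of zero-sum recession) is decisive, and it is what converts the potential non-attainment illustrated by Example~\ref{ex:001} into guaranteed existence.
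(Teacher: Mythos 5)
Your proposal is correct and follows essentially the same route as the paper: minimize $f(m)=\sum m_k+\delta(m\mid A(X))$, compute its recession cone as the zero-sum recession directions of $\ell$, invoke \citep[Theorem 27.1]{rockafellar1970} for existence and boundedness, Slater plus the KKT conditions for the characterization \eqref{eq:lagrangecond}, and the equality case of convexity along the segment joining two minimizers for uniqueness. The one point where you go beyond the paper is the lemma that an $S_d$-invariant convex cone contained in $Z$ is a linear subspace, which makes rigorous the paper's bare assertion that permutation invariance gives $0^+\ell=-0^+\ell$ (a claim that, read literally, can fail — e.g.\ for $\ell(x)=e^{\sum x_k}-1$ — but holds after intersecting with $Z$, which is exactly what your averaging-over-permutations argument establishes and all that the existence proof needs).
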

\begin{proof}
    Let $m$ in $A(X)$, according to Theorem \ref{thm:reccone}, it holds
    \begin{multline*}
        0^+A(X)=\left\{ u \in \mathbb{R}^d \colon E\left[ \ell\left( X-m-ru \right) \right]\leq 0, \text{ for all }r>0 \right\}\\
        = \left\{ u \in \mathbb{R}^d\colon \sup_{r>0}E\left[ \frac{\ell(X-m-ru)-\ell(0)}{r} \right]\leq 0\right\}\\
        = \left\{ u \in \mathbb{R}^d\colon E\left[ \sup_{r>0}\frac{\ell(X-m-ru)-\ell(0)}{r} \right]\leq 0\right\}=-0^+\ell 
    \end{multline*}
    Further, we define $f(m)=\sum m_k+\delta(m|A(X))$.
    It follows that $f$ is increasing, convex, lower semi-continuous, proper and such that $R(X)=\inf f$.
    Since $\ell(x)\geq \sum x_k-c$ and $R(X)>-\infty$, for $b \in A(X)$ it holds
    \begin{equation*}
        -\infty<R(X)\leq \sum b_k+r\sum u_k \leq \gamma <\infty\quad \text{ and }\quad b +r u\in A(X)
    \end{equation*}
    showing that $0^+f = Z\cap 0^+A(X)=-Z\cap 0^+\ell$.
    By \citep[Theorem 27.1 (b)]{rockafellar1970}, the existence of a risk allocation follows from $f$ being constant along its directions of recession $0^+f$, which according to Theorem \ref{thm:reccone}, is equivalent to $u \in 0^+f$ implies $(-u) \in 0^+f$.
    However, since $\ell$ is permutation invariant it follows that $0^+\ell=-0^+\ell$ and therefore $u\in 0^+f$ implies that $-u \in 0^+f$.
    Thus the existence of a risk allocation.\footnote{Note that this computation shows that the condition $Z\cap 0^+\ell=-Z\cap 0^+\ell$ is sufficient  to get the existence of a risk allocation.}
    In particular, if $0^+\ell=0$, then by \citep[Theorem 27.1, (d)]{rockafellar1970}, the set of risk allocations is non-empty and bounded.
    Furthermore, since $E[\ell(X-m)]<0$ for some $m$ large enough, the Slater condition for the convex optimization problem $R(X)=\inf_m f(m)$ is fulfilled.
    Hence, according to \citep[Theorems 28.1, 28.2 and 28.3]{rockafellar1970}, optimal solutions $m^\ast$ are characterized by \eqref{eq:lagrangecond}.
    
    Finally, let $m\neq n$ be two risk allocations.
    It follows that $\alpha m+(1-\alpha)n$ is a risk allocation as well for every $\alpha \in [0,1]$.
    Furthermore, $(m-n)$ is a zero sum allocation.
    By convexity, it follows that $0=E[\ell(X-\alpha m -(1-\alpha)n)]\leq \alpha E[\ell(X-m)]+(1-\alpha)E[\ell(X-n)]=0$ for every $0\leq \alpha\leq 1$, which shows that $\alpha \ell(X-m)+(1-\alpha)\ell(X-n)=\ell(X-\alpha m-(1-\alpha)n)$ $P$-almost surely for every $0\leq \alpha\leq 1$.
    Since $\ell(x+\cdot)$ is strictly convex on $Z$ for every $x$ such that $\ell(x)\geq 0$, it follows that $P[\ell(X-\alpha m-(1-\alpha)n)<0]=1$ for every $0\leq \alpha\leq 1$, showing in particular that $E[\ell(X-m)]<0$, a contradiction.
\end{proof}
\begin{corollary}\label{cor:transI}
    Let $\ell$ be a permutation invariant loss function, such that $\ell(x+\cdot)$ is strictly convex along zero-sum allocations for every $x$ with $\ell(x)\geq 0$.
    It holds
    \begin{equation*}
        RA(X+r)=RA(X)+r,\quad \text{for every }X \in M^\theta\text{ and }r \in \mathbb{R}^d.
    \end{equation*}
    If $\ell$ is additionally positive homogeneous, it holds
    \begin{equation*}
        RA\left( \lambda X \right)=\lambda RA(X), \quad \text{for every }X\in M^{\theta}\text{ and }\lambda>0
    \end{equation*}
\end{corollary}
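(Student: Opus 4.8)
The plan is to reduce both identities to the already-established behaviour of the acceptance set $A(\cdot)$ and of the risk measure $R(\cdot)$ under translation and scaling, and then to invoke the uniqueness of the risk allocation guaranteed by Theorem \ref{thm:monalloc}. Since the hypotheses on $\ell$ (permutation invariance and strict convexity along zero-sum allocations on the region $\{\ell \geq 0\}$) are properties of the loss function alone and do not involve the argument, $RA(Y)$ is well defined for every $Y \in M^\theta$; in particular it is well defined at $Y = X + r$ and at $Y = \lambda X$, both of which lie in $M^\theta$ because the Orlicz heart is a vector space containing the constants.

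For the translation identity, first I would set $m^\ast := RA(X)$, so that $m^\ast \in A(X)$ and $\sum m^\ast_k = R(X)$. By Proposition \ref{prop:01}\ref{cond04} we have $A(X+r) = A(X) + r$, hence $m^\ast + r \in A(X+r)$. Summing coordinates and using the translation invariance of $R$ from Theorem \ref{thm:rep} gives $\sum (m^\ast_k + r_k) = R(X) + \sum r_k = R(X+r)$, so $m^\ast + r$ is a risk allocation for $X+r$. Uniqueness then forces $RA(X+r) = m^\ast + r = RA(X) + r$.

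For the positive homogeneity identity, assuming $\ell$ is positive homogeneous, the argument is the exact analogue with scaling in place of translation: Proposition \ref{prop:01}\ref{cond06} yields $A(\lambda X) = \lambda A(X)$, so $\lambda m^\ast \in A(\lambda X)$, while Theorem \ref{thm:rep} yields $R(\lambda X) = \lambda R(X)$, whence $\sum \lambda m^\ast_k = \lambda R(X) = R(\lambda X)$. Thus $\lambda m^\ast$ is a risk allocation for $\lambda X$, and uniqueness gives $RA(\lambda X) = \lambda RA(X)$.

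There is no genuinely hard step here: the entire content sits in the structural equivariances $A(X+r)=A(X)+r$ and $A(\lambda X)=\lambda A(X)$, the matching translation invariance and positive homogeneity of $R$ (all proved earlier), together with uniqueness. The only point requiring a moment's care is to confirm that the uniqueness hypothesis of Theorem \ref{thm:monalloc} applies equally to $X+r$ and to $\lambda X$ --- which it does, precisely because that hypothesis constrains $\ell$ and not the loss vector --- so that the candidate allocations $m^\ast + r$ and $\lambda m^\ast$ may legitimately be identified with $RA(X+r)$ and $RA(\lambda X)$, respectively.
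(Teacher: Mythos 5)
Your proof is correct. It differs from the paper's argument in the translation step: the paper takes $m=RA(X+r)$ with its Lagrange multiplier $\lambda$, observes that the pair $(m-r,\lambda)$ satisfies the first order conditions \eqref{eq:lagrangecond} for $X$, and concludes $RA(X)=m-r$ from the uniqueness of solutions to those conditions. You instead bypass the first order conditions entirely and argue at the level of the defining optimization problem: the candidate $RA(X)+r$ lies in $A(X+r)$ by Proposition \ref{prop:01} \ref{cond04}, its coordinate sum equals $R(X+r)$ by the translation invariance of $R$ from Theorem \ref{thm:rep}, so it is a risk allocation for $X+r$ and uniqueness identifies it with $RA(X+r)$. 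Your route is slightly more economical in its prerequisites --- it needs no reference to the Lagrange multiplier or to the differential characterization, only to the equivariance of $A$ and $R$ and to uniqueness --- and it has the further merit of treating the two assertions symmetrically, since the homogeneity step is the exact same argument with \ref{cond06} and positive homogeneity of $R$ in place of \ref{cond04} and translation invariance. For the second assertion the paper is in fact terse (it only cites $A(\lambda X)=\lambda A(X)$), and your write-up supplies precisely the details --- $R(\lambda X)=\lambda R(X)$ plus uniqueness --- that make that citation into a complete proof. Your closing remark that the uniqueness hypothesis is a property of $\ell$ alone, hence applies to $X+r$ and $\lambda X$ as well, is the right point to flag and is correctly resolved.
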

\begin{proof}
    From Theorem \ref{thm:monalloc}, the assumptions on $\ell$ ensure the existence and uniqueness of a risk allocation uniquely characterized, together with the Lagrange multiplier, by the first order conditions.
    Let $m=RA(X+r)$, for which there exists a unique $\lambda$ such that $\lambda E\left[ \nabla\ell\left( X+r-m \right) \right]=1$ and $E[ \ell( X+r-m ) ]=c$.
    Hence, $n=m-r$ and $\lambda$ satisfy the first order conditions $\lambda E[ \nabla\ell( X-n ) ]=1$ and $E[ \ell( X-n )]=c$, which by uniqueness shows that $n=RA(X)=m-r=RA(X+r)-r$.
    As for the second assertion, it follows from $A(\lambda X)=\lambda A(X)$ for every $\lambda>0$ according to Proposition \ref{prop:01}.

\end{proof}
\begin{remark}
    In general, the positivity of the risk allocation is not required.
    However, if positivity or any other convex constraint is imposed, for instance by regulators, it can easily be embedded in our setup.
    In case of positivity, this would modify the definition of $R(X)$ into
    \begin{equation*}
        R(X)=\inf\left\{ \sum m_k\colon E\left[ \ell(X-m) \right] \leq 0 \text{ and }m_k\geq 0\text{ for every }k\right\},
    \end{equation*}
    with accordingly modified first order conditions.
\end{remark}
As already mentioned, the following example illustrates the importance of the uniqueness.
\begin{example}\label{exep:01} 
    Any loss function of class \ref{C1}, that is, $\ell(x)=h(\sum x_k)$, is permutation invariant.
    Thus, a risk allocation $m^\ast \in A(X)$ exists by means of Theorem \ref{thm:monalloc}.
    However, for any zero-sum allocation $u$, we have $R(X)=\sum m_k^\ast+u_k=\sum m_k^\ast$ and $E[h(\sum X_k-(m^\ast_k+u_k))]=E[h(\sum X_k-m^\ast_k)]\leq c$, so that $m^\ast+u$ is another risk allocation.

    In terms of regulatory costs, this is a problematic situation.
    Indeed, consider two banks and require from them $110$ M \euro\, and $500$ M \euro, respectively, as capital allocation.
    In such a case, one could equally well require $610$ M \euro\, from the first bank and nothing from the second.
    Such arbitrariness is unlikely to be accepted in that case.
\end{example}
Example \ref{exep:01} shows that loss functions of the class \ref{C1} lack the uniqueness of a risk allocation.
By contrast, for loss functions of class \ref{C2}, that is, $\ell(x)=\sum h(x_k)$, the following proposition shows that, while there exists a unique risk allocation under very mild conditions, the risk allocation only depends on the marginal distributions of the loss vector $X=(X_1,\ldots,X_d)$.
In other words, the risk measure and the risk allocation do not reflect the dependence structure of the system.
\begin{proposition}\label{prop:marginals}
    Let $\ell(x):=\sum h_k(x_k)$ for univariate loss functions $h_k:\mathbb{R}\to (-\infty,\infty]$ strictly convex on $\mathbb{R}_+$, $k=1,\ldots,d$.
    For every $X \in M^\theta$, there exists a unique optimal risk allocation $RA(X)$ and we have $RA(X)=RA(Y)$, for every $Y \in M^\theta$ such that $Y_k$ has the same distribution as $X_k ,$ $k=1,\ldots,d$.
\end{proposition}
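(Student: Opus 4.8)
The plan is to exploit that, for a class \ref{C2} loss function, the map $m\mapsto E[\ell(X-m)]$ splits across coordinates, so that everything is governed by the $d$ univariate functions $g_k(t):=E[h_k(X_k-t)]$, each of which sees only the law of $X_k$. Since $\ell(x)=\sum_k h_k(x_k)$, we have $E[\ell(X-m)]=\sum_k g_k(m_k)$, whence
\[
R(X)=\inf\Big\{\textstyle\sum_k m_k\colon \sum_k g_k(m_k)\le 0\Big\},
\]
and each $g_k$ is convex, decreasing and real valued (by \ref{a3} and $X\in M^\theta$, which forces $X_k\in L^1$ and $\ell(X-m)\in L^1$). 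The crucial remark is that $g_k$ depends on $X$ only through the distribution of $X_k$; hence the whole program above, together with its set of minimizers, is a function of the marginal laws alone. Therefore, once uniqueness of the minimizer is established, $RA(X)=RA(Y)$ whenever each $Y_k$ and $X_k$ share a law is immediate, because the two optimization problems literally coincide. I would dispatch this soft part first; the substance lies in existence and uniqueness.

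For existence I cannot invoke Theorem \ref{thm:monalloc} directly, since distinct $h_k$ make $\ell$ non permutation invariant; but its recession-cone argument does not really need permutation invariance. I would recompute $0^+\ell(u)=\sum_k 0^+h_k(u_k)$ and use that \ref{a3} forces each right-hand slope to be $\ge 1$ while monotonicity \ref{a1} keeps each left-hand slope in $[0,1]$, so that $0^+h_k(u_k)\ge u_k$ coordinatewise and hence $0^+\ell(u)\ge\sum_k u_k$. Combined with $0^+A(X)=-0^+\ell$, this shows that $f(m)=\sum_k m_k+\delta(\cdot\mid A(X))$ has no direction of recession along which it strictly decreases, so attainment of the infimum follows from \citep[Theorem 27.1]{rockafellar1970} exactly as in the proof of Theorem \ref{thm:monalloc}; the borderline case in which all the relevant boundary slopes equal $1$ is the only one needing the extra nondegeneracy that strict convexity anyway supplies.

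The heart of the matter is uniqueness. Suppose $m\neq n$ are both risk allocations. As in Theorem \ref{thm:monalloc}, convexity forces the whole segment to be optimal and
\[
\alpha\,\ell(X-m)+(1-\alpha)\,\ell(X-n)=\ell\big(X-\alpha m-(1-\alpha)n\big)\quad P\text{-a.s.},
\]
which by separability holds coordinatewise for every $\alpha\in[0,1]$. For any index $k$ with $m_k\neq n_k$ this means $h_k$ is affine on the random segment joining $X_k-m_k$ and $X_k-n_k$; since $h_k$ is strictly convex on $\mathbb{R}_+$, that segment must lie in $(-\infty,0]$ almost surely, i.e. $X_k\le m_k\wedge n_k$ a.s. To contradict this from optimality I would pass to the budget-split reformulation: writing $\rho_k:=g_k^{-1}$ (a convex, decreasing, law-of-$X_k$-only function) gives $R(X)=\inf\{\sum_k\rho_k(a_k)\colon\sum_k a_k=0\}$, with the first-order conditions $\rho_k'(a_k^\ast)=\mu^\ast$ common across $k$; equivalently, the shared multiplier $\lambda^\ast$ in the analogue of \eqref{eq:lagrangecond} satisfies $E[h_k'(X_k-m_k^\ast)]=1/\lambda^\ast$ for all $k$. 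The point is that $\rho_k$ is strictly convex exactly where $X_k-m_k$ charges $\mathbb{R}_+$ with positive probability, and along a zero-sum direction the objective can stay flat only on coordinates that are simultaneously in their affine regime; the active acceptability constraint together with $\lambda^\ast>0$ is what must be invoked to exclude this configuration, forcing the differing coordinates to sample the strictly convex region and hence strict convexity of the restricted objective.

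I expect this last step to be the main obstacle, and for a sharp reason: strict convexity is assumed only on $\mathbb{R}_+$, so I may not simply assert that $g_k$, or that $\ell$ along zero-sum directions, is strictly convex everywhere — the hypothesis of Theorem \ref{thm:monalloc} is not automatic here. The entire difficulty is to rule out minimizers whose coordinates $X_k-m_k^\ast$ remain in the region $(-\infty,0]$ where $h_k$ is permitted to be affine, and this is precisely where the interplay between the binding constraint $E[\ell(X-m^\ast)]=0$ and the common Lagrange multiplier has to do the work; for $d\ge 3$ one should be careful that some nondegeneracy guaranteeing the optimal arguments reach $\mathbb{R}_+$ may genuinely be needed, and I would check that the marginal-invariance statement is robust to it.
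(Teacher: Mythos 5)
Your overall architecture matches the paper's: separability makes the optimization problem (equivalently, the first-order conditions \eqref{eq:lagrangecond}) a functional of the marginal laws alone, so marginal-invariance follows once existence and uniqueness are settled; existence comes from a recession-cone computation; and uniqueness is attacked by showing that two optima force $\ell$ to be affine along the optimal segment $P$-a.s. Your reduction of uniqueness to ``each differing coordinate satisfies $X_k\le m_k\wedge n_k$ a.s.'' is exactly the right first move. The problem is that you stop there: you announce that ``the active acceptability constraint together with $\lambda^\ast>0$ is what must be invoked to exclude this configuration'' but never perform the exclusion, and this is not a routine verification --- the configuration you want to rule out can actually occur. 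Take $d=2$, $h_1=h_2=h$ with $h(t)=t$ for $t\le -1$, $h(t)=2t+1$ for $-1\le t\le 0$ and $h(t)=2t+1+t^2$ for $t\ge 0$ (a valid univariate loss function, strictly convex on $\mathbb{R}_+$), and $X_1,X_2$ i.i.d.\ uniform on $[-0.5,-0.4]$. Every $m=(a,\,0.1-a)$ with $-0.4\le a\le 0.5$ keeps both $X_k-m_k$ in $[-1,0]$, makes the constraint active, and satisfies the KKT conditions with $\lambda=1/2$, hence is a global minimizer: the risk allocation is not unique. So the step you yourself flag as ``the main obstacle'' cannot be closed from the stated hypotheses; the common Lagrange multiplier pins nothing down because $h_k'$ need not equal its asymptotic left slope on $\mathbb{R}_-$. (The paper closes this step by feeding the observation that $\ell$ is strictly convex along segments not contained in $\mathbb{R}^d_-$ into the uniqueness criterion of Theorem \ref{thm:monalloc}; that route silently skips exactly the case where the segment only moves coordinates sitting in an affine piece of their $h_k$, which is the case realized above.)

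A second, smaller gap sits in your existence argument: from \ref{a1} and \ref{a3} you only obtain $\ell 0^+(u)\ge\sum u_k=0$ on zero-sum directions, and you dismiss the borderline case by asserting that strict convexity supplies strictness. It does not: strict convexity of $h_k$ on $\mathbb{R}_+$ is compatible with $\lim_{r\to\infty}h_k(r)/r=1$ (e.g.\ $h_k(t)=t+e^{-t}-2$ for $t\ge 0$, constant $-1$ for $t\le 0$), so a nonzero zero-sum $u$ with $\ell 0^+(u)=0$ can exist, the symmetry condition $Z\cap 0^+\ell=-Z\cap 0^+\ell$ can fail, and the infimum need then not be attained. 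The paper makes the same leap (``$\lim_{r\to\infty}h_1(ru_1)/r=\infty$ since $h_1$ is strictly convex''). Your closing caveat that some nondegeneracy ``may genuinely be needed'' is well-founded --- both steps require hypotheses beyond strict convexity on $\mathbb{R}_+$ --- but identifying where an assumption is missing is not the same as supplying the proof.
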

\begin{proof}
    Let $x,y$ be such that $\alpha x +(1-\alpha)y \not \in \mathbb{R}^d_{-}$ for every $\alpha \in (0,1)$.
    It follows that $\ell(\alpha x+(1-\alpha)y) =\sum h_k(\alpha x_k+(1-\alpha)y_k)<\sum \alpha h_k(x_k)+(1-\alpha)h_k(y_k)=\alpha \ell(x)+(1-\alpha)\ell(y)$.
    The loss function $\ell$ is furthermore unbiased.
    Indeed, for every zero-sum allocation $u$, assuming without loss of generality $u_1>0$, it follows that
    \begin{align*}
        \ell0^+(u)&\geq \lim_{r\to \infty} h_1(ru_1)/r+\sum_{k\geq 2}h_k(ru_k)/r\geq \lim_{r\to \infty}h_1(ru_1)/r+\sum_{k\geq 2} u_k=\infty
    \end{align*}
    since $h_1$ is strictly convex and $h_1(t)\geq t$.
    Hence, $\ell$ has no zero-sum direction of recession other than $0$.
    The strict convexity of $h_k$ yields, according to Theorem 
\ref{thm:monalloc}, the existence of a unique risk allocation for every $X \in 
M^{\theta}$.
    The first order conditions \eqref{eq:lagrangecond} are written as
    \begin{equation*}
        1\in\lambda E\left[ \partial h_k(X_k-m_k) \right], \quad k=1,\ldots,d,\quad\text{and}\quad \sum E\left[ h_k\left( X_k-m_k \right) \right]=c ,
    \end{equation*}
    which only depend on the marginal distributions of $X$.
\end{proof}
Following \citet{rueschendorf2004} we can characterise in terms of supermodular, directionally convex and upper orthant stochastic ordering the risk of positive dependence in terms of $\ell$.
For a function $f:\mathbb{R}^d\to \mathbb{R}$ we define
\begin{equation*}
    \Delta_{k,y} f(x)=f(x_0,\ldots, x_k+y_k, \ldots, x_d)-f(x), \quad x,y \in \mathbb{R}^d, k\in \{1,\ldots,d\}
\end{equation*}
We say that a continuous function $f:\mathbb{R}^d\to \mathbb{R}$ is 
\begin{itemize}[fullwidth]
    \item super-modular, if $\Delta_{k,y}\Delta_{l,y}f(x)\geq 0$ for every $1\leq k< l\leq d$;
    \item directionally convex, if $\Delta_{k,y}\Delta_{l,y}f(x)\geq 0$ for every $1\leq k\leq l\leq d$;
    \item $\Delta$-monotone, if $\Delta_{i_1,y}\ldots \Delta_{i_n,y} f(x)\geq 0$ for every $\{i_1,\ldots,i_n\}\subseteq \{1,\ldots,d\}$;
\end{itemize}
for every $x$ and $y$ in $\mathbb{R}^d$ with $y\geqslant 0$.
We denote by $\succcurlyeq^{sm}$, $\succcurlyeq^{dc}$ and $\succcurlyeq^{uo}$ the integral orders given by the respective class of functions.
We refer to \citep{rueschendorf2004} for a discussion of these orders in terms of dependence risk.
Note that $X\geqslant^{uo} Y$ if and only if $P[X\geqslant x]\geq P[Y\geqslant x]$ for every $x \in \mathbb{R}^d$.
\begin{proposition}
    The shortfall risk measure $R$ is monotone with respect with $\succcurlyeq^{sm}$, $\succcurlyeq^{dc}$ or $\succcurlyeq^{uo}$ whenever $\ell$ is super-modular, directionally convex, or $\Delta$-monotone, respectively.
\end{proposition}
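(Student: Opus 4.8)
The plan is to reduce all three assertions to the single statement that, under the corresponding assumption on $\ell$, the acceptance-set map $X\mapsto A(X)$ is antitone for the relevant integral order, and then to read off the monotonicity of $R$ from its definition as the cheapest acceptable allocation.

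First I would record the elementary but decisive fact that each of the three classes of test functions is invariant under translation. Fixing $m\in\mathbb{R}^d$ and setting $g(x)=\ell(x-m)$, the shift commutes with every difference operator $\Delta_{k,y}$, so that
\[
    \Delta_{i_1,y}\cdots\Delta_{i_n,y}\,g(x)=\Delta_{i_1,y}\cdots\Delta_{i_n,y}\,\ell(x-m)
\]
for all indices and all $x,y$. Hence, if $\ell$ is super-modular (respectively directionally convex, $\Delta$-monotone), then so is $g=\ell(\cdot-m)$ for every $m$.

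Next, suppose $X\succcurlyeq^{sm}Y$ and let $m\in\mathbb{R}^d$ be arbitrary. Applying the definition of the integral order $\succcurlyeq^{sm}$ to the super-modular test function $\ell(\cdot-m)$ gives $E[\ell(X-m)]\geq E[\ell(Y-m)]$; since $X,Y\in M^\theta$, both expectations are finite (as in the proof of Proposition~\ref{prop:01}), so the comparison is well posed. Consequently $m\in A(X)$, i.e. $E[\ell(X-m)]\leq0$, forces $E[\ell(Y-m)]\leq E[\ell(X-m)]\leq0$, that is $m\in A(Y)$. Thus $A(X)\subseteq A(Y)$, and taking the infimum of $\sum m_k$ over the smaller set yields $R(X)\geq R(Y)$. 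The remaining two cases are identical, replacing super-modularity by directional convexity and $\succcurlyeq^{sm}$ by $\succcurlyeq^{dc}$, respectively by $\Delta$-monotonicity and $\succcurlyeq^{uo}$, and invoking the corresponding translation stability established above.

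The only point requiring care --- and the natural place for the argument to break --- is the verification that $\ell(\cdot-m)$ is a legitimate test function for each order, namely a (finite-valued, continuous) member of the defining class whose expectation against $X$ and $Y$ exists; once this admissibility and the $M^\theta$-integrability are granted, the monotonicity of $R$ follows purely from the inclusion of acceptance sets and uses neither the dual representation of Theorem~\ref{thm:rep} nor the existence or uniqueness results of Theorem~\ref{thm:monalloc}.
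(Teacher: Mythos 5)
Your argument is exactly the paper's: translation stability of each test-function class gives $E[\ell(X-m)]\geq E[\ell(Y-m)]$ for all $m$, hence an inclusion of acceptance sets and monotonicity of $R$ by taking infima. You even get the inclusion in the correct direction, $A(X)\subseteq A(Y)$, whereas the paper's one-line proof writes $A(Y)\subseteq A(X)$, which is evidently a typo given the displayed inequality.
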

\begin{proof}
    The assertion follows immediately from the fact that if $\ell$ is one of super-modular, directionally convex, or $\Delta$-monotone, so is $\ell(\cdot -m)$ for every $m$.
    Therefore if $X\succcurlyeq^x Y$, it follows that $E[\ell(X-m)]\geq E[\ell(Y-m)]$ showing that $A(Y)\subseteq A(X)$.
\end{proof}
\begin{remark}
    Any loss function of the form \ref{C1}, \ref{C2} and \ref{C3} are directionally convex and therefore super-modular.
    They are $\Delta$-monotone if $d=2$.
    As for the specific loss functions used in this paper in several places for illustration
    \begin{align*}
        \sum \frac{(x_k^+)^2}{2}+\alpha \sum_{k<j} x_k^+x_j^+\\
        \sum x_k^++\alpha \sum_{k<j} (x_j+x_j)^+
    \end{align*}
    are both directionally convex and $\Delta$-monotone.
    However, if $\alpha=0$ they are degenerated in terms of these monotonicity since $\Delta_{k,y}\Delta_{j,y}\ell(x)=0$ for every $k\neq j$.
    As soon as $\alpha>0$, these loss functions are strictly monotone on $\mathbb{R}^d_+$.
%
\end{remark}
\begin{remark}
    A loss function can be chosen in view of an a-priori list of wished properties in terms of risk measurement and allocation as the Proposition above mentioned.
    However, loss function may also arise in systemic risk problems as an intrinsic property of the system as presented by \citet{eisenberg2001} or recently by \citet{weber2016}.
\end{remark}
\begin{example}\label{subsec:exep01}
The following simple example shows the impact of the dependence in a simple case for a loss function
\begin{equation}\label{eq:loss03}
    \ell(x_1,x_2) =\frac{1}{1+\alpha}\left[\frac{1}{2}e^{2x_1}+\frac{1}{2}e^{2x_2}+\alpha e^{x_1}e^{x_2}\right]-1.
\end{equation}
that is $\Delta$-monotone and bivariate normal vector $X=(X_1,X_2)\sim \mathcal{N}(0,\Sigma)$ with $\Sigma =
\begin{bmatrix}
    \sigma_1^2            & \rho \sigma_1\sigma_2  \\
    \rho \sigma_1\sigma_2 & \sigma^2_2             \\
\end{bmatrix}$. 
Solving the first order conditions yield
\begin{align*}
    RA_i(X) & =\sigma^2_i+\frac{1}{2}SRC(\rho,\sigma_1,\sigma_2,\alpha) &  R(X)    & =\sigma_1^2+\sigma_2^2+SRC(\rho,\sigma_1,\sigma_2,\alpha),
\end{align*}
showing that the risk allocations are disentangled into the respective individual contributions $\sigma_i^2$, $i=1,2$, and a \emph{systemic risk contribution}
\begin{equation}\label{eq:systemic_risk_contribution}
   SRC=\ln\left(1+ \alpha e^{\rho \sigma_1\sigma_2-\frac{1}{2}(\sigma_1^2+\sigma_2^2)} \right),
\end{equation}
which depends on the correlation parameter $\rho$ and on the systemic weight $\alpha$ of the loss function.
Figure \ref{fig:01} shows the value of this systemic risk contribution as a function of $\rho$ and $\sigma_1$.
\begin{figure}[ht]
  \centering
      \includegraphics[width=0.8\textwidth]{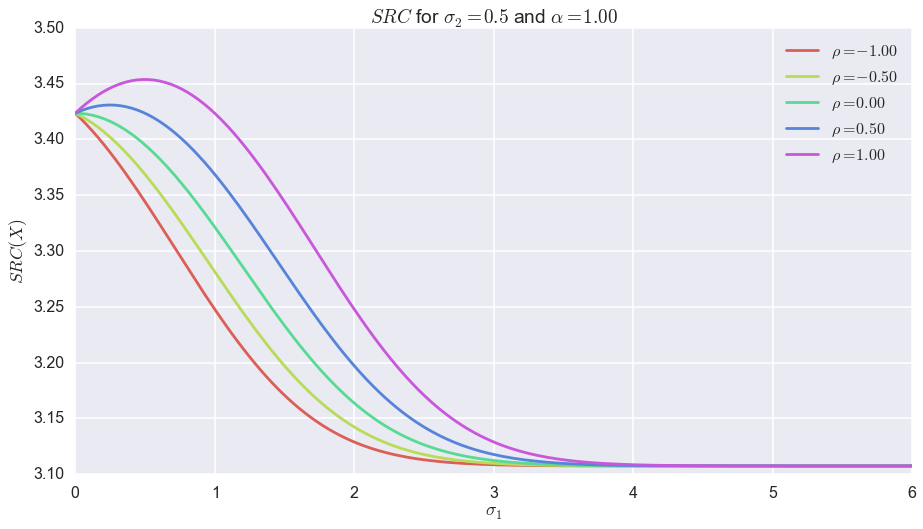}
      \caption{SRC \eqref{eq:systemic_risk_contribution} as a function of $\sigma_1$ for different values of the correlation $\rho$ in the case where $\alpha=1$.}
      \label{fig:01}
\end{figure}
Computing the partial derivatives with respect to $\sigma_i$ and $\rho$ yields
\begin{align*}
    \frac{\partial SRC}{\partial \sigma_1} & = \frac{\alpha\left( \rho\sigma_2-\sigma_1 \right)}{2}\frac{e^{\rho \sigma_1\sigma_2-\frac{1}{2}(\sigma_1^2+\sigma_2^2)}}{1+ \alpha e^{\rho \sigma_1\sigma_2-\frac{1}{2}(\sigma_1^2+\sigma_2^2)}}, &
    \frac{\partial SRC}{\partial \rho} & = \frac{\alpha \sigma_1\sigma_2}{2}\frac{ e^{\rho \sigma_1\sigma_2-\frac{1}{2}(\sigma_1^2+\sigma_2^2)}}{1+ \alpha e^{\rho \sigma_1\sigma_2-\frac{1}{2}(\sigma_1^2+\sigma_2^2)}}.
\end{align*}
showing that the systemic risk contribution is 
\begin{itemize}[fullwidth]
    \item increasing with respect to the correlation $\rho$;
    \item decreasing with respect to $\sigma_1$ if the correlation is negative;
    \item increasing up to $\rho\sigma_2$ and then decreasing with respect to $\sigma_1$ if the correlation is positive as the individual risk of $X_1$ dominates the risk of the system.
\end{itemize}

\end{example}


\section{Systemic Sensitivity of Shortfall Risk and its Allocation}\label{sec03:sensis}
The previous results emphasize the importance of using a loss function that adequately captures the systemic risk inherent to the system.
This motivates the study of the sensitivity of shortfall risk and its allocation so as to identify the systemic features of a loss function.
\begin{definition}
    The \emph{marginal risk contribution} of $Y \in M^{\theta}$ to $X \in M^{\theta}$ is defined as the sensitivity of the risk of $X$ with respect to the impact of $Y$, that is
    \begin{equation*}
        R(X;Y):=\limsup_{t \searrow 0}\frac{R(X+t Y)-R(X)}{t}.
    \end{equation*}
    In the case where $R(X+t Y)$ admits a unique risk allocation $RA(X+t Y)$ for every $t$, the \emph{risk allocation marginals} of the risk of $X$ with respect 
to the impact of $Y$ are given by
    \begin{equation*}
        RA_k(X;Y)=\limsup_{t \searrow 0}\frac{RA_k(X+t Y)-RA_k(X)}{t}, \quad k=1,\ldots , d.
    \end{equation*}
\end{definition}
Theorem \ref{thm:rep} and its proof show that the determination of the risk measure $R(X)$ reduces to the saddle point problem
\begin{equation*}
    R(X)=\min_{m}\max_{\lambda>0}L(m,\lambda,X)=\max_{\lambda>0}\min_m L(m,\lambda,X).
\end{equation*}
Using \citep{rockafellar1970}, the ``argminmax''  set of saddle points $(m^\ast,\lambda^\ast)$ is a product set that we denote by $B(X)\times C(X)$.

\begin{theorem}\label{thm:RC}
    Assuming that $\ell$ is permutation invariant, then 
    \begin{equation*}
        R(X;Y)=\min_{m\in B(X)}\max_{\lambda \in C(X)}\lambda E\left[ \nabla \ell \left( X-m \right)\cdot Y \right].
    \end{equation*}
    Supposing further that $\ell$ is twice differentiable and that $(m,\lambda) \in B(X)\times C(X)$ is such that 
    \begin{equation*}
        M=
        \begin{bmatrix}
            \lambda E\left[ \nabla^2 \ell (X-m) \right] & - 1/\lambda\\
            1 & 0
        \end{bmatrix}
    \end{equation*}
    is non-singular, then
    \begin{itemize}
        \item there exists $t_0>0$ such that $B(X+tY)\times C(X+tY)$ is a singleton, for every $0\leq t\leq t_0$;
        \item the corresponding unique saddle point $(m_t,\lambda_t)=(RA(X+tY),\lambda_t)$ is differentiable as a function of $t$ and we have
            \begin{equation*}
                \begin{bmatrix}
                    RA(X;Y)\\
                    \lambda(X;Y)
                \end{bmatrix}
                =M^{-1}V,
            \end{equation*}
            where $\lambda(X;Y) =\limsup_{t\searrow 0 }(\lambda_t-\lambda_0)/t$ and 
            \begin{equation*}
                V=
                \begin{bmatrix}
                    \lambda  E\left[ \nabla^2\ell(X-m)  Y \right]\\
                    R(X;Y)
                \end{bmatrix}.
            \end{equation*}
    \end{itemize}
\end{theorem}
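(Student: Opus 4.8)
The plan is to obtain both assertions by differentiating the saddle value $R(X+tY)=\min_{m}\max_{\lambda>0}L(m,\lambda,X+tY)$ in $t$ at $t=0$, where $L(m,\lambda,X)=\sum m_k+\lambda E[\ell(X-m)]$ as in the proof of Theorem~\ref{thm:rep}. For the first formula I would apply an envelope (Danskin-type) theorem for convex-concave saddle functions. The map $t\mapsto L(m,\lambda,X+tY)$ is differentiable with $\partial_t L=\lambda E[\nabla\ell(X+tY-m)\cdot Y]$, and since $L$ is convex in $m$ and linear, hence concave, in $\lambda$ with argminmax set $B(X)\times C(X)$, the directional derivative of the saddle value equals the partial $t$-derivative evaluated on the optimal set in the correct min/max order, giving $R(X;Y)=\min_{m\in B(X)}\max_{\lambda\in C(X)}\lambda E[\nabla\ell(X-m)\cdot Y]$. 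The one delicate point is the interchange of expectation and $t$-derivative; I would justify it from the convexity of $\ell$, which makes the difference quotients monotone, together with the Orlicz integrability $X,Y\in M^\theta$, which dominates them.

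For the differentiable case I would read the first-order conditions~\eqref{eq:lagrangecond} as a single vector equation $G(m,\lambda,t)=0$ in $\mathbb{R}^{d+1}$, whose first $d$ components are $1-\lambda E[\nabla\ell(X+tY-m)]$ and whose last component is the rescaled constraint $-\lambda E[\ell(X+tY-m)]$. Differentiating $G$ in $(m,\lambda)$ and simplifying with the relation $E[\nabla\ell(X-m)]=1/\lambda$ coming from the first-order conditions reproduces precisely the matrix $M$: the top-left block is $\lambda E[\nabla^2\ell(X-m)]$, the top-right column is $-1/\lambda$, the bottom-left row is $1$, and the bottom-right entry is $0$. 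Since $M$ is assumed non-singular, the implicit function theorem yields $t_0>0$ and a unique $C^1$ branch $t\mapsto(m_t,\lambda_t)$ solving $G=0$ near $(m,\lambda)$, with $M$ staying invertible for $0\le t\le t_0$.

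To promote this local uniqueness to the stated singleton property of $B(X+tY)\times C(X+tY)$, I would use that the saddle set is convex and consists entirely of solutions of $G=0$: a second saddle point would force the whole connecting segment to solve $G=0$, producing solutions arbitrarily close to $(m_t,\lambda_t)$ and contradicting the local uniqueness from the implicit function theorem. Hence the saddle point is unique and equals $(RA(X+tY),\lambda_t)$, and it is differentiable in $t$. Differentiating the identity $G(m_t,\lambda_t,t)=0$ gives $M\frac{d}{dt}(m_t,\lambda_t)=-\partial_t G$; computing $\partial_t G$ at $t=0$ produces $-\lambda E[\nabla^2\ell(X-m)Y]$ in the first block and $-\lambda E[\nabla\ell(X-m)\cdot Y]$ in the last, so that $\frac{d}{dt}(m_t,\lambda_t)=M^{-1}V$ with top block $\lambda E[\nabla^2\ell(X-m)Y]$. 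For the bottom entry of $V$ I would invoke the first part of the theorem: with the saddle point now unique the min-max collapses to the single value $\lambda E[\nabla\ell(X-m)\cdot Y]=R(X;Y)$, which is exactly the last component of $\partial_t G$ and identifies $V$ as stated.

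The main obstacle I anticipate is analytic rather than algebraic: rigorously justifying the differentiation under the expectation, both the first $t$-derivative for the envelope formula and the Hessian terms entering $\partial_t G$, and the application of the saddle-point envelope theorem. These are exactly the places where the Orlicz-heart integrability of $X$ and $Y$ and the convexity of $\ell$ are needed to dominate the difference quotients; once these interchanges are licensed, the remainder is implicit-function-theorem bookkeeping.
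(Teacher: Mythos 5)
Your proposal is correct and follows essentially the same route as the paper: the paper obtains the first formula by invoking Golshtein's theorem on the perturbation of saddle values (the convex--concave envelope theorem you sketch, which requires the non-emptiness and boundedness of $B(X)\times C(X)$ supplied by Theorem~\ref{thm:monalloc}) and the second part by invoking Fiacco's sensitivity theorem for the first-order conditions, whose content is exactly the implicit-function-theorem bookkeeping you carry out, with the same Jacobian $M$ and perturbation vector $V$. You simply open up the two cited black boxes, and the analytic caveats you flag (differentiation under the expectation) are precisely the hypotheses the paper asserts are ``fulfilled'' without further detail.
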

\begin{proof}
    Let $L(m,\lambda,t)=\sum m_k + \lambda E[\ell(X+tY-m)]$.
    Theorem \ref{thm:rep} yields
    \begin{equation*}
        R(X+tY)=\min_m \max_{\lambda}L(m,\lambda,t)=\max_{\lambda} \min_m L(m,\lambda,t)=L(m_t,\lambda_t,t),
    \end{equation*}
    for every selection $(m_t,\lambda_t)\in B(X+tY)\times C(t+tY)$.
    Regarding the first assertion of the theorem, since $\ell$ has no zero-sum direction of recession other than $0$, it follows from Theorem \ref{thm:monalloc} that $B(X)\times C(X)$ is non empty and bounded.
    Hence, the assumptions of Golshtein's Theorem on the perturbation of saddle values, see \citet[Theorem 11.52]{rockafellar2009}, are satisfied and the first 
assertion follows.
    As for the second assertion, the assumptions of \citet[Theorem 6, pp. 34--45]{fiacco1990} are fulfilled.
    The Jacobian of the vector
    \begin{equation*}
        \begin{bmatrix}
            \nabla_m L(m,\lambda,0) \\
            \lambda E\left[ \ell\left( X-m \right)\right]
        \end{bmatrix}
    \end{equation*}
    that is used to specify the first order conditions is given by the matrix $M$.
    Hence, the second assertion follows from \citep[Theorem 6, pp. 34--35]{fiacco1990}.
\end{proof}
Theorem \ref{thm:RC} allows to explicitly derive the impact of an independent exogenous shock as stated in the following proposition.
\begin{proposition}\label{prop:causalresponsability}
    Under the assumptions of Theorem \ref{thm:RC} ensuring the uniqueness of a saddle point, suppose that $Y$ is independent of $X$.
    Then
    \begin{equation*}
        RC(X;Y)=\sum E\left[ Y_k \right] \quad \text{and}\quad RA(X;Y)=E[Y].
    \end{equation*}
\end{proposition}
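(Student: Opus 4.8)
The plan is to read off both identities directly from Theorem \ref{thm:RC}, the whole point being that independence lets every expectation factor and that the resulting right-hand side vector $V$ then lies in a particularly convenient subspace. Throughout, let $(m,\lambda)$ denote the unique saddle point associated with $X$, which exists under the standing uniqueness assumption, and recall that $m$ is a \emph{deterministic} vector. Since $Y$ is independent of $X$, the random matrix $\nabla^2\ell(X-m)$ and the random vector $\nabla\ell(X-m)$ are $\sigma(X)$-measurable and therefore independent of $Y$; this is the only probabilistic input the argument needs. Below, $\mathbf 1$ denotes the vector of ones.

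For the first identity I start from the formula $R(X;Y)=\lambda E[\nabla\ell(X-m)\cdot Y]$ furnished by Theorem \ref{thm:RC} in the uniqueness regime. Writing the scalar product out coordinatewise and using independence of each $Y_k$ from $X$, I factor $E[\partial_k\ell(X-m)\,Y_k]=E[\partial_k\ell(X-m)]\,E[Y_k]$. The first order condition \eqref{eq:lagrangecond}, which under twice differentiability reads $\lambda E[\nabla\ell(X-m)]=\mathbf 1$ componentwise, gives $E[\partial_k\ell(X-m)]=1/\lambda$ for every $k$. Substituting yields $R(X;Y)=\lambda\sum_k (1/\lambda)\,E[Y_k]=\sum_k E[Y_k]$, which is the first claim.

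For the allocation marginals I use the linear system $\bigl[RA(X;Y);\lambda(X;Y)\bigr]^{\!\top}=M^{-1}V$ from Theorem \ref{thm:RC}. The same factorization applied to the matrix--vector product gives $E[\nabla^2\ell(X-m)\,Y]=E[\nabla^2\ell(X-m)]\,E[Y]$, so the top block of $V$ equals $\lambda E[\nabla^2\ell(X-m)]\,E[Y]$, while the bottom entry equals $R(X;Y)=\sum_k E[Y_k]=\mathbf 1^{\!\top}E[Y]$ by the first part. The decisive observation is then that the candidate pair $(E[Y],0)$ solves $M\,[\,\cdot\,]=V$: multiplying $M$ against $(E[Y],0)^{\!\top}$ reproduces exactly the top block $\lambda E[\nabla^2\ell(X-m)]\,E[Y]$ and the bottom entry $\mathbf 1^{\!\top}E[Y]$ of $V$, because the off-diagonal column $-\mathbf 1/\lambda$ is annihilated by the vanishing second component. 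Since $M$ is assumed non-singular, this solution is unique, whence $RA(X;Y)=E[Y]$ and, as a by-product, $\lambda(X;Y)=0$.

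The only genuine care required is bookkeeping: justifying the interchange of expectation and the factorization uses $Y\in M^\theta$ together with the integrability built into the differentiability hypotheses of Theorem \ref{thm:RC}, and one must read the symbols $-1/\lambda$ and $1$ in $M$ as the column $-\mathbf 1/\lambda$ and the row $\mathbf 1^{\!\top}$ respectively. Beyond recognizing the explicit solution $(E[Y],0)$, there is no real obstacle; the independence hypothesis does all the work by collapsing the cross terms.
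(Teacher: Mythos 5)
Your proof is correct and follows essentially the same route as the paper: both arguments factor the expectations using independence, invoke the first order condition $\lambda E[\nabla\ell(X-m)]=\mathbf 1$ for the first identity, and reduce the second identity to solving the linear system $M\,[\,\cdot\,]=V$ with $V=(\lambda A E[Y],\,C E[Y])$. The only difference is the last step: the paper explicitly inverts $M$ via the block-inversion formula (which tacitly requires $A=E[\nabla^2\ell(X-m)]$ to be invertible), whereas you simply verify that $(E[Y],0)$ solves the system and appeal to the non-singularity of $M$ --- a marginally cleaner finish that also delivers $\lambda(X;Y)=0$ as a by-product.
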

\begin{proof}
    Since $Y$ is independent of $X$, denoting by $m=RA(X;Y)$, it follows from the first order conditions that
   \begin{equation*}
       RC(X;Y)=\lambda E\left[\nabla \ell(X-m)\cdot Y  \right]=\lambda E\left[ \nabla \ell(X-m) \right]\cdot E[Y]=1\cdot E[Y]=\sum E[Y_k]
   \end{equation*}
   Furthermore, we have
   \begin{equation*}
       M=
       \begin{bmatrix}
           \lambda A & -B\\
           C & 0
       \end{bmatrix}
       \quad \text{and}\quad
       V=
       \begin{bmatrix}
           \lambda E\left[ \nabla^2 \ell(X-m) Y\right] \\
           R(X;Y)
       \end{bmatrix}
       =
       \begin{bmatrix}
           \lambda A E[Y]\\
           C E[Y]
       \end{bmatrix}
   \end{equation*}
   where $A=E[\nabla^2\ell(X-m)]$ $B=\begin{bmatrix}
           1/\lambda & \cdots & 1/\lambda
      \end{bmatrix}^\intercal$, and $C=\begin{bmatrix}
          1 & \ldots & 1 
      \end{bmatrix}
   $.
   Using the classical formula of block matrix inversion, we obtain
   \begin{align*}
           RA(X;Y)&=
           \begin{bmatrix}
               \displaystyle  \frac{A^{-1}}{\lambda}-\frac{A^{-1}BCA^{-1}}{\lambda CA^{-1}B} &\displaystyle  \frac{A^{-1}B}{ CA^{-1}B}
           \end{bmatrix}
           \begin{bmatrix}
               \lambda A E[Y]\\
               C E[Y]
           \end{bmatrix}
           \\
           &= E\left[ Y \right]-\frac{A^{-1}BC E\left[ Y \right]}{CA^{-1}B}+\frac{A^{-1}BC E\left[ Y \right]}{ CA^{-1}B}=E\left[ Y \right].
   \end{align*}
\end{proof}
According to the discussion about causal responsibility in Section \ref{ss:sensi}, it follows that each member is marginally paying for the additional risk is takes provided this one is independent of the system.
In particular, if the risk factor $k$ is affected by a shock $Y_k$ independent of the system, it follows that $R(X;Y)=E[Y_k]=RA_k(X;Y)$, showing that the member $k$ pays for the full risks it takes.

\subsection{Impact of an Exogenous Shock}
The following Section illustrates the case when the exogenous shock may depend on $X$.
We consider a bivariate situation where $X=(X_1,X_3)$, and exogenous factor $Y=(Y_1,0)$ impacting only the first component.
We consider the loss function
\begin{equation*}
    \ell(x_1,x_2) =\frac{(x_1^+)^2+(x_2^+)^2}{2}+\alpha x_1^+x_2^+-1, \quad 0\leq \alpha \leq 1,
\end{equation*}
which gives rise to a unique risk allocation by virtue of Theorem \ref{thm:monalloc}.
Note that $\ell$ is $\Delta$-monotone, and strictly on $\mathbb{R}^2_+$ if $\alpha>0$.
For ease of notations, we assume that $X_1\sim X_2$, which, since $\ell$ is permutation invariant, implies that $m=RA_1(X)=RA_2(X)$.
Let $p=:P[X_1\geq m]=P[X_2\geq m]$ and $r=P[X_1\geq m;X_2\geq m]$.
According to Theorem \ref{thm:RC}, and the first order condition \eqref{eq:lagrangecond}, we have
\begin{equation*}
    R(X;Y)= \frac{E\left[ Y_1 (X_1-m_1)^+ \right]+\alpha pE\left[Y_1(X_2-m_2)^+|X_1\geq m_1\right]}{ E\left[ (X_1-m_1)^+ \right]+\alpha p E\left[(X_2-m_2)^+|X_2\geq m_2\right]}
\end{equation*}
As for the allocation of this marginal risk contribution, in the notation of Theorem \ref{thm:RC}, we have:
\begin{equation*}
    M=
    \begin{bmatrix}
        \lambda p          & \lambda \alpha r     & -1/\lambda\\
        \lambda \alpha r     & \lambda p          & -1/\lambda\\
        1                    & 1                    & 0
    \end{bmatrix}
    \quad \text{and}\quad
    V=
    \begin{bmatrix}
        \lambda pE\left[ Y_1 | X_1\geq m_1 \right]\\
        \lambda \alpha rE\left[ Y_1| X_1 \geq m_1; X_2 \geq m_2\right]\\
        R(X;Y)
    \end{bmatrix},
\end{equation*}
which by inverting $M$ yields
\begin{align*}
    RA_1(X;Y) & =\frac{R(X;Y)}{2} +\frac{1}{2}\frac{E\left[ Y_1 1_{\{X_1\geq m_1\}} \right]-\alpha E\left[ Y_11_{\{X_1\geq m_1; X_2\geq m_2\}} \right]}{p-\alpha r}\\
    RA_2(X;Y) & =\frac{R(X;Y)}{2} -\frac{1}{2}\frac{E\left[ Y_1 1_{\{X_1\geq m_1\}} \right]-\alpha E\left[ Y_11_{\{X_1\geq m_1; X_2\geq m_2\}} \right]}{p-\alpha r}
\end{align*}
Beyond the fact that according to Proposition \ref{prop:causalresponsability}, if $Y$ is independent of $X$ then $R(X;Y)=RA_1(X;Y)$ and $RA_2(X;Y)=0$, observe in general that:
\begin{itemize}[fullwidth]
    \item The two risk components marginally share first equally the additional cost of the exogenous impact in terms of $R(X;Y)/2$ each.
    \item The asymmetry of the shock that concerns only $X_1$ is reflected in the correction with respect to the second term which is added to the first one and subtracted to the second.
        Furthermore, $1_{\{X_1\geq m_1\}}\geq \alpha 1_{\{X_1\geq m_1;X_2\geq m_2\}}$ for every $0\leq \alpha\leq 1$.
        It implies that the additional risk taken by the first risk factor is always positively proportional to $Y_1$ while the second one is negatively proportional to $Y_1$.
    \item If $\alpha=0$, then the marginal change impact the risk factors according to $\pm (E[Y_1]-E[Y_1|X_1\geq m])/2$. 
    \item If $\alpha=1$ and $X_1$ and $X_2$ are strongly anti-correlated, then $1_{\{X_1\geq m;X_2\geq m\}}$ is likely very small and therefore the effect is similar to the case where $\alpha=0$.
        On the other hand, if $X_1$ and $X_2$ are strongly correlated, then $1_{\{X_1\geq m\}}\approx 1_{\{X_1\geq m;X_2\geq m\}}$ and in that case $RA_1(X;Y)\approx RA_2(X;Y)\approx R(X;Y)/2$ showing that the full dependence with $\alpha =1$ yields an equal share of the marginal risk changes.
\end{itemize}

\subsection{Sensitivity to Dependence}
Following the previous section where the loss function depends on $\alpha$ that impacts the risk allocation with respect to the degree of dependence between risk factors, we apply the techniques of Theorem \ref{thm:RC} to study the sensitivity with respect to $\alpha$.
To this end we consider a loss function of the following form
\begin{equation*}
    \ell(x)=\sum g(x_k)+\alpha h(x),
\end{equation*}
where $g$ is a one dimensional loss function and $h$ a multidimensional function such that $\ell$ is a loss function for all $\alpha\geq 0$ close enough to $0$.
For instance a loss function of the class \ref{C3}.
We also suppose that $g$ is twice differentiable.
Using the same strategy as in the proof of the Theorem \ref{thm:RC}, we can provide the marginal risk contribution and allocation as a function of $\alpha$ around $0$, stressing the dependence part of the loss function.
Computations yield
\begin{equation*}
    \partial_{\alpha} R(X)=\lambda E\left[ h(X-m) \right] \quad \text{and}\quad \partial_{\alpha}
    \begin{bmatrix}
        R(X)\\
        \lambda^{\prime}
    \end{bmatrix}
    =M^{-1}
    \begin{bmatrix}
        \lambda E\left[ \nabla h(X-m) \right] \\
        \partial_{\alpha}R(X)
    \end{bmatrix}
\end{equation*}
where $M$ is given by $M=
    \begin{bmatrix}
        \lambda A & -B \\
        C & 0
    \end{bmatrix}$ and $A=\text{diag}(g^{\prime\prime}(X_k-m_k))$ and $B$ and $C$ as in the proof of Proposition \ref{prop:causalresponsability}.
In the case where
\begin{equation*}
\ell(x)=\frac{1}{2}\sum_{k=1}^3 (x^+_k)^2+\alpha \sum_{1\leq k<j\leq 3}x_k^+x_j^+-1
\end{equation*}
and $X=(X_1,X_2,X_3)$ with $X_1\sim X_2\sim X_3$, $(X_1,X_2)\sim (X_2,X_1)$ and $X_3$ independent of $(X_1,X_2)$, it follows that $m=RA_k(X)$ for every $k=1,2,3$.
Defining $Z=(X_1-m)^+\sim (X_2-m)^+\sim (X_3-m)^+$, computations yields
\begin{equation*}
    \partial_{\alpha}R(X)= E[ Z ]\left(2+\frac{E[ (X_1-m)^+(X_2-m)^+ ]}{E[Z]^2}\right). 
\end{equation*}
Hence, with increasing correlation between $X_1$ and $X_2$ the marginal risk increases.
As for the impact on the risk allocation, since $E[(X_1-m)^+|X_2\geq m]=E[(X_2-m)^+|X_1\geq m]$ it simplifies to
\begin{equation*}
    \begin{split}
        \partial_{\alpha}RA_{1 \text{ or }2}(X) & = \frac{E[Z]}{3} \left(1 +\frac{E[ ( X_2-m)^+|X_1\geq m ]}{E[Z]}+\frac{E[ (X_1-m)^+(X_2-m)^+ ]}{E[Z]^2}\right)\\
        \partial_{\alpha}RA_3(X) & = \frac{E[Z]}{3} \left(4 -2\frac{E[ ( X_2-m)^+|X_1\geq m]}{E[Z]}+\frac{E[ (X_1-m)^+(X_2-m)^+ ]}{E[Z]^2}\right)
    \end{split}
\end{equation*}
Due to the asymmetric dependence of the system:
\begin{itemize}[fullwidth]
    \item One the one hand, if $X_1$ and $X_2$ are highly anti-correlated, then
        \begin{equation*}
            \partial_{\alpha}RA_{1\text{ or }2}(X)\approx \frac{E[Z]}{3}\quad \text{and}\quad \partial_{\alpha}RA_3(X)\approx 4\frac{E[Z]}{3}
        \end{equation*}
        The systemic risk factor is advantaging those who are anti-correlated, with respect to the others.
    \item On the other hand, if $X_1$ and $X_2$ are highly correlated, then for $p=P[X_1\geq m]$, 
        \begin{equation*}
            \partial_{\alpha}RA_{1\text{ or }2}(X)\approx \frac{E[Z]}{3}\left( \frac{p+1}{p}+\frac{E[Z^2]}{E[Z]^2}\right)\quad \text{while}\quad \partial_{\alpha}RA_3(X)\approx \frac{E[Z]}{3}\left(2\frac{p-1}{p}+\frac{E[Z^2]}{E[Z]^2}\right)
        \end{equation*}
        Since $p\leq 1$, the systemic risk factor penalizes those who are highly correlated and reduces the costs for the one who is independent with respect to the previous case.
\end{itemize}
Figure \ref{fig:sensialpha} illustrate this fact for different correlation values in the case of a 3-variate normal distribution
\begin{equation*}
    X\sim \mathcal{N}\left( 0, 
        \begin{bmatrix}
            1 & \rho & 0\\
            \rho & 1&0\\
            0 &0&1
        \end{bmatrix}
    \right)
\end{equation*}
\begin{figure}[ht]
  \centering
  \includegraphics[width=0.8\textwidth]{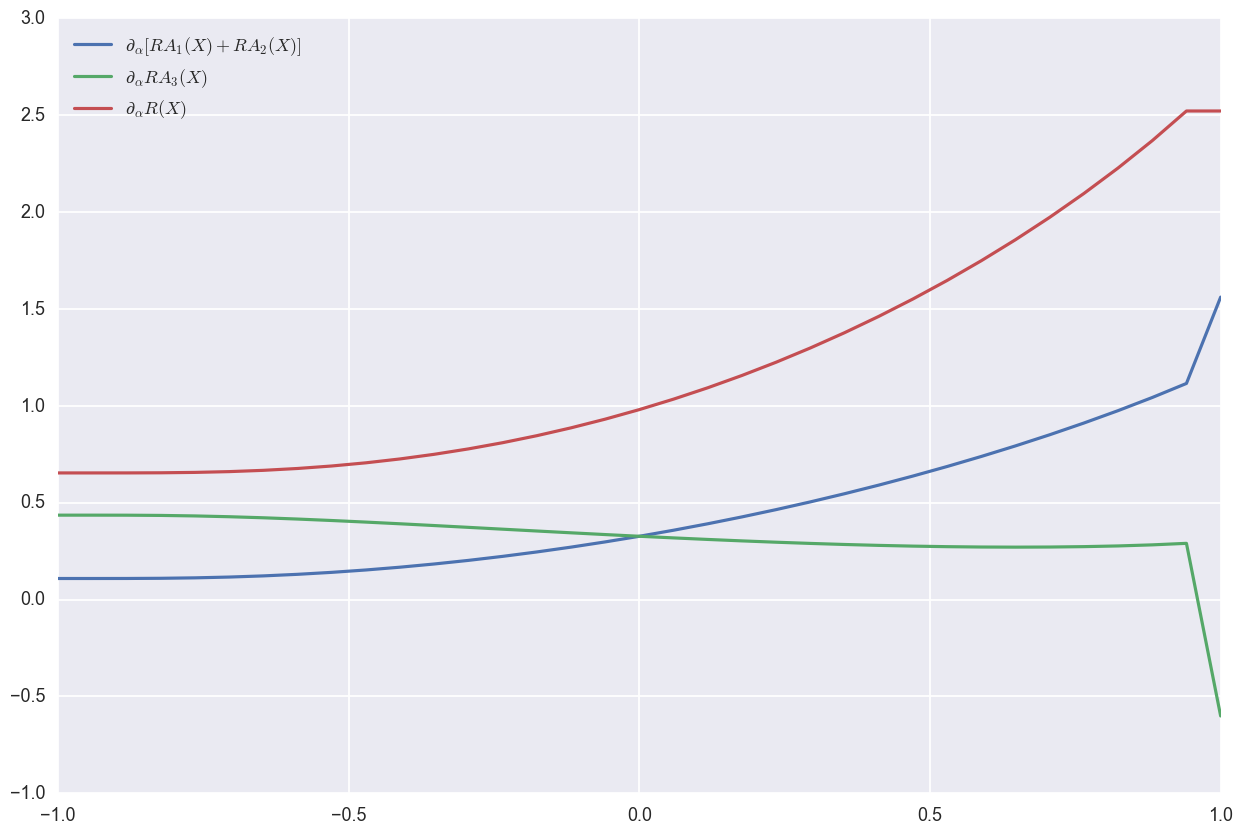}
  \caption{Systemic factor marginal change of the risk allocation and total risk for different correlations $\rho$.}
      \label{fig:sensialpha}
\end{figure}

\subsection{Riskless Allocation, Causal Responsibility and Additivity}\label{ss:sensi}

We conclude this section regarding risk allocation and its sensitivity by a discussion of their properties in light of the following economic 
features of risk allocations introduced in \citep{cheridito2014}.
\begin{enumerate}[label=\textbf{(FA)}]
    \item\label{fa} \textbf{Full Allocation:} $\sum RA_k(X)=R(X)$;
\end{enumerate}
\begin{enumerate}[label=\textbf{(RA)}]
    \item\label{ra} \textbf{Riskless Allocation:} $RA_k(X)=X_k$ if $X_k$ is deterministic;
\end{enumerate}
\begin{enumerate}[label=\textbf{(CR)}]
    \item\label{cr} \textbf{Causal Responsibility:}  $R(X+\Delta X_k)-R(X)=RA_k(X+\Delta X_k)-RA_k(X)$, where $\Delta X_k$ is a loss increment of the $k$-th risk component;
\end{enumerate}

As mentioned before, per design, shortfall risk allocations always satisfy the full allocation property \ref{fa}.
As visible from the above case studies, riskless allocation \ref{ra} and causal responsibility \ref{cr} are not satisfied in general.
In fact, from a systemic risk point of view, we think that \ref{ra} and \ref{cr} are not desirable properties.
Indeed, both imply that risk taking, or non-taking, should only impact the concerned risk component.
However, the risk components are interdependent and any move in one of them bears consequences to the rest of the system.
The search for an optimal allocation is a non-cooperative game between the different system components, each of them respectively looking for its own minimal risk allocation while impacting the others by doing so.
In other words, everyone is responsible for its own risk but also for its relative exposure with respect to the others.
The sensitivity analysis of this section however shows that external shocks are primarily born by the risk component that is hit at least in a first order.
In the case where this shock is independent of the system, by Proposition \ref{prop:causalresponsability} it is then a full causal responsibility.
Otherwise, a correction appears and a fraction of the shock is offloaded to the other risk components according to their relative exposure to the concerned component and dependence with the shock.

\section{Computational Aspects of Risk Allocation}\label{sec04:computation}

In this section we present computational results based on the loss function 
of Example \ref{ex:loss-functions}, that is, 
\begin{equation} \label{e:loss-function}
    \ell(x) = \sum_{k=1}^d x_k  + \frac{1}{2}\sum_{k=1}^d (x_k^+)^2 + \alpha \sum_{1 \le j<k \le d}   x_j^+ x_k^+-1,
\end{equation}
for $\alpha = 0$ or $1$.
In that case, the constrained problem \eqref{eq:def_shortfall_risk} becomes:
\begin{align} 
\begin{split}
    R(X) := \inf \Bigg\{ \sum m_k \colon
    & \sum_{k=1}^d E \left[ X_k - m_k \right] 
    + \frac{1}{2} \sum_{k=1}^d E \left[ \left( X_k - m_k \right)^+ \right]^2 \\
    & + \alpha \sum_{1 \le j<k \le d} E \left[ \left( X_j - m_j \right)^+ \left( X_k - m_k \right)^+ \right]
    \leq 1 \Bigg\}
\end{split}
\end{align}
According to Theorem \ref{thm:monalloc}, the risk allocation is determined by the first order conditions \eqref{eq:lagrangecond}, which read in this case:
\begin{equation}\label{e:loss-function-num-expect}
    \begin{cases}
        \displaystyle \lambda  E[(X_k-m_k)^+] + \alpha \lambda \sum_{j=1,j \neq k}^d E[(X_j-m_j)^+1_{\{X_k \geq m_k\}}]= 1 - \lambda, \quad \text{ for }\, k=1,\dots,d;\\
        \displaystyle \sum_{k=1}^d \left\{ E[(X_k-m_k)] + \frac{1}{2} E[((X_k-m_k)^+)^2] \right\} +\alpha\sum_{1\le j< k\leq d} E[(X_k-m_k)^+(X_j-m_j)^+] = 1.
    \end{cases}
\end{equation}
We use Gaussian distributions with mean vector $\mu$ and variance-covariance matrix $\Sigma$ for  the loss vector $X$.
In the bi- and tri-variate cases the variance-covariance matrix is parameterized by a single correlation factor $\rho$ and the variances $\sigma_k^2$ of $X_k$ for all $k$.
In other words, $\Sigma_{ij} = \rho \sigma_i \sigma_j$ for $i \neq j$.
We write CT for computational time.
The implementation was done on standard desktop computers in the Python programming language. To solve the constrained problem \eqref{eq:def_shortfall_risk}, we use the Sequential Least SQuares Programming (SLSQP) algorithm, in combination with Monte Carlo, Fourier or Chebychev interpolation schemes, briefly described below, for the computation of the expectations in \eqref{e:loss-function-num-expect}.

\paragraph{Fourier methods}
Assuming that the moment generating functions of the considered distributions are available, Fourier methods allow us to compute the different expectations in \eqref{e:loss-function-num-expect}, based on methods presented among other in \citet{EberleinGlauPapapantoleon08} and \citet{antonis2012} for details.
The main advantage of this method is that it is theoretically possible to compute the value of the integrals at any level of precision, while the basic computational time is roughly doubled for every additional digit of accuracy.
However, as seen in the subsequent computations this method suffers from the large number of double integrals to be computed, for which the computational time can become prohibitively long.

\paragraph{Monte-Carlo Methods}
We can also use Monte Carlo simulations for the estimation of the many integrals in \eqref{e:loss-function-num-expect}.
An important observation here is that we can generate and store all realizations in advance, and then use them for the estimation of the functions for different $m$ in every step of the root-finding procedure.
The main advantage of Monte Carlo relative to Fourier methods is that a wider variety of models can be considered; think, for example, of models with copulas or of random variables with Pareto type distributions as considered in the empirical study in Section \ref{sec05:empiricalstudy}.
The main disadvantage is the slow statistical convergence of the scheme, in our context though is fast enough.
In addition, the time to generate, once and for all, the samples, as well as to compute the Monte Carlo averages, is very fast and independent of the value of $m$.

\paragraph{Chebychev interpolation}
A numerical scheme well-suited to approximate the large numbers of functions in the context of optimization routines is the Chebyshev interpolation method.
This method, recently applied to option pricing by \citet{GassGlauMahlstedtMair15}, can be summarized as follows:
Suppose you want to evaluate quickly a function $F(m)$, of one or several variables, for a large number of $m$'s.
The first step of the Chebyshev method is to evaluate the function $F(m)$ on a given set of nodes $m^i$, $1\le i\le N$.
These evaluations can be computed by Fourier or Monte Carlo schemes, are independent of each other and can thus be realized in parallel.
The next step, in order to compute $F(m)$ for an $m$ outside the nodes $m^i$, is to perform a polynomial interpolation of the $F(m^i)$'s using the Chebyshev coefficients.
In other words, the Chebyshev method provides a polynomial approximation $\hat F(m)$ of $F(m)$. 

\paragraph{Discussion:}
Whether it is advantageous to use the Chebyshev interpolation or not, is a matter of two competing factors that affect the computational time:
On the one hand, the number of iterations $I(d)$ needed to find the root of the system and, on the other hand, the size of the grid $N^2$ used in the Chebyshev interpolation.
Our findings reveals that the Monte Carlo schemes are better than the Fourier schemes in the range of our accuracy requirements, since they require the least amount of work during each step of the root-finding procedure or for the pre-processing computations in the Chebyshev method.
Only when the dimension is low, less than three or $\alpha=0$ can the Fourier methods be faster.
Next, the choice between Chebyshev or not is a matter of comparison between $I(d)$ and $N^2$.
In high dimensions, when $I(d)$ dominates $N^2$, with $I(d)$ being in principle of order $d$ and $N$ usually between $10$ and $20$, then the Chebyshev method is less costly.
Furthermore, the Chebyshev method can intensively benefit from parallel computing as the pre-processing step is not sequential.

\subsection{Bivariate case}
We suppose that $d=2$ and consider a bivariate Gaussian distribution with zero mean, $\sigma_1 = \sigma_2 = 1$ and correlation $\rho$ in $\left\{-0.9,  -0.5, -0.2, 0, 0.2, 0.5,  0.9\right\}$.
When setting $\alpha=0$, that is without systemic risk weight, the result $m^*$ does not depend on the correlation value. 
Since $\sigma_1 = \sigma_2 = 1$ the allocation is symmetric and we find $m_1^* \approx -0.173$.
Explicit formulas for the involved expectations are available in this case and this yields of course the fastest computation.
Fourier methods are quite fast (CT $\approx 3 \times$ explicit formula) as we only need to compute 1-dimensional integrals. 
In order to get a high approximation in the Chebychev approximation, one must use 20 nodes for each integral. 
Since the number of iterations in the optimizations is about, the Chebychev method coupled with Fourier transforms is slower than Fourier without it.
Finally, Monte-Carlo is $\approx 20 \mbox{ to } 40 \times$ slower than Fourier, becoming the slowest method in that case.
When setting $\alpha = 1$, the values of the risk allocation are increasing with respect to $\rho$, as expected, see Table \ref{Num:Biv_correlation}. 
The Monte-Carlo method becomes the fastest one.
Indeed, we now need to compute bi-variate integrals in \eqref{e:loss-function-num-expect}.
Even if Fourier methods are fast (from 30 seconds to almost 3 minutes), they are still $\approx 10 \mbox{ to } 50$ slower than Monte-Carlo. 
Moreover, using even as little as 10 nodes in the Chebychev interpolation, which is not very accurate, increases the total computational time because of the number of 2-dimensional integrals to compute in the preprocessing step.
%

\begin{table}[ht] 
    \begin{center}
        \begin{tabular}{@{}cccccccccc@{}}
            \toprule
            
            & & \multicolumn{2}{c}{Fourier} & & \multicolumn{2}{c}{Fourier + Chebychev 10 nodes} & & \multicolumn{2}{c}{Monte Carlo 2 Mio} \\

            \cmidrule{3-4} \cmidrule{6-7} \cmidrule{9-10}

            \multicolumn{1}{c}{$\rho$} 
            & & \multicolumn{1}{c}{$m_1^\ast$} & \multicolumn{1}{c}{CT} 
            & & \multicolumn{1}{c}{$m_1^\ast$} & \multicolumn{1}{c}{CT}
            & & \multicolumn{1}{c}{$m_1^\ast$} & \multicolumn{1}{c}{CT} \\
            
            \midrule

            $ -0.9$   &  & -0.167 &  61520 ms & & -0.150 & 45 m 18 s & & -0.167 & 3257 ms \\            
            $ -0.5$   &  & -0.143 & 37100 ms & & -0.132 & 30 m 27 s & & -0.143 & 3357 ms \\
            $ -0.2$   &  & -0.120 & 45200 ms & & -0.113 & 25 m 21 s & & -0.120 & 3414 ms \\
            $    0$   &  & -0.103 & 51800 ms & & -0.098 & 24 m 52 s & & -0.103 & 3302 ms \\
            $  0.2$   &  & -0.085 & 75700 ms & & -0.082 & 27 m 55 s & & -0.085 & 3417 ms \\
            $  0.5$   &  & -0.057 & 158000 ms & & -0.055 & 32 m 10 s & & -0.056 & 3250 ms \\
            $  0.9$   &  & -0.013 & 88900 ms  & & -0.012 & 55 m 04 s & & -0.012 & 3387 ms \\
            \bottomrule
        \end{tabular}
        \caption{Bivariate case with systemic weight, that is, for $\alpha=1$.}
        \label{Num:Biv_correlation}
    \end{center}
\end{table}

\subsection{Trivariate Case} \label{ss:trivariate}
In this section, we illustrate the systemic contribution of the loss function with three risk components and study the impact of the interdependence of two components with respect to the third one.
We start with a Gaussian vector with the variance-covariance matrix
\begin{equation*}
    \Sigma = 
    \begin{bmatrix}
        0.5 & 0.5\rho &0\\
        0.5\rho & 0.5 &0\\
        0 & 0 &0.6\\
    \end{bmatrix},
\end{equation*}
for different correlations $\rho \in \left\{-0.9, -0.5, -0.2, 0, 0.2, 0.5, 0.9 \right\}$.
Here the third risk component has a higher marginal risk than the first two so that, in the absence of systemic component, it should contribute most to the overall risk.
When $\alpha = 0$, this is indeed the case.
The result is independent of the correlation and is typically overall lower, charging the risk component with the highest variance more -- $m_3^\ast \approx -0.12$ -- than the other two -- $m_1^\ast =m_3^\ast\approx -0.166$.
However, with systemic risk weight, the contribution of the first two overcomes the third one for high correlation, as emphasised in red in Table \ref{Num:triv_correlation}.
These results illustrate that the systemic risk weights correct the risk allocation as the correlation between the first two risk components increases.
The Monte Carlo scheme in this trivariate case is radically faster than Fourier, (and Chebychev interpolation was not found useful in this case either), from 30 times up to 60 times more efficient.

%
%
%
%

\begin{table}[ht]
    \begin{center}
        \resizebox{\columnwidth}{!}{
            \begin{tabular}{@{}ccccccccccccc@{}}
                \toprule
                
                & & \multicolumn{5}{c}{Fourier Method} & & \multicolumn{5}{c}{Monte Carlo 2 Mio} \\

                \cmidrule{3-7} \cmidrule{9-13}

                \multicolumn{1}{c}{$\rho$} & & 
                \multicolumn{1}{c}{$m_1^\ast=m_2^\ast$} & & \multicolumn{1}{c}{$m_3^\ast$} & \multicolumn{1}{c}{$R(X)$} & \multicolumn{1}{c}{TCP} & &
                \multicolumn{1}{c}{$m_1^\ast=m_2^\ast$} & & \multicolumn{1}{c}{$m_3^\ast$} & \multicolumn{1}{c}{$R(X)$} & \multicolumn{1}{c}{TCP} \\

                \midrule
                
                $-0.9$ &    & -0.189 & $\leq$ &  0.096 & -0.258 & 2 m 55 s &  
                            & -0.190 & $\leq$ &  0.095 & -0.283 & 3159 ms \\
                
                $-0.5$ &    & -0.135 & $\leq$ &  0.016 & -0.253 & 1 m 39 s &  
                            & -0.134 & $\leq$ &  0.017 & -0.252 & 2799 ms \\

                $-0.2$ &    & -0.099 & $\leq$ & -0.030 & -0.229 & 1 m 32 s &  
                            & -0.098 & $\leq$ & -0.030 & -0.228 & 2760 ms \\

                $0$    &    & -0.076 & $\leq$ & -0.059 & -0.212 & 2 m 22 s &  
                            & -0.077 & $\leq$ & -0.058 & -0.212 & 3188 ms \\

                $ 0.2$ &    & -0.053 & $\leq$ & -0.086 & -0.194 & 1 m 37 s &  
                            & -0.055 & $\leq$ & -0.086 & -0.195 & 2741 ms \\
                
                $0.5$  &    & -0.020 & $\textcolor{red}{\geq}$ & -0.125 & -0.165 & 1 m 47 s &  
                       & -0.020 & $\textcolor{red}{\geq}$ & -0.124 & -0.164 & 3358 ms \\

                $0.9$  &    &  0.025 & $\textcolor{red}{\geq}$ & -0.173 & -0.121 & 2 m 07 s &
                       &  0.026 & $\textcolor{red}{\geq}$ & -0.171 & -0.119 & 2722 ms \\

                \bottomrule
            \end{tabular}
        }
        \caption{Trivariate case with systemic weight, that is $\alpha=1$. Computed by Fourier.}
        \label{Num:triv_correlation}
    \end{center}
\end{table}

\subsection{Higher Dimensions}
 
Figure \ref{fig:30_dim} show the variance-covariance matrix and the resulting risk allocation in a 30-variate case using Monte Carlo, coupled with 15 node Chebychev interpolation when $\alpha=1$.
Indeed, the dimension being large, the preprocessing time with Monte-Carlo to compute the Chebychev coefficients together with the computational time resulting from the root-finding with the polynom is lower than the raw Monte-Carlo root finding.
The plot shows that the risk allocation depends not only on the variance of the different risk components, but also, in the case where $\alpha=1$, on the corresponding dependence structure.
For instance, compare components 28 and 29 in the 10-variate case in Figure \ref{fig:30_dim}.
In the first case we observe that when $\alpha=0$, component 28 contributes more than 29, and conversely when $\alpha=1$.
The reason is that even if component 28 has a slightly higher variance, it is relatively less correlated than 29 to the components 2, 3, 6, 20 and 23 that have the highest variance, and thus are the most `dangerous' from the systemic point of view.
Hence, component 29 is more exposed than 28 in case of a systemic event.

%
\begin{figure}[ht]
    \centering
        \includegraphics[width=0.9\textwidth]{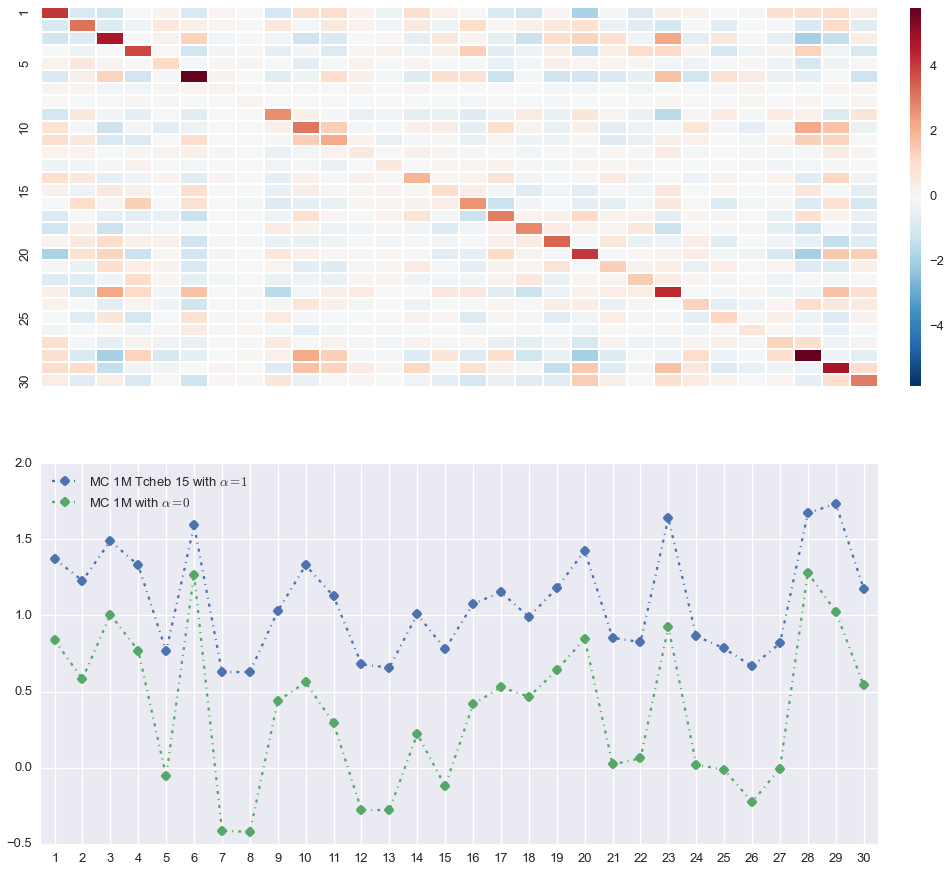}
        \caption{Plot showing the variance-covariance matrix together with the respective allocation in the 30-variate case for $\alpha=0,1$.}
        \label{fig:30_dim}
\end{figure}


\section{Empirical Study: Default Fund Allocation} \label{sec05:empiricalstudy}
{\def\X{X}
In the sequel we consider loss functions of the type
\begin{align}
    \ell_1(x) & = \sum x_k^+-\frac{1}{2}\sum x_k^- \label{eq:01}\\
    \ell_2(x) & = \sum x_k^+-\frac{1}{2}\sum x_k^-+\sum_{k\neq j} \left(x_k +x_j\right)^+-\frac{1}{2}\sum_{k\neq j} \left( x_k+x_j \right)^-\label{eq:03}
\end{align}
The first loss function means that a position is acceptable if on average, the losses are compensated by gains twice as large.\footnote{The coefficient $1/2$ is naturally subject to consensus and can be taken as any real number between $0$ and $1$.}
In this case, the risk assessment of the losses is marginal or component-wise.
The second one is similar, however, it also aggregates pairwise losses and gains among the different components.
Here the risk assessment considers additionally the pairwise dependence between the losses.
Note that each of these loss function is positive homogeneous (hence so is $R$) and permutation invariant.

The default fund of a CCP is a protection against extreme and systemic risk.
As of today, it is sized according to the Cover 2 rule, see  \citep[article 42, \S 3, p. 37]{europeanparlement2012}.
In a rough way, this corresponds to the maximal joint loss of two members over their posted collateral (initial margin) in a stressed situation over the last 60 days.
The relative contribution of each member to the default fund is proportional to their respective initial margin -- that is, the value at risk at a given level of confidence of their loss and  profit over a three-day time horizon.
Hence, denoting by $DF$ the total size of the default fund and by $IM_k(X_k)$ the initial margin of member $k$, the contribution of member $k$ is given by
\begin{equation} \label{eq:im_alloc}
    \frac{IM_k(X_k)}{\sum_j IM_j(X_j)} DF
\end{equation}
As an alternative, we propose to define the contribution of member $k$ to the default fund as follows.
According to theorem \ref{thm:monalloc} there exists a unique optimal capital allocation $RA(X)$ for a given loss vector $X$.
We define therefore the relative risk contribution of each financial component as
\begin{equation} \label{eq:risk_alloc}
    RC_k := RC_k(\X ) = \frac{RA_k(X)}{\sum RA_j(X)} = \frac{RA_k(X)}{R(X)}.
\end{equation}
The value at risk for the initial margins $IM_k$, the overall risk measure $R$ as well as the optimal capital allocation are all positive homogeneous.
It follows that $RC_k(\lambda X) = RC_k(X)$ for every $\lambda > 0$, that is, the relative risk contribution is scaling invariant as for instance the Sharpe ratio,
Minmax ratio or Gini ratio among others, see \citet{cheridito2013}.
The scaling invariance property allows one to consider the allocation independently of the total size of the default fund.
The contribution of member $k$ is then given as
\begin{equation} \label{eq:our_alloc_dollar}
    RC_k \times DF
\end{equation}
The current practice based on the ratio of initial margins \eqref{eq:im_alloc} provides an allocation that only depends on the marginal risk of each member profit and loss $X_k$, and does not take their joint dependence into account, that is, the systemic risk component.
By contrast, the approach \eqref{eq:our_alloc_dollar} allows one to take this systemic risk component into account in the allocation of the default fund in the sense of the following proposition already discussed in Section \ref{sec03:sensis}.

\subsection{Data}

In this section we compare a standard IM based allocation of the default fund of a CCP with the multivariate shortfall risk allocation resulting from the use of the loss functions $\ell_1$ and $\ell_2$. This empirical study is based on an LCH real dataset corresponding to the clearing of 74 portfolios of equity derivatives bearing on 90 underlyings. The clearing members have been anonymized and are referenced in the sequel by labels starting by PB plus number (e.g. PB7), whereas 
the underlying assets are identified by their real tickers, such as FCE for CAC40 index future and  AEX for Amsterdam exchange index, which can all be retrieved online.
The Jupyter notebook corresponding to this empirical study, including all the data and numerical codes, is publically available at https://github.com/yarmenti/MSRA. In order to avoid the repricing of the options, all the derivative positions have been linearized and reformulated in equivalent Delta positions in their underlyings. We denote by $P$ the $74 \times 90$ matrix of the positions of the 74 clearing members in the $90$ underlyings. As the CCP clears, each column of $P$ sums up to zero. The vector of the clearing member losses at a three day (3d) horizon is given by

\begin{equation} \label{e:\X }
	\X =-P\times (S_{3d}-S_0),
\end{equation}
where $S$ is the vector of the underlying price processes. The vector $S_{0}$ is observed and the vector $S_{3d}$ is simulated in a Student's t model estimated by maximum-likelihood on the underlying return time series, i.e.

\begin{equation} \label{e:Si}
	S_{3d}^i - S_{0}^i \sim \kappa_i \times T_i^{\nu_i} \times S_{0}^i,
\end{equation}
where $T_i^{\nu_i}$ is a Student's t random variable with $\nu_i$ degrees of freedom and where $\kappa_i$ a calibration fudge coefficient. 
The dependence between the underlyings  is modeled by a Student's t copula with correlation matrix $\rho$ and $\nu$ degrees of freedom, that is
\begin{equation*}
	C_{\rho, \nu}(u_1, \dots, u_n) = F_{\rho}^{\nu} \Big(F_{\nu}^{-1}(u_1), \dots, F_{\nu}^{-1}(u_n) \Big)
\end{equation*}
Here $F_{\rho}^{\nu}$ is the cumulative distribution function of the multivariate Student's t distribution with correlation matrix $\rho$ and $\nu$ degrees of freedom, and $F_{\nu}$ is the Student's t cdf with $\nu$ degrees of freedom.

\subsection{Simulations}

The correlation matrix $\rho$ is estimated empirically on the return time series and the dependence copula parameter is set to $\nu = 6$. Each of $m=10^5$ realizations of $S_{3d}$, hence of the loss vector $\X $, is simulated as follows:

\begin{enumerate}
	\item 	Simulate a Gaussian random vector $G$ of size 90 with zero-mean and correlation $\rho$
	\item 	Generate a $\chi_2$ random variable $\xi$ with parameter $\nu$
	\item 	Obtain the Student's t vector $R = \sqrt{\frac{\nu}{\xi}} G$
	\item 	Transform $R$ into uniform coordinates by $U_i = F_{\nu} \big(R_i \big)$ 
			and compute $T_i^{\nu_i}= F_{\nu_i}^{-1} \big( U_i \big)$
	\item 	Compute $S_{3d}$ by \eqref{e:Si}
and $\X $ by \eqref{e:\X }
\end{enumerate}
The resulting inputs to the allocation optimization problem are analysed in Appendix \ref{appendix:03}. 
Figure \ref{f:correl} shows the correlation matrices of the underlying assets and of the loss vector $\X $ of the clearing members, in a heatmap representation. In the left panel, which is directly estimated from the data, we see that the underlying assets are all positively correlated, as commonly found in the case of equity derivatives. However, due to positions in opposite directions taken by the clearing members, some of their losses exhibit significant negative correlations, as shown by the blue cells in the right panel.

\begin{figure}[htbp]
    \centering
	\includegraphics[width=0.45\textwidth]{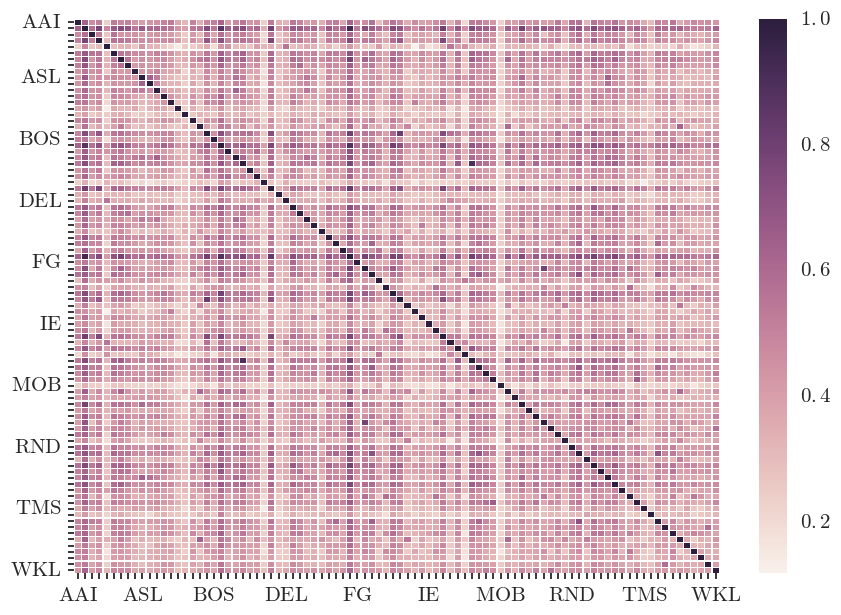} 
    \includegraphics[width=0.45\textwidth]{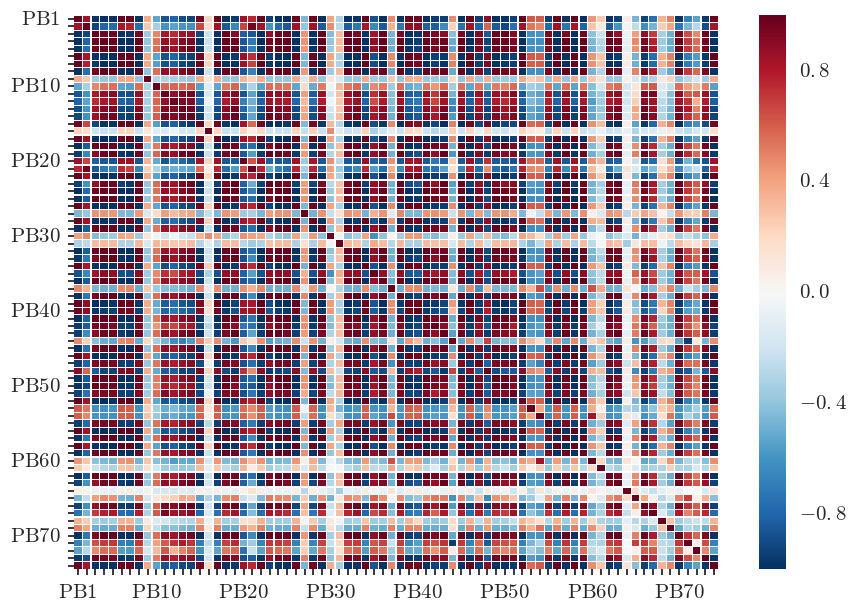} 
    \caption{ 	Left: Correlation matrix of the underlying assets (ranked by alphabetical order of asset ticker; one ticker out of ten is displayed along the coordinate axes). 
    			Right: Correlation matrix of the loss vector $\X $ of the clearing members (ranked by alphabetical order of member label; one label out of ten is displayed along the coordinate axes).
    		}
    \label{f:correl}
\end{figure}

\subsection{Allocation Results} \label{s:empi}

The total size of the default fund as of a standard Cover 2 methodology are shown in Table \ref{t:cover 2}, for three values of the dependence copula parameter $\nu$ and for 99\% vs. 99.7\% initial margins (IM).
Since a Cover 2 default fund is a cushion over IM, its size is directly responsive to the level of the quantile which is used for setting the IM (compare the two lines in Table \ref{t:cover 2}).  In relative terms the size of the default fund is quite stable with respect to $\nu$. However we emphasize that these are monetary amounts, so that the difference between for instance 6.16 $10^8$ and 6.72 $10^8$ corresponds to
0.56 $10^8$, i.e. more than half a billion of the corresponding currency. 
\begin{table}[ht]
	\begin{center}
		\begin{tabular}{@{}rcccc@{}}
			\toprule
			 			& & $\nu = 2$		& $\nu = 6$		& $\nu = 50$ \\
			\midrule
            99 \% IM	& & 6.16 $10^8$ 	& 6.72 $10^8$ 	& 6.27 $10^8$ \\
            99.7 \% IM	& & 4.96 $10^8$ 	& 5.48 $10^8$ 	& 5.00 $10^8$ \\
            \bottomrule
		\end{tabular}
		\caption{Size of a Cover 2 default fund for different levels of initial margins and different values of the dependence copula parameter $\nu$.}
		\label{t:cover 2}
	\end{center}
\end{table}
In the sequel we set $\nu=6,$ which corresponds to an intermediate level of tail dependence, and we use 99\% IM, which corresponds to the EMIR regulatory floor on initial margins.


Figure \ref{f:l1_im_alloc} compares the allocation weights implied by the loss function $\ell_1$ with the ones implied by 99\% IM. The allocations are very similar, as confirmed by the examination of the percentage relative differences displayed in the upper panels of Figure \ref{f:l1_im_alloc}.
By contrast, the lower panels of Figure \ref{f:diff_l1} show that the allocation weights implied by the loss function $\ell_1$ and the dependence sensitive loss function $\ell_2$ differ significantly in relative terms, including for the names with the greatest contributions to the default fund.
These results illustrate the impact of the use of a ``systemic'' loss function on the allocation of the default fund.

\begin{figure}[htbp]
	\centering
	\includegraphics[width=0.45\textwidth]{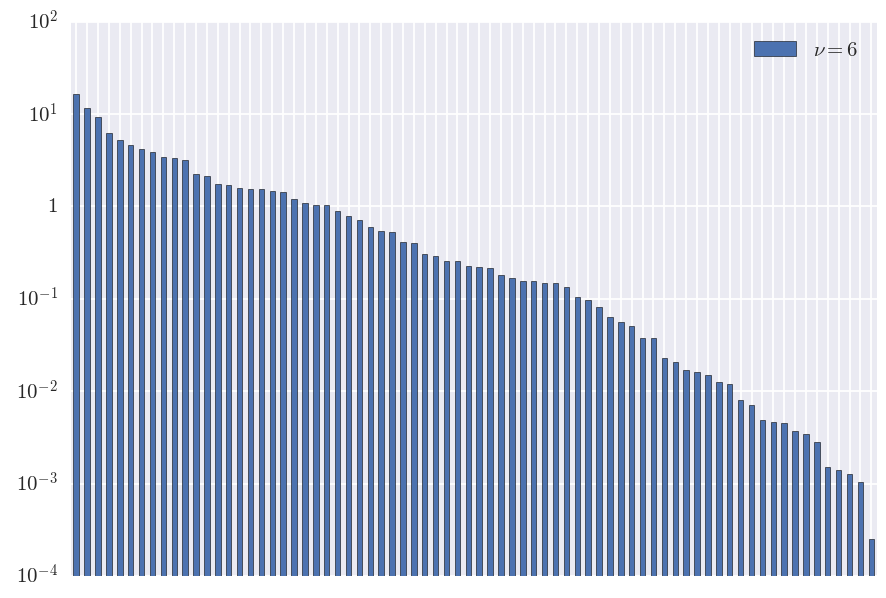} 
	\includegraphics[width=0.45\textwidth]{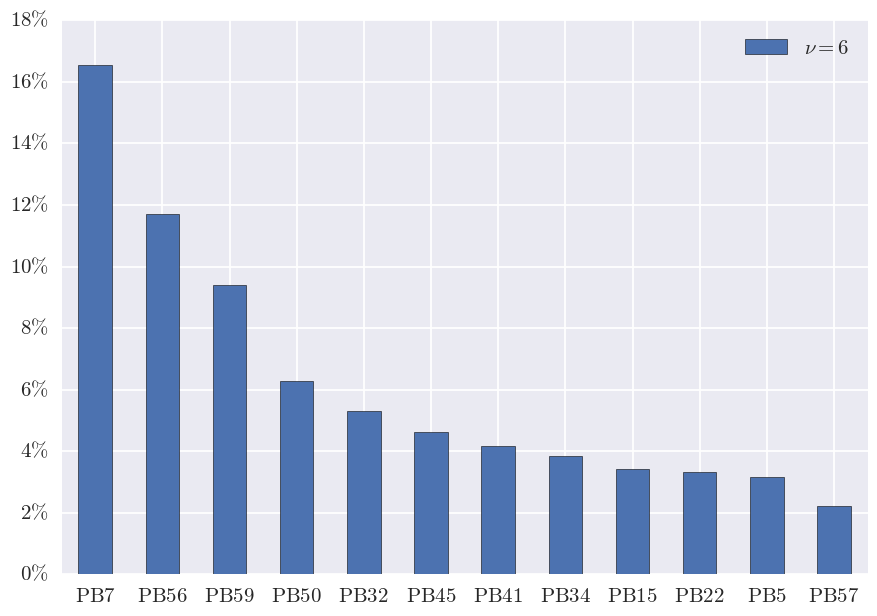} 

	\includegraphics[width=0.45\textwidth]{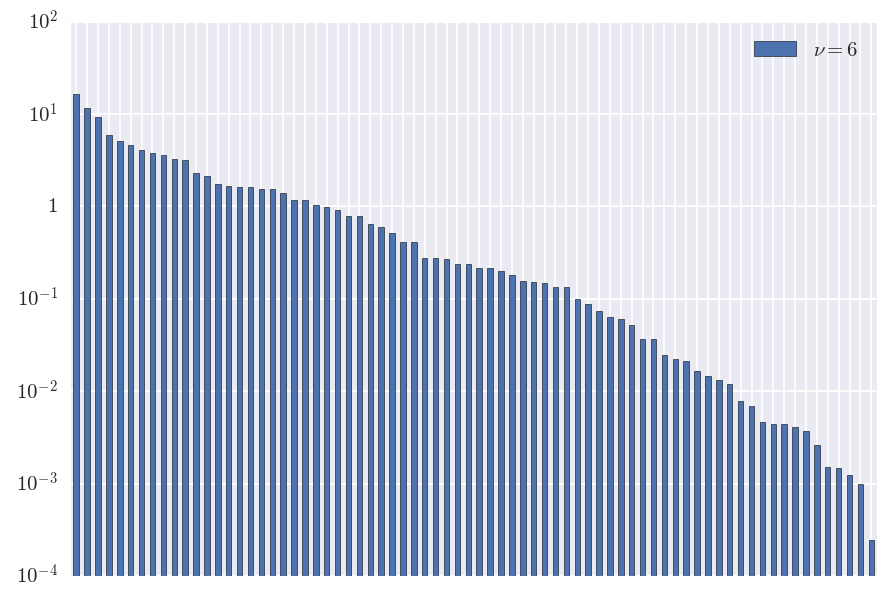}
	\includegraphics[width=0.45\textwidth]{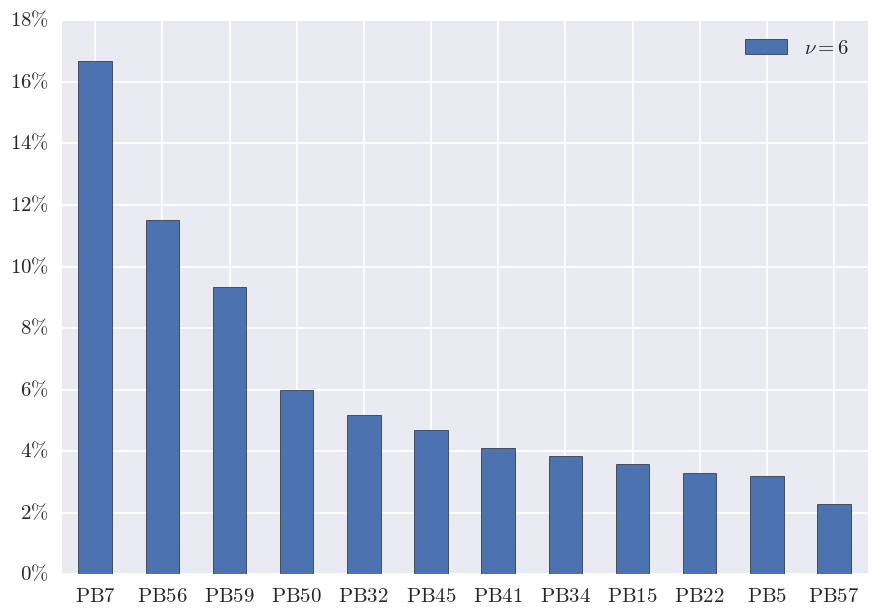}

	\caption{	Left: Decreasing log-allocation weights implied by the loss function $\ell_1$ (top) and 99\% IM (bottom).
				Right: Twelve highest allocation weights implied by the loss function $\ell_1$ (top) and by 99\% IM (bottom), 
				with the corresponding member labels.}

	\label{f:l1_im_alloc}
\end{figure}

\begin{figure}[htbp]
    \centering
    \includegraphics[width=0.45\textwidth]{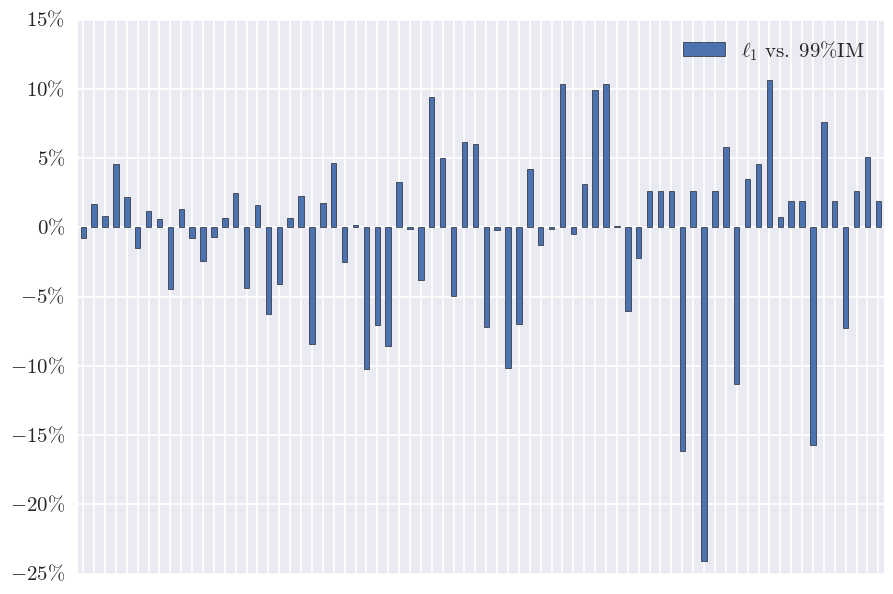}
    \includegraphics[width=0.45\textwidth]{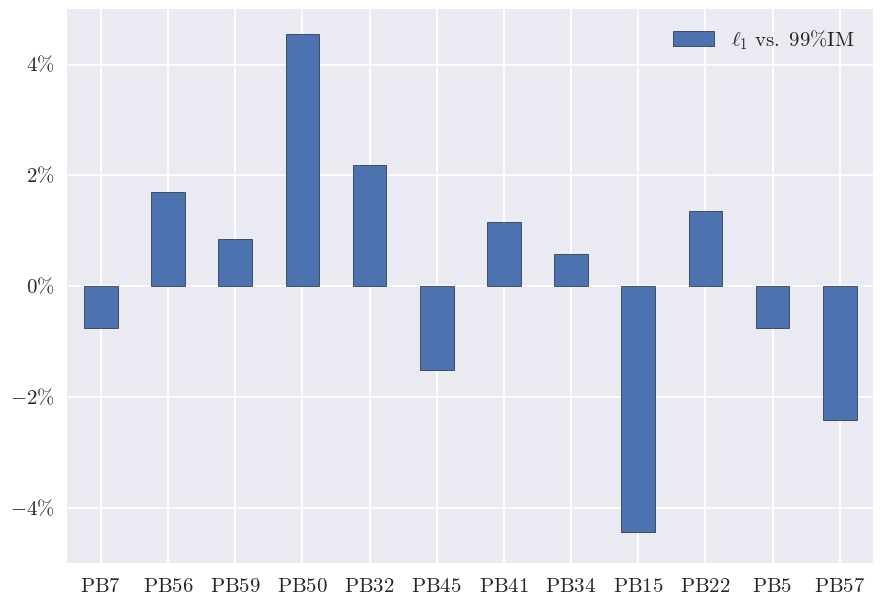}

    \includegraphics[width=0.45\textwidth]{./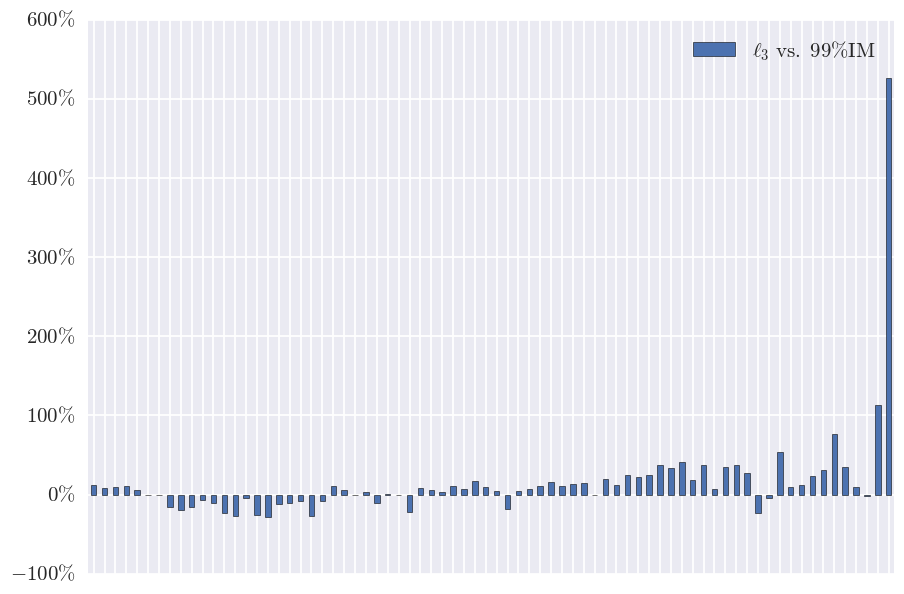}
    \includegraphics[width=0.45\textwidth]{./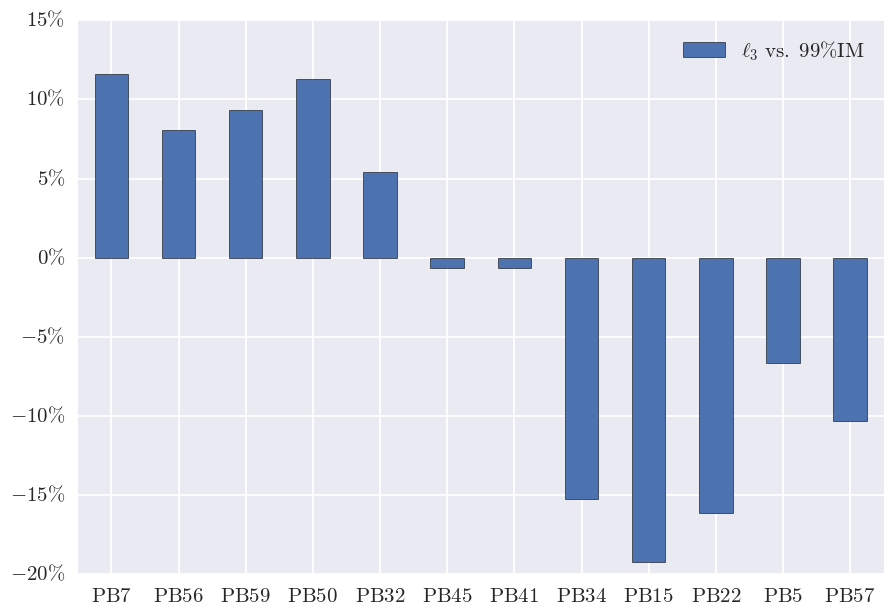}

    \includegraphics[width=0.45\textwidth]{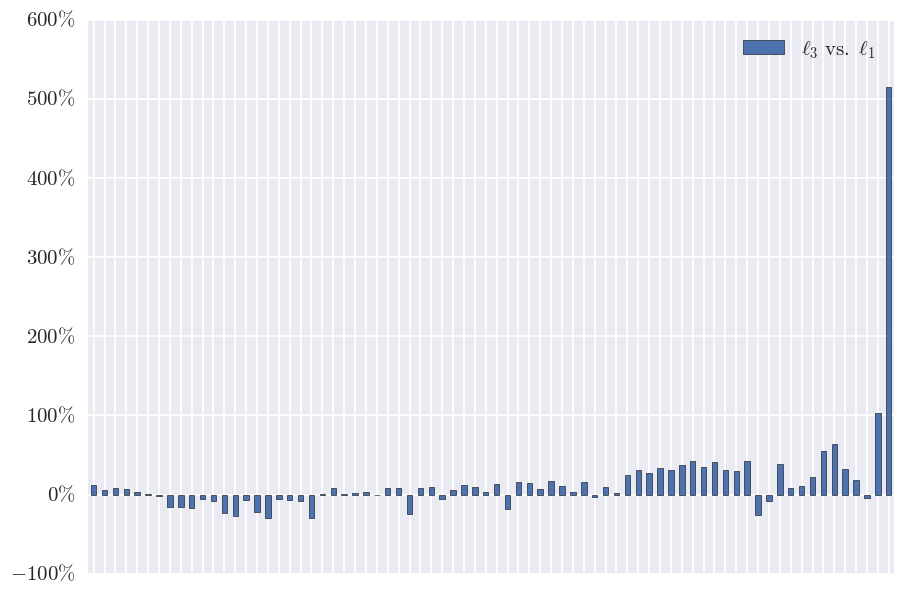}
    \includegraphics[width=0.45\textwidth]{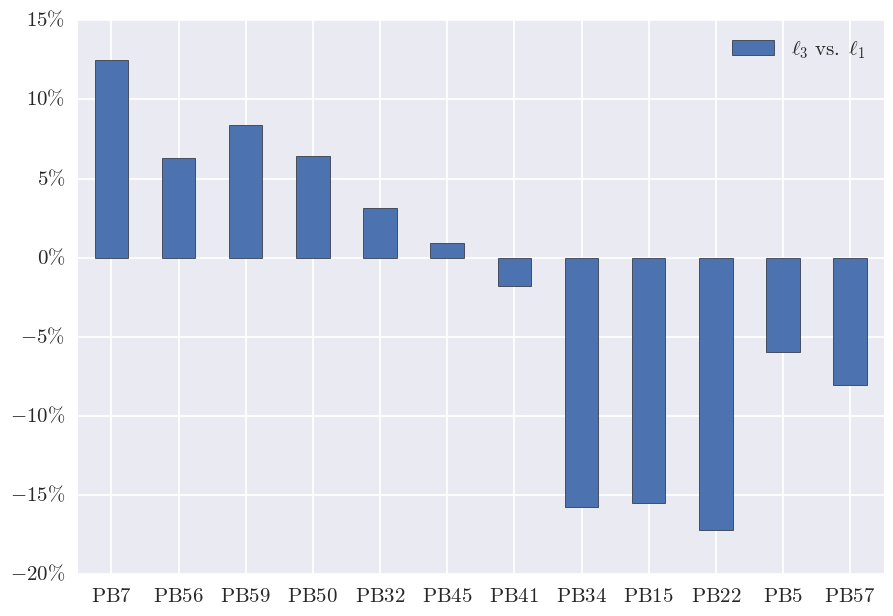} 
    
    \caption{	Left: Percentage relative differences between the allocation weights implied by 
                the loss function $\ell_1$ and 99\%IM (top), 
                the loss function $\ell_2$ and 99\% IM (middle),
                and the loss functions $\ell_1$ and $\ell_2$ (bottom), 
                ranked by decreasing values of the allocation weights implied by the loss function $\ell_1$.
    			Right: Zoom on the left parts of the graphs, with member labels.}

    \label{f:diff_l1}
\end{figure}}

\vskip0.5cm

\paragraph{Acknowledgements}
This paper greatly benefited from regular exchanges with the quantitative research team of LCH in Paris: Quentin Archer, Julien Dosseur, Pierre Mouy and Mohamed Selmi. In particular we are grateful to Pierre Mouy for the preparation of the real dataset used for the empirical study of Section \ref{sec05:empiricalstudy}.

\newpage
\begin{appendix}

\section{Some Classical Facts in Convex Optimization}\label{appendix:01}

For an extended real valued function $f$ on a locally convex topological vector space $X$, its convex conjugate is defined as
\begin{equation*}
    f^\ast(x^\ast)=\sup_{x \in X}\left\{ \langle x^\ast,x\rangle -f(x)\right\}, \quad x^\ast \in X^\ast,
\end{equation*}
where $X^\ast$ is the topological dual of $X$.
The Fenchel--Moreau theorem states that if $f$ is lower semi-continuous, convex and proper, then so is $f^\ast$, and it holds
\begin{equation*}
    f(x)=f^{\ast \ast}(x)=\sup_{x^\ast\in X^\ast}\left\{ \langle x^\ast, x\rangle -f^\ast(x^\ast)\right\}, \quad x \in X.
\end{equation*}
Following \citet{rockafellar1970}, for any non-empty set $C\subseteq \mathbb{R}^d,$ we define its \emph{recession cone}
\begin{equation*}
    0^+C:=\left\{ y \in \mathbb{R}^d\colon x+\lambda y\in C\text{ for every }x \in C \text{ and }\lambda\in \mathbb{R}_+ \right\}.
\end{equation*}
By \citep[Theorem 8.3]{rockafellar1970}, if $C$ is non-empty, closed and convex, then
\begin{equation}\label{eq:recesscone}
    0^+C=\left\{ y \in \mathbb{R}^d\colon \text{ there exists } x \in C\text{ such that }  x+\lambda y\in C\text{ for  every }\lambda\in \mathbb{R}_+\right\}.
\end{equation}
By \citep[Theorem 8.4]{rockafellar1970}, a non-empty, closed and convex set $C$ is compact if and only if $0^+C=\{0\}$.

Given a proper, convex  and lower semi-continuous function $f$ on $\mathbb{R}^d$, we call  $y\in\mathbb{R}^d$ a \emph{direction of recession} of $f$ if there exists $x \in \text{dom}(f)$ such that the map $\lambda \mapsto f(x+\lambda y)$ is decreasing on $\mathbb{R}_+$.
We denote by $f0^+$ the \emph{recession function} of $f$, that is, the function with epigraph given as the recession cone of the epigraph of $f$, and we call
\begin{equation*}
    0^+f:=\left\{y\in \mathbb{R}^d\colon (f0^+)(y)\leq 0\right\}
\end{equation*}
the \emph{recession cone of $f$}.
The following theorem gathers results from \citep[Theorems 8.5, 8.6, 8.7 and Corollaries pp. 66--70]{rockafellar1970}.
\begin{theorem}\label{thm:reccone}
    Let $f$ be a proper, closed and convex function on $\mathbb{R}^d$.
    \begin{enumerate}
        \item Given $x,y$ in $\mathbb{R}^d$, if $\liminf_{\lambda \to \infty}f(x+\lambda y)<\infty,$  then $\lambda \mapsto f(x+\lambda y)$ is decreasing.
        \item All the non-empty level sets $B:=\{x \in \mathbb{R}^d \colon f(x)\leq \gamma\}\neq \emptyset$ of $f$ have the same recession cone, namely the recession cone of $f$. That is:
            \begin{equation*}
                0^+f=0^+B,  \text{ for every }\gamma \in \mathbb{R}\text{ such that }B \neq \emptyset.
            \end{equation*}
        \item $f0^+$ is a positively homogeneous, proper, closed and convex function, such that
            \begin{equation*}
                (f0^+)(y)=\sup_{\lambda > 0}\frac{f(x+\lambda y)-f(x)}{\lambda}=\lim_{\lambda \to \infty}\frac{f(x+\lambda y)-f(x)}{\lambda}, \quad y \in \mathbb{R}^d,
            \end{equation*}
            for every $x \in \text{dom}(f)$.
        \item There exists $x \in \text{dom}(f)$ such that the map $\lambda \mapsto f(x+\lambda y)$ is decreasing on $\mathbb{R}_+$, that is, $y$ is a direction of recession of $f$, if and only if this map is decreasing for every $x\in \text{dom}(f),$ which in turn is equivalent to $(f0^+)(y)\leq 0$.
        \item The map $\lambda \mapsto f(x +\lambda y)$ is constant on $\mathbb{R}_+$ for every $x\in \text{dom}(f)$ if and only if $(f0^+)(y)\leq 0$ and $(f0^+)(-y)\leq 0$.
    \end{enumerate}
\end{theorem}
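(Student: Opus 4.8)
The plan is to reduce every item to the geometry of the recession cone of the epigraph $\epi f=\{(x,\mu)\in\mathbb{R}^d\times\mathbb{R}\colon f(x)\leq\mu\}$, which is a non-empty, closed and convex set precisely because $f$ is proper, closed and convex. I would prove item (3) first, since the other four items follow from it almost mechanically. By definition $f0^+$ is the function whose epigraph is $0^+(\epi f)$; as the recession cone of a closed convex set is itself a closed convex cone, $f0^+$ is automatically positively homogeneous, closed and convex, and it is proper because $\epi f\neq\emptyset$. The difference-quotient formula is the analytic core: fixing $x\in\dom f$, the restriction $\phi(\lambda):=f(x+\lambda y)$ is convex on $\mathbb{R}_+$, so the chord slope $[\phi(\lambda)-\phi(0)]/\lambda$ is non-decreasing in $\lambda>0$, and hence its supremum coincides with its limit as $\lambda\to\infty$. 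Identifying $(f0^+)(y)$ with the least $\mu$ for which $(x,f(x))+\lambda(y,\mu)\in\epi f$ for all $\lambda\geq 0$ then yields exactly $(f0^+)(y)=\sup_{\lambda>0}[\phi(\lambda)-\phi(0)]/\lambda$.

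The delicate point, and the step I expect to be the main obstacle, is that this formula must be independent of the base point $x\in\dom f$. This is where the ``one point suffices'' description \eqref{eq:recesscone} of the recession cone of a closed convex set is indispensable: a direction $(y,\mu)$ belongs to $0^+(\epi f)$ as soon as the ray from a single point of $\epi f$ remains inside, and then the corresponding ray stays inside $\epi f$ when emanating from \emph{every} point. Transported back to $f$, this says that if $(f0^+)(y)\leq\mu$ is witnessed at one base point it holds at all of them, which is the substance of base-point independence and the one place where a genuine convex-geometry argument (rather than a one-dimensional computation) is needed.

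Granting (3), the remaining items are short. For (1), $\phi$ is convex on $\mathbb{R}_+$; were it not non-increasing there would be $\lambda_1<\lambda_2$ with $\phi(\lambda_1)<\phi(\lambda_2)$, and convexity would force $\phi(\lambda)\to\infty$, contradicting $\liminf_{\lambda\to\infty}\phi(\lambda)<\infty$. For (2), if $y\in 0^+ f$ then $(f0^+)(y)\leq 0$, so $f(x+\lambda y)\leq f(x)\leq\gamma$ for every $x$ in the level set $B$, giving $y\in 0^+B$; conversely $y\in 0^+B$ keeps $\phi$ bounded above, so by (1) it is non-increasing and the limit formula of (3) forces $(f0^+)(y)\leq 0$. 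For (4), applying $(f0^+)(y)\leq 0$ at the shifted base point $x+\lambda_1 y$ gives $f(x+\lambda_2 y)\leq f(x+\lambda_1 y)$ whenever $\lambda_1<\lambda_2$, so the map is non-increasing for every $x$, while the converse is again the limit formula. Finally (5) follows by applying (4) to both $y$ and $-y$: the inequalities $(f0^+)(y)\leq 0$ and $(f0^+)(-y)\leq 0$ give $f(x+y)\leq f(x)$ and, evaluating the second at $x+y$, $f(x)\leq f(x+y)$, whence equality and, by positive homogeneity of $f0^+$, constancy along the entire ray.
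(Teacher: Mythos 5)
The paper does not actually prove this theorem: it is stated as a collection of results imported verbatim from Rockafellar (Theorems 8.5--8.7 and their corollaries), so the only meaningful comparison is with that standard argument. Your proof is essentially that argument, and items (1)--(4) are correct and well organized: you rightly identify the one genuine convex-geometry input, namely that the ``one point suffices'' characterization \eqref{eq:recesscone} applied to $0^+(\epi f)$ is what makes the difference-quotient formula independent of the base point; everything else is monotonicity of chord slopes of the convex function $\lambda\mapsto f(x+\lambda y)$. Two small points you gloss over: properness of $f0^+$ follows from properness of $f$ (if $(f0^+)(y)=-\infty$ then $f(x+y)=-\infty$ for $x\in\dom f$), not merely from $\epi f\neq\emptyset$; and one should note that $0^+(\epi f)$ is indeed an epigraph because it is closed under adding $(0,t)$, $t\geq 0$. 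Neither is a real obstacle.

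The genuine gap is in item (5). You prove only the ``if'' direction (the two inequalities $(f0^+)(\pm y)\leq 0$ imply constancy along the ray), and you never address the ``only if'' direction. That direction, as the statement is literally written, is false: take $f=\delta(\cdot\mid[0,\infty))$ on $\mathbb{R}$ and $y=1$. Then $\lambda\mapsto f(x+\lambda y)$ is constant on $\mathbb{R}_+$ for every $x\in\dom f=[0,\infty)$, yet $(f0^+)(-1)=+\infty$. The correct statement (Rockafellar's corollary) requires constancy along the whole line, $f(x+\lambda y)=f(x)$ for all $\lambda\in\mathbb{R}$ and all $x\in\dom f$; under that reading the converse is immediate from your item (4) applied to both $y$ and $-y$, since the map $\lambda\mapsto f(x+\lambda(-y))$ is then also non-increasing on $\mathbb{R}_+$. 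So you should either strengthen the hypothesis of (5) to constancy on $\mathbb{R}$, or explicitly add the assumption that $\dom f$ is invariant under translation by $-y$; as it stands, your proof of (5) is incomplete and the missing half cannot be supplied.
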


\section{Multivariate Orlicz Spaces}\label{appendix:02}

In this appendix we briefly sketch how the classical theory of univariate Orlicz spaces carries over to the $d$-variate case without any significant change.
We follow the lecture notes by \citet{leonard2007}, only providing the proofs that differ structurally from the univariate case.

A function $\theta :\mathbb{R}^d \to [0,\infty]$ is called a Young function if it is
\begin{itemize}
    \item convex and lower semi-continuous;
    \item such that $\theta(x)=\theta(\abs{x})$ and $\theta(0)=0$;
    \item non trivial, that is, $\text{dom}(\theta)$ contains a neighborhood of $0$ and $\theta(x)\geq a\norm{x}-b$ for some $a>0$.
\end{itemize}
In particular, $\theta$ achieves its minimum at $0$ and is increasing on $\mathbb{R}^d_+$.
It is said to be finite if $\text{dom}(\theta)=\mathbb{R}^d$ and strict if $\lim_{x\to \infty} \theta(x)/\norm{x}=\infty$.
\begin{lemma}
    The function $\theta$ is Young if and only if $\theta^\ast$ is Young.
    Furthermore, $\theta$ is strict if and only if $\theta^\ast$ is strict if and only if $\theta$ and $\theta^\ast$ are both finite.
\end{lemma}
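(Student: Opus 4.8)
The plan is to prove the first equivalence by verifying directly that $\theta^{\ast}$ satisfies the three defining axioms of a Young function, and then to get the reverse implication for free from the Fenchel--Moreau theorem of Appendix \ref{appendix:01}: since $\theta$ is convex, lower semi-continuous and proper (it is nonnegative with $\theta(0)=0$), we have $\theta^{\ast\ast}=\theta$, so conjugation is an involution on this class and it suffices to prove one direction. Two of the axioms for $\theta^{\ast}$ cost nothing: a convex conjugate is automatically convex and lower semi-continuous as a supremum of affine functions, and $\theta^{\ast}(0)=-\inf_x\theta(x)=-\theta(0)=0$ because $\theta$ attains its minimum $0$ at the origin.

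The genuinely multivariate content lies in the remaining axiom, i.e.\ the symmetry $\theta^{\ast}(y)=\theta^{\ast}(\abs{y})$ together with the two non-triviality conditions. For the symmetry I would use that $\theta(x)=\theta(\abs{x})$ makes $\theta$ invariant under every coordinate sign flip, that is, under every diagonal matrix $\sigma$ with entries in $\{\pm1\}$; since $x\cdot\sigma y=\sigma x\cdot y$ and $\theta(\sigma x)=\theta(x)$, the change of variables $x\mapsto\sigma x$ in the supremum gives $\theta^{\ast}(\sigma y)=\theta^{\ast}(y)$, and applying this coordinate by coordinate yields $\theta^{\ast}(y)=\theta^{\ast}(\abs{y})$. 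For non-triviality I would deploy the two prescribed affine bounds in dual roles. On the one hand, the lower bound $\theta(x)\geq a\norm{x}-b$ forces $x\cdot y-\theta(x)\leq\norm{x}(\norm{y}-a)+b$, which is bounded above whenever $\norm{y}<a$; hence $\dom(\theta^{\ast})$ contains the ball of radius $a$, a neighbourhood of $0$. On the other hand, since $\dom(\theta)$ contains a neighbourhood of $0$ and $\theta$ is convex, $\theta$ is bounded by some $M$ on a ball $\{\norm{x}\leq\rho\}$, so $\theta^{\ast}(y)\geq\sup_{\norm{x}\leq\rho}x\cdot y-M=\rho\norm{y}-M$ by the Cauchy--Schwarz equality case, which is exactly the required growth bound from below. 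Thus $\theta^{\ast}$ is Young.

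For the strictness statement I would isolate the single duality at its core: $\theta$ is strict if and only if $\theta^{\ast}$ is finite everywhere. Superlinearity is equivalent to the strengthened lower bound ``for every $a>0$ there is $b$ with $\theta(x)\geq a\norm{x}-b$''; feeding this into the conjugate bounds $\theta^{\ast}(y)$ for all $\norm{y}\leq a$, and letting $a\to\infty$, gives $\theta^{\ast}<\infty$ everywhere. Conversely, if $\theta^{\ast}$ is finite it is bounded on each sphere $\{\norm{y}=a\}$ by some $b_a$, whence $\theta(x)=\theta^{\ast\ast}(x)\geq\sup_{\norm{y}=a}(x\cdot y-\theta^{\ast}(y))\geq a\norm{x}-b_a$, so $\liminf_{\norm{x}\to\infty}\theta(x)/\norm{x}\geq a$ for every $a$, i.e.\ $\theta$ is strict. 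Applying this equivalence to $\theta^{\ast}$ together with $\theta^{\ast\ast}=\theta$ gives the dual statement that $\theta^{\ast}$ is strict if and only if $\theta$ is finite; the two equivalences then combine to give the announced three-way chain.

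I expect the main obstacle to be careful bookkeeping in the non-triviality step rather than any deep difficulty: the two conditions defining non-triviality are dual to one another (interiority of the domain versus linear growth from below), and one must pair each with the correct affine bound and use the Euclidean-norm optimisation $\sup_{\norm{x}\leq\rho}x\cdot y=\rho\norm{y}$ that replaces the scalar manipulations of the univariate proof. The symmetry inheritance is the other place where the multivariate lattice structure genuinely enters, but the sign-flip change of variables handles it cleanly.
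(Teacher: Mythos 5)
Your proof of the first equivalence and of the core duality ``$\theta$ is superlinear at infinity if and only if $\theta^\ast$ is finite everywhere'' is correct, and it follows the same route as the paper, whose entire proof is a one-line appeal to Fenchel--Moreau and the Fenchel--Young inequality; your write-up supplies the details that the paper omits (sign-flip invariance passing to the conjugate, the dual pairing of the two non-triviality bounds, local boundedness of a finite convex function on compact sets), all of which are sound.

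The one step that does not close is the final assembly. From your two equivalences --- (A) $\theta$ strict $\Leftrightarrow$ $\theta^\ast$ finite, and (B) $\theta^\ast$ strict $\Leftrightarrow$ $\theta$ finite --- you may conclude that ``both strict'' is equivalent to ``both finite'', but not the announced three-way chain: that would additionally require $\theta^\ast$ finite $\Leftrightarrow$ $\theta$ finite, which is false in general. Concretely, under the paper's literal definition of ``strict'' (only $\lim_{x\to\infty}\theta(x)/\norm{x}=\infty$, with no finiteness requirement), the function $\theta=\delta(\cdot\mid[-1,1])$ in dimension one is a Young function that is strict, yet its conjugate $\theta^\ast=\abs{\cdot}$ is finite but not strict, so ``$\theta$ strict $\Leftrightarrow$ $\theta^\ast$ strict'' fails. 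The statement is salvaged only if ``strict'' is read as including finiteness of $\theta$, in which case your equivalence (A) must be restated as ``$\theta$ strict $\Leftrightarrow$ $\theta$ and $\theta^\ast$ are both finite'' (superlinearity gives finiteness of the conjugate, and finiteness of $\theta$ is assumed), and the chain then follows by symmetry. This is as much a defect of the lemma's definitions as of your argument, but the sentence ``the two equivalences then combine'' is where it surfaces, and as written the combination is not a valid deduction.
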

\begin{proof}
    This follows by application of the Fenchel-Moreau theorem and from the relation $x \cdot y\leq \theta(x)+\theta^\ast(y)$.
\end{proof}
For $X \in L^0$, the Luxembourg norm of $X$ is given as
\begin{equation*}
    \norm{X}_{\theta}=\inf \left\{ \lambda\in \mathbb{R} \colon \lambda >0 \text{ and  }E\left[ \theta\left( X/\lambda \right) \right]\leq 1\right\},
\end{equation*}
where $\inf \emptyset =\infty$.
The Orlicz space and heart are respectively defined as
\begin{align*}
    L^\theta & :=\left\{ X \in L^0\colon \norm{X}_{\theta}<\infty \right\}=\left\{ X \in L^0\colon E\left[ \theta\left( X/\lambda \right) \right]<\infty \text{ for some  }\lambda \in \mathbb{R}, \lambda>0\right\}\\\nonumber
    M^{\theta} &:=\left\{ X \in L^0\colon E\left[ \theta\left( X/\lambda \right) \right]<\infty \text{ for all  }\lambda \in \mathbb{R}, \lambda>0\right\}.
\end{align*}
\begin{lemma}\begin{enumerate}
        \item We have $\norm{X}_{\theta}=0$ if and only if $X=0$.
        \item If $0< \norm{X}_{\theta}<\infty$, then $E[\theta(X/\norm{X}_\theta)]\leq 1$.
            In particular, $B:=\{X\colon \norm{X}_{\theta}\leq 1\}=\{X\colon E[\theta(X)]\leq 1\}$.
        \item The gauge $\norm{\cdot}_{\theta}$ is a norm both on the Orlicz space $L^\theta$ and on the Orlicz heart $M^\theta$.
        \item The following H\"older Inequality holds:
            \begin{equation*}
                E\left[ \abs{X \cdot Y} \right]\leq \norm{X}_{\theta}\norm{Y}_{\theta^\ast} .
            \end{equation*}
        \item $L^{\theta}$ is continuously embedded into $L^1$, the space of integrable random variables on $\Omega\times \{1,\ldots,d\}$ for the product measure $P\otimes \text{Unif}_{\{1,\ldots,d\}}.$
            \footnote{The case where $L^{\theta}=L^1$ corresponds to $\theta (x)=\sum |x_k|.$}
        \item The normed spaces $(L^\theta, \norm{\cdot}_{\theta})$ and $(M^\theta, \norm{\cdot}_{\theta})$ are Banach spaces.
    \end{enumerate}
\end{lemma}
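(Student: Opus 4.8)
The plan is to transcribe the classical one-dimensional Orlicz theory followed in the lecture notes cited, the only genuinely $d$-variate inputs being the vector Fenchel--Young inequality and the coercivity bound $\theta(x)\ge a\norm{x}-b$ built into the definition of a Young function. Since $\theta$ is Young we have $\theta(x)=\theta(\abs{x})$, and by the preceding lemma $\theta^\ast$ is Young as well, so $\theta^\ast(y)=\theta^\ast(\abs{y})$; combining $x\cdot y\le\theta(x)+\theta^\ast(y)$ with $(-x)\cdot y\le\theta(x)+\theta^\ast(y)$ then yields the vector Young inequality $\abs{x\cdot y}\le\theta(x)+\theta^\ast(y)$, which will drive item (4). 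I would prove the six items in the stated order, each leaning on its predecessors.

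For (1): if $X=0$ then $E[\theta(X/\lambda)]=0\le 1$ for all $\lambda>0$, so $\norm{X}_\theta=0$; conversely $\norm{X}_\theta=0$ produces $\lambda_n\searrow 0$ with $E[\theta(X/\lambda_n)]\le 1$, while on $\{X\neq 0\}$ the coercivity bound forces $\theta(X/\lambda_n)\nearrow\infty$, so monotone convergence gives $P[X\neq 0]=0$. For (2): choosing $\lambda_n\searrow\norm{X}_\theta$ with $E[\theta(X/\lambda_n)]\le 1$, lower semicontinuity of $\theta$ and Fatou give $E[\theta(X/\norm{X}_\theta)]\le\liminf_n E[\theta(X/\lambda_n)]\le 1$, which is exactly the closed-ball identity $B=\{X\colon E[\theta(X)]\le 1\}$. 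For (3): definiteness is (1), positive homogeneity is read off the definition, and the triangle inequality is the usual Minkowski argument: with $a=\norm{X}_\theta$, $b=\norm{Y}_\theta$, convexity of $\theta$ applied to $(X+Y)/(a+b)=\tfrac{a}{a+b}(X/a)+\tfrac{b}{a+b}(Y/b)$ together with item (2) gives $E[\theta((X+Y)/(a+b))]\le 1$, hence $\norm{X+Y}_\theta\le a+b$.

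For (4): normalizing $X,Y$ by their respective norms, applying the vector Young inequality pointwise, taking expectations and invoking (2) on each summand reduces the Hölder bound to exactly the normalization bookkeeping of the scalar case, the multivariate structure entering only through $\abs{x\cdot y}\le\theta(x)+\theta^\ast(y)$. For (5): from $\theta(X/\norm{X}_\theta)\ge a\norm{X}/\norm{X}_\theta-b$ and (2) one obtains $E[\norm{X}]\le a^{-1}(1+b)\norm{X}_\theta$, and since $\tfrac1d\sum_k\abs{X_k}$ is comparable to $\norm{X}$ this is precisely continuity of the embedding into $L^1(P\otimes\mathrm{Unif})$.

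The real work is (6), where I would use the criterion that a normed space is complete iff every absolutely summable series converges. Given $\sum_n\norm{X_n}_\theta<\infty$, the embedding (5) gives $\sum_n E[\norm{X_n}]<\infty$, so $S_N=\sum_{n\le N}X_n$ converges $P$-a.s.\ and in $L^1$ to some $X$; the triangle inequality (3) controls the tails by $\sum_{n>N}\norm{X_n}_\theta$ and, with (2) and Fatou, upgrades this to $\norm{X-S_N}_\theta\to 0$ and $X\in L^\theta$. Completeness of $M^\theta$ then reduces to its $\norm{\cdot}_\theta$-closedness in $L^\theta$: for $X_n\to X$ in norm with each $X_n\in M^\theta$ and fixed $\mu>0$, convexity gives $E[\theta(X/(2\mu))]\le\tfrac12 E[\theta(X_n/\mu)]+\tfrac12 E[\theta((X-X_n)/\mu)]$, the first term finite because $X_n\in M^\theta$ and the second at most $1$ once $n$ is large enough that $\norm{X-X_n}_\theta\le\mu$, whence $E[\theta(X/\nu)]<\infty$ for all $\nu>0$ and $X\in M^\theta$. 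I expect this transfer of the ``for all $\lambda$'' integrability through norm limits, where the vector coercivity, convexity and Fatou must be combined with some care, to be the \emph{main obstacle}; everything else is a routine transcription of the univariate theory.
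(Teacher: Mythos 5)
Your proposal is correct and follows exactly the route the paper takes: the paper's own proof of this lemma is nothing more than a citation to the univariate arguments in L\'eonard's lecture notes (Lemmas 1.8, 1.10 and Propositions 1.11, 1.14, 1.15, 1.18), with the remark that only the vector Fenchel--Young inequality and the coercivity bound $\theta(x)\geq a\norm{x}-b$ are genuinely $d$-variate --- which is precisely the transcription you carry out, including the completeness argument via absolutely summable series and the closedness of $M^\theta$ in $L^\theta$. One caveat: the normalization argument in your item (4) yields $E[\abs{X\cdot Y}]\leq 2\norm{X}_{\theta}\norm{Y}_{\theta^\ast}$, and the constant $2$ cannot in general be removed when both factors carry Luxembourg norms (e.g.\ $\theta(x)=\norm{x}^2/2$), so what you prove --- and what the cited source provides, and what suffices for item (5) and everything downstream --- is the factor-$2$ version rather than the constant-$1$ inequality as printed in the statement.
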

\begin{proof}
    These results can be established along the same lines as in the univariate case \citep[See][Lemmas 1.8 and 1.10 and Propositions 1.11, 1.14, 1.15 and 1.18]{leonard2007}, using the Fenchel-Moreau Theorem in $\mathbb{R}^d_+$.
\end{proof}

\begin{theorem}\label{thm:duality}
    If $\theta$ is finite, then the topological dual of $M^{\theta}$ is $L^{\theta^\ast}$.
\end{theorem}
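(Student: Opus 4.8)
The plan is to establish the two inclusions $L^{\theta^\ast}\hookrightarrow (M^\theta)^\ast$ and $(M^\theta)^\ast\hookrightarrow L^{\theta^\ast}$ separately, the first being a direct consequence of H\"older's inequality and the second requiring a Radon--Nikodym construction. For the easy inclusion, I note that by item (4) of the preceding lemma every $Y\in L^{\theta^\ast}$ induces a linear functional $X\mapsto E[X\cdot Y]$ on $M^\theta$ of operator norm at most $\norm{Y}_{\theta^\ast}$, hence continuous; this map is injective because the finiteness of $\theta$ guarantees $L^\infty\subseteq M^\theta$, so that bounded vectors separate the points of $L^{\theta^\ast}$. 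It therefore remains to show that every $\mu\in(M^\theta)^\ast$ is of this form. To organise the argument I would transport everything to the product space $\tilde\Omega=\Omega\times\{1,\dots,d\}$ with $\tilde P=P\otimes\mathrm{Unif}_{\{1,\dots,d\}}$, under which, by item (5), $M^\theta$ embeds continuously into $L^1(\tilde P)$ and a random vector $X$ is read as the scalar function $(\omega,k)\mapsto X_k(\omega)$.

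First I would fix $\mu\in(M^\theta)^\ast$ and build a candidate density. Since $\theta$ is finite, bounded (hence simple) vectors lie in $M^\theta$ and are dense in it, and the Luxembourg norm is order continuous, so that $\norm{1_{A_n}}_\theta\to 0$ whenever $A_n\downarrow\emptyset$. Defining the set function $\nu(\tilde A)=\mu(1_{\tilde A})$ on the measurable subsets of $\tilde\Omega$, the continuity of $\mu$ combined with this order continuity yields countable additivity and $\nu\ll\tilde P$. The Radon--Nikodym theorem then furnishes a density, that is a vector $Y=(Y_1,\dots,Y_d)$ with $\mu(1_{\tilde A})=E[1_{\tilde A}\cdot Y]$, reading $1_{\tilde A}$ as the corresponding vector of indicators. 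By linearity this gives $\mu(X)=E[X\cdot Y]$ for all simple vectors $X$, and by density of simple vectors together with the continuity of both sides the identity extends to all of $M^\theta$.

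The hard part will be the final step, namely showing that the density $Y$ actually belongs to $L^{\theta^\ast}$; everything before it is a routine transcription of the univariate arguments once the finiteness of $\theta$ secures density of simple vectors and order continuity of the norm. From $\mu(X)=E[X\cdot Y]$ and the boundedness $\abs{\mu(X)}\le\norm{\mu}\,\norm{X}_\theta$ I obtain that the Orlicz dual norm $\sup\{E[X\cdot Y]\colon \norm{X}_\theta\le 1\}$ is finite. I would then invoke the multivariate analogue of the equivalence between this Orlicz norm and the Luxembourg norm $\norm{\cdot}_{\theta^\ast}$, whose proof rests solely on the Fenchel--Young inequality $x\cdot y\le\theta(x)+\theta^\ast(y)$ on $\mathbb{R}^d$ and therefore carries over verbatim from the scalar case. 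To make the passage rigorous I would argue by truncation: testing the pairing bound against the vectors that pointwise maximise $x\cdot Y-\theta(x)$ on the sets $\{\norm{Y}\le n\}$, which are bounded there since $\theta$ is finite, produces a uniform bound $E[\theta^\ast(Y)1_{\{\norm{Y}\le n\}}]\le C$, and monotone convergence as $n\to\infty$ gives $E[\theta^\ast(Y)]<\infty$, hence $Y\in L^{\theta^\ast}$. The genuinely multivariate difficulty concentrated here is that the coupling of the coordinates by $\theta$ forbids any coordinatewise reduction and forces the use of the $\mathbb{R}^d$ Fenchel--Moreau machinery rather than its one-dimensional counterpart.
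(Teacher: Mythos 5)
Your proof is correct and takes essentially the same route as the paper, whose own proof consists of deferring to the univariate argument in L\'eonard's lecture notes (Proposition 1.20, Theorem 2.2, Lemmas 2.4 and 2.5) transported to the product space $\Omega\times\{1,\dots,d\}$; your Radon--Nikodym construction followed by the Orlicz/Luxembourg norm comparison via the Fenchel--Young inequality on $\mathbb{R}^d$ is precisely that argument spelled out. Two small points to tidy if you write it in full: establish $Y\in L^{\theta^\ast}$ \emph{before} extending $\mu(X)=E[X\cdot Y]$ beyond bounded $X$ (otherwise the ``continuity of both sides'' is circular), and replace the pointwise maximisers of $x\mapsto x\cdot Y-\theta(x)$ by measurable near-maximisers of the truncated conjugate $\theta^\ast\wedge n$, since finiteness of $\theta$ alone does not make $\theta^\ast$ finite or the supremum attained.
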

\begin{proof}
    Again, the proof follows the univariate case \citep[see][Proposition 1.20, Theorem 2.2 and Lemmas 2.4 and 2.5]{leonard2007}.
\end{proof}

\section{Data Analysis} \label{appendix:03}

Figure \ref{f:gross_pos_cm} shows the gross positions (sum of the absolute values of the positions in the underlying asset) per clearing member. Four members concentrate particularly high positions in the CCP.
Figure \ref{f:gross_pos_udl} shows the gross positions of the CCP per underlying asset (top) and the corresponding underlying asset values (bottom). The largest investment by far of the clearing members is in the asset with ticker FCE 
(CAC40 index future, with spot value 4463), by a factor about three to the second one AEX (Amsterdam exchange index, with spot value 443.83). The investments of the clearing members in the other assets are comparatively much smaller.

Figure \ref{f:heatmap_highest} shows the signed positions in the underlying assets of the twelve clearing members with the largest gross positions (left) and the signed positions of the clearing members in the nine most traded underlying assets (right), in a heatmap representation. In particular, we observe from the left panel that the biggest players in the CCP, namely the members labeled PB7, PB56, PB59 and PB50, have opposite sign positions in the main asset (the one with ticker FCE). The right panel shows that the dominant asset position in the CCP, i.e. the one in FCE, is shared (with opposite signs) between a significant number of clearing members.
Figure \ref{f:vols} shows the annualized volatilities $\kappa_i \times \sqrt{\frac{\nu_i}{\nu_i -2}} \times \sqrt{\frac{250}{3}}$ of the underlying assets (cf. \eqref{e:Si}). Most of these volatilities are comprised between 15\% and 40\%, with two assets, KBC and TMS, spiking over 60\% volatility. However, the clearing members are only very marginally invested in these two assets (their tickers do not even appear in the right panel of Figure \ref{f:gross_pos_udl}).
Figure \ref{f:log_risk_risk} shows the monetary risks (3d volatilities $\times$ absolute monetary positions) in the underlying assets of the ten 
clearing members with the largest gross positions.
From the right panel we see that the FCE and AEX assets (CAC40 index future FCE and Amsterdam exchange index AEX, two major indices) concentrate 
most of the risk of the clearing members. The comparison with Figure \ref{f:vols} shows that this is not an effect of the volatility of these assets, but of very large monetary positions of the clearing members.

\begin{figure}[htbp]
    \centering
    \includegraphics[width=0.45\textwidth]{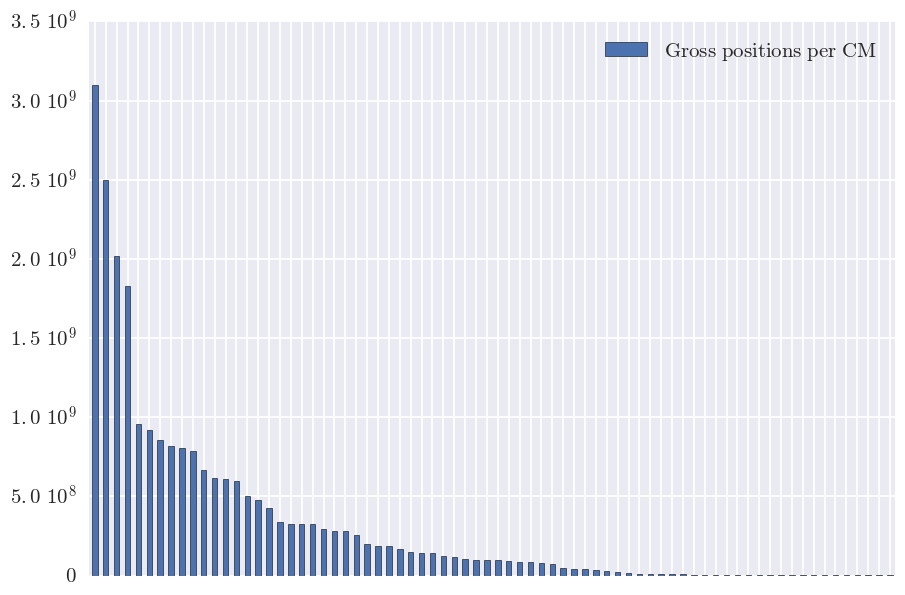}
 	\includegraphics[width=0.45\textwidth]{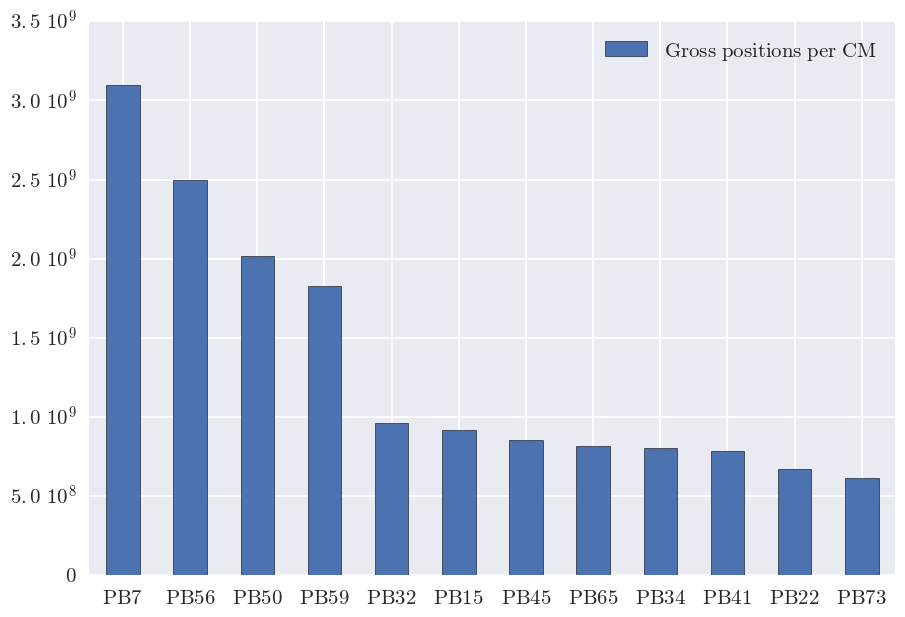}
    \caption{	Left: Gross positions per clearing member, ranked decreasing. 
    			Right: Zoom on the left part of the graph with member labels.}

    \label{f:gross_pos_cm}
\end{figure}

\begin{figure}[htbp]
    \centering
    
    \hspace{6pt} 
    
    \includegraphics[width=0.45\textwidth]{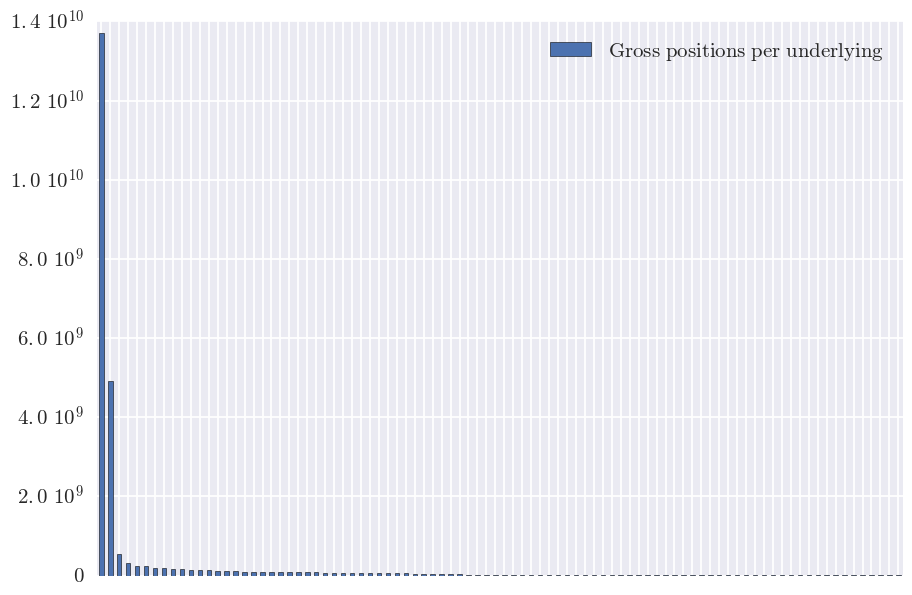}
 	\includegraphics[width=0.45\textwidth]{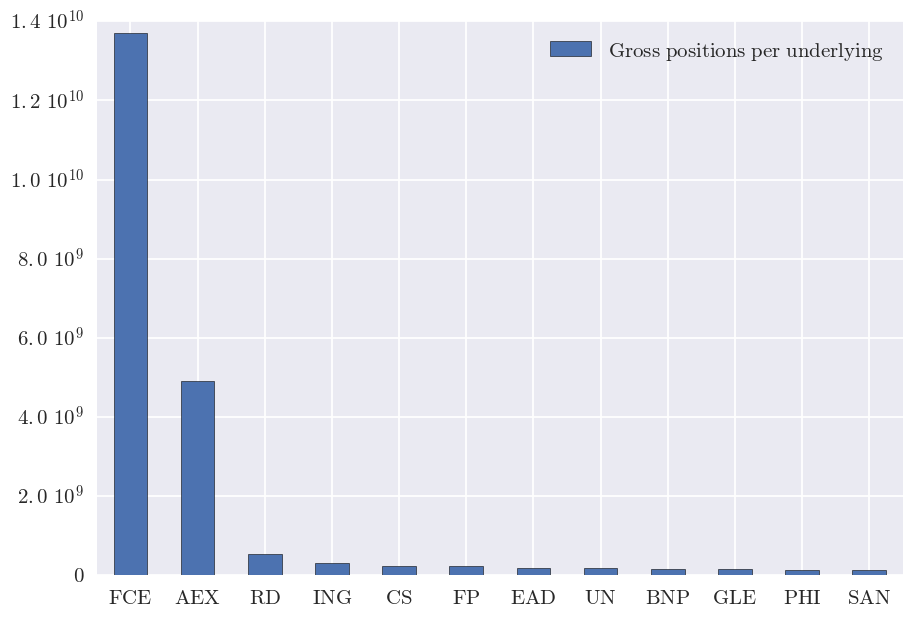}
    
    \includegraphics[width=0.45\textwidth]{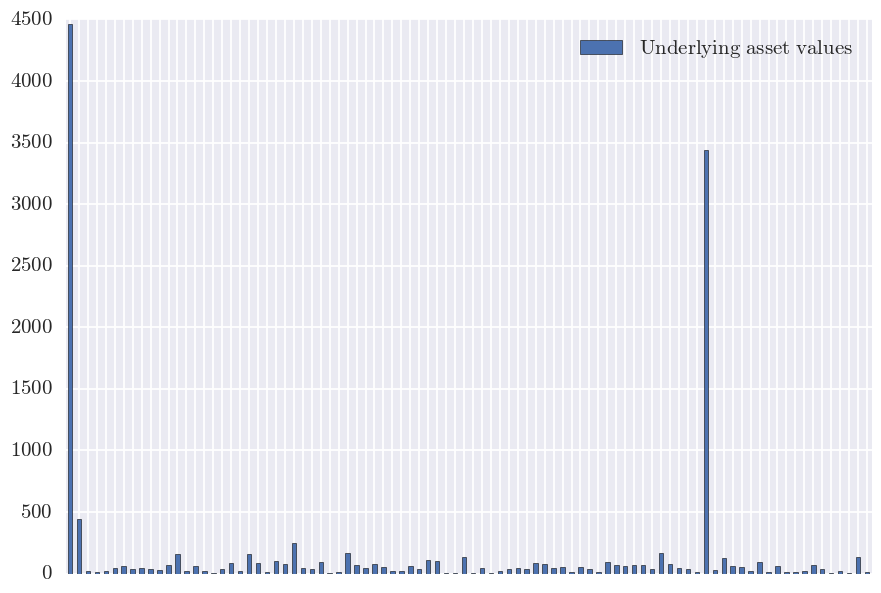} 
    \includegraphics[width=0.45\textwidth]{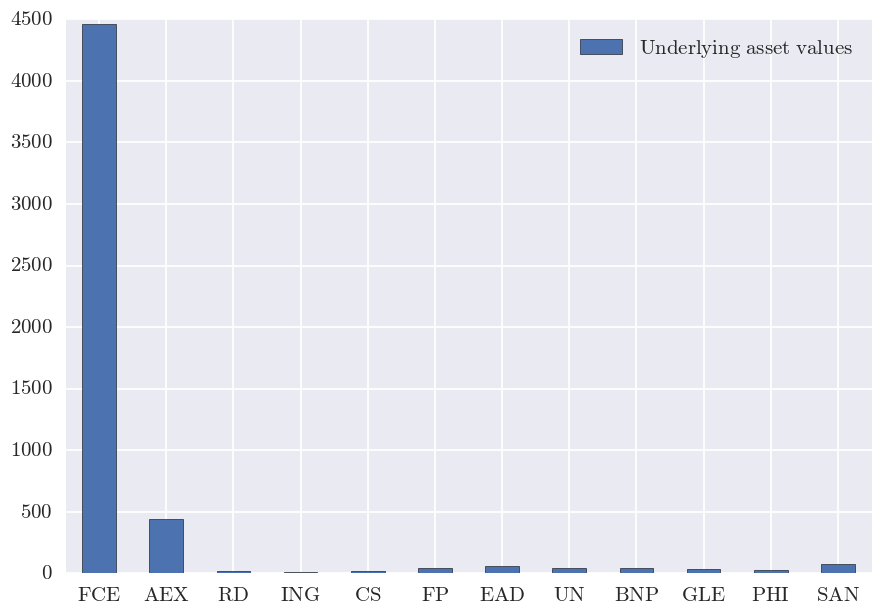}
    
    \caption{	Top: Gross positions per underlying, ranked decreasing (left) 
    			and zoom on the left part of the graph with tickers (right). 
    			Bottom: Spot values of the underlying assets, ranked as above (left) 
    			and zoom on the left part of the graph with tickers (right).}
    
    \label{f:gross_pos_udl}
\end{figure}

\begin{figure}[htbp]
    \centering
    \includegraphics[width=0.45\textwidth]{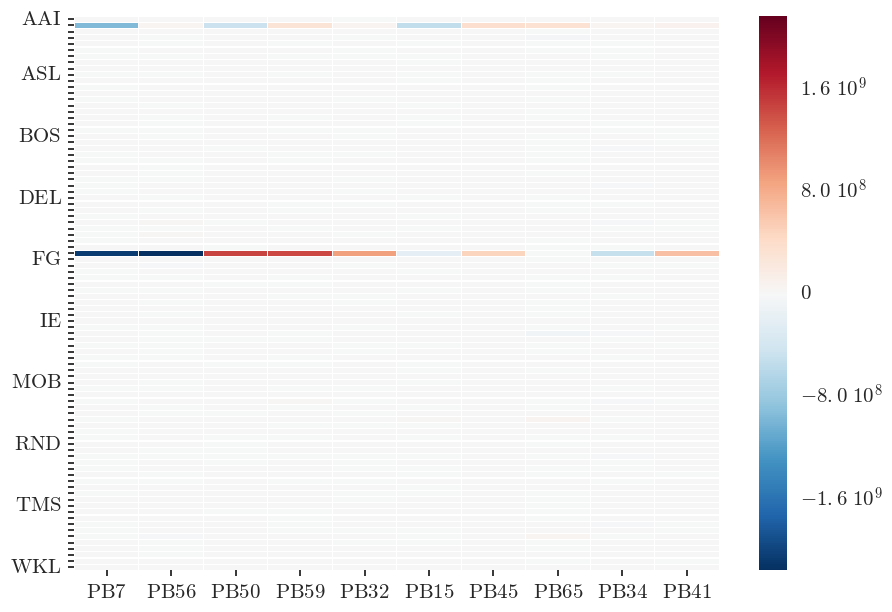}
    \includegraphics[width=0.45\textwidth]{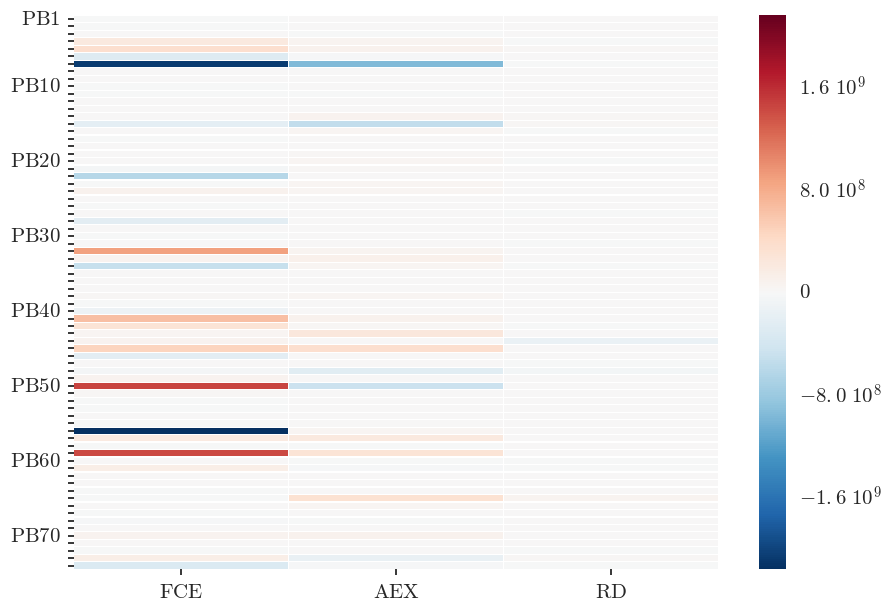}

    \caption{	Left: Positions in the underlying assets (one ticker out of ten displayed along the $y$ axis) 
    			of the ten clearing members with the largest gross positions, ranked by decreasing gross positions. 
    			Right: Positions of the clearing members (one label out of ten displayed along the $x$ axis) 
    			in the three most invested-in underlying assets, ranked by asset gross positions of the CCP.}

    \label{f:heatmap_highest}
\end{figure}

\begin{figure}[htbp]
	\centering
	\includegraphics[width=0.45\textwidth]{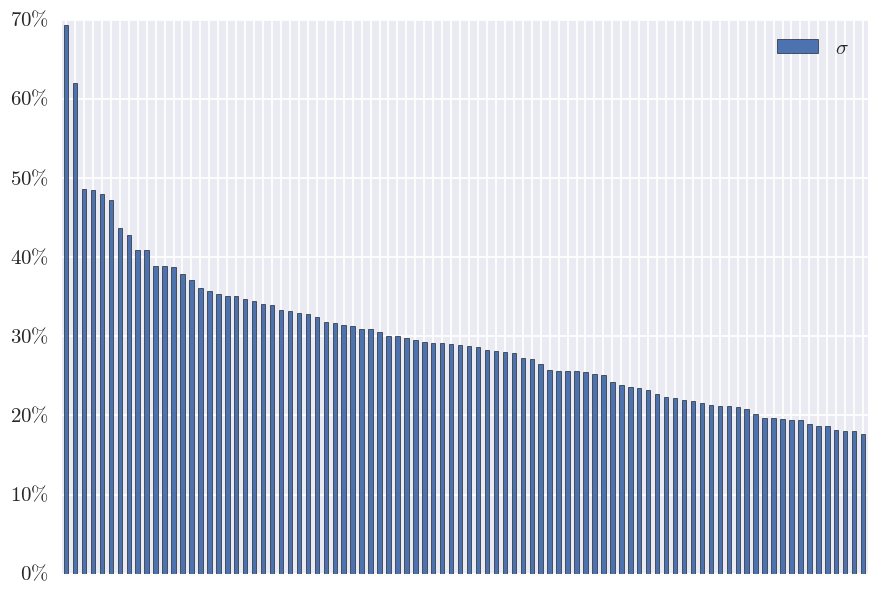}
	\includegraphics[width=0.45\textwidth]{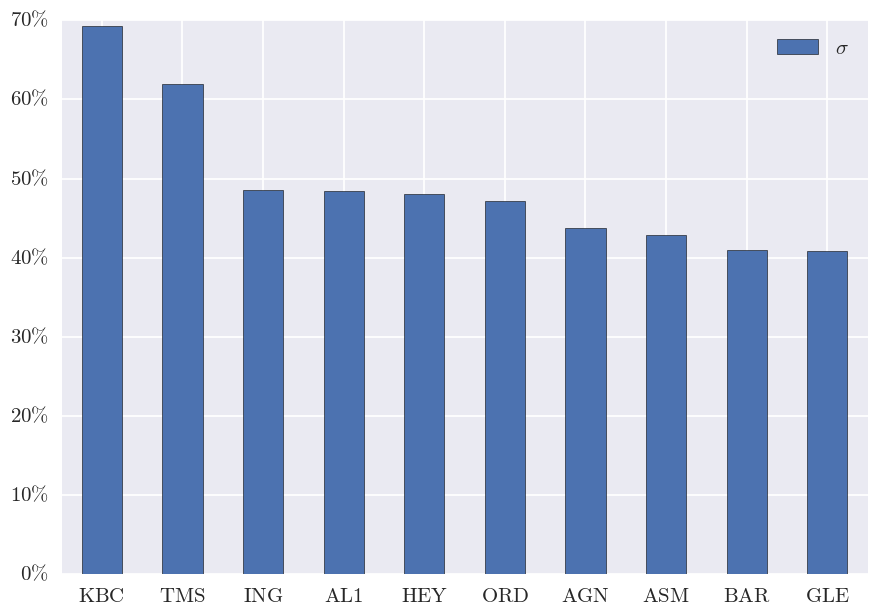}

	\caption{	Left: Underlying asset volatilities (ranked by decreasing order).
				Right: Zoom on the left part of the graph with tickers.}

	\label{f:vols}
\end{figure}


\begin{figure}[htbp]
    \centering
    \includegraphics[width=0.45\textwidth]{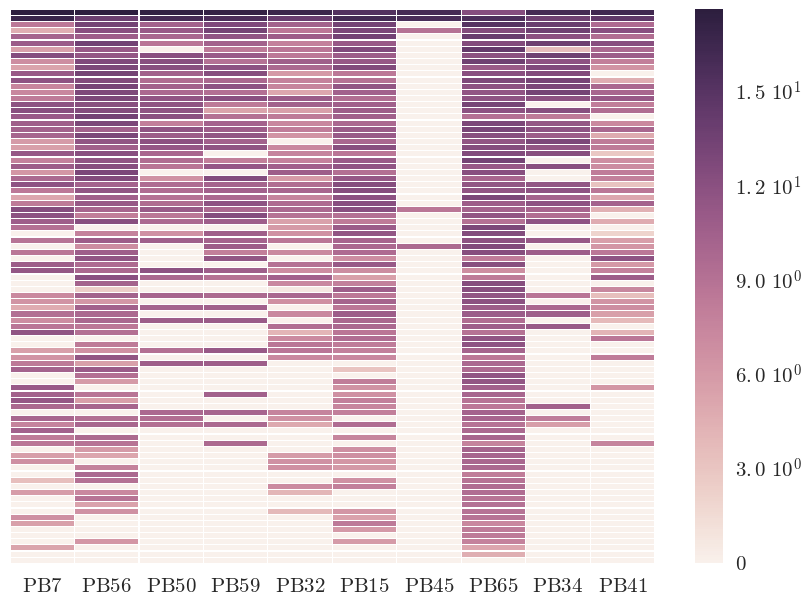}
    \includegraphics[width=0.45\textwidth]{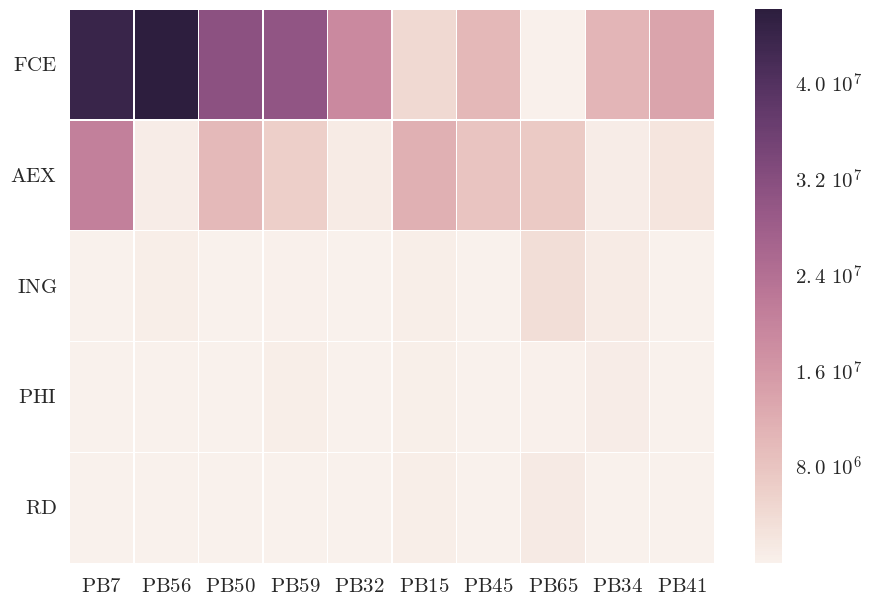}

    \caption{	Left: Log monetary risks in the underlying assets, ranked by decreasing risk order, 
    			of the ten clearing members with the largest gross positions.
    			Right: Monetary risks in the five most invested-in underlying assets of the ten 
    			clearing members with the largest gross positions.}

    \label{f:log_risk_risk}
\end{figure}



\end{appendix}
\newpage
\bibliographystyle{abbrvnat}
\bibliography{biblio}

\begin{thebibliography}{48}
\providecommand{\natexlab}[1]{#1}
\providecommand{\url}[1]{\texttt{#1}}
\expandafter\ifx\csname urlstyle\endcsname\relax
  \providecommand{\doi}[1]{doi: #1}\else
  \providecommand{\doi}{doi: \begingroup \urlstyle{rm}\Url}\fi

\bibitem[Acharya et~al.(2010)Acharya, L.~Pedersen, and
  Richardson]{AcharyaPedersenPhilipponRichardson10}
V.~Acharya, T.~P. L.~Pedersen, and M.~Richardson.
\newblock Measuring systemic risk.
\newblock \emph{SSRN: 1573171}, 2010.

\bibitem[Acharya et~al.(2012)Acharya, Engle, and
  Richardson]{AcharyaEngleRichardson12}
V.~Acharya, R.~Engle, and M.~Richardson.
\newblock Capital shortfall: A new approach to ranking and regulating systemic
  risks.
\newblock \emph{American Economic Review: Papers \& Proceedings}, 102\penalty0
  (3):\penalty0 59--64, 2012.

\bibitem[Adrian and Brunnermeier(2011)]{AdrianBrunnermeier11}
T.~Adrian and M.~Brunnermeier.
\newblock {CoVaR}.
\newblock \emph{National Bureau of Economic Research Working Paper}, 1745,
  2011.

\bibitem[Ararat et~al.(2014)Ararat, Hamel, and Rudloff]{ararat2014}
{\c{C}}.~Ararat, A.~H. Hamel, and B.~Rudloff.
\newblock Set-valued shortfall and divergence risk measures.
\newblock \emph{ArXiV:1405.4905}, 2014.

\bibitem[Armenti and Cr\'epey(2017)]{crepey2016}
Y.~Armenti and S.~Cr\'epey.
\newblock Central clearing valuation adjustment.
\newblock \emph{SIAM Journal on Financial Mathematics}, 2017.
\newblock Forthcoming.

\bibitem[Artzner et~al.(1999)Artzner, Delbaen, Eber, and Heath]{artzner1999}
P.~Artzner, F.~Delbaen, J.~M. Eber, and D.~Heath.
\newblock Coherent measures of risk.
\newblock \emph{Mathematical Finance}, 9:\penalty0 203--228, 1999.

\bibitem[Awiszus and Weber(2016)]{weber2016}
K.~Awiszus and S.~Weber.
\newblock The joint impact of bankruptcy costs, fire sales and cross-holdings
  on systemic risk in financial networks.
\newblock \emph{Preprint}, 2016.

\bibitem[Bellini and Bignozzi(2015)]{bellini2015}
F.~Bellini and V.~Bignozzi.
\newblock Elicitable risk measures.
\newblock \emph{Quantitative Finance}, 15\penalty0 (5):\penalty0 725--733,
  2015.

\bibitem[Ben-Tal and Teboulle(2007)]{bental2007}
A.~Ben-Tal and M.~Teboulle.
\newblock An old-new concept of convex risk measures: the optimized certainty
  equivalent.
\newblock \emph{Mathematical Finance}, 17\penalty0 (3):\penalty0 449--476,
  2007.

\bibitem[Biagini et~al.(2015)Biagini, Fouque, Frittelli, and
  Meyer-Brandis]{BiaginiFouqueFrittelliMeyer15}
F.~Biagini, J.-P. Fouque, M.~Frittelli, and T.~Meyer-Brandis.
\newblock A unified approach to systemic risk measures via acceptance sets.
\newblock \emph{ArXiV:1503.06354}, 2015.

\bibitem[Biagini and Frittelli(2008)]{biagini2008}
S.~Biagini and M.~Frittelli.
\newblock A unified framework for utility maximization problems: An orlicz
  space approach.
\newblock \emph{The Annals of Applied Probability}, 18\penalty0 (3):\penalty0
  929--966, 06 2008.

\bibitem[Biagini and Frittelli(2010)]{biagini2010}
S.~Biagini and M.~Frittelli.
\newblock \emph{On the Extension of the Namioka-Klee Theorem and on the Fatou
  Property for Risk Measures}, pages 1--28.
\newblock Springer Berlin Heidelberg, Berlin, Heidelberg, 2010.

\bibitem[Brownlees and Engle(2012)]{BrownleesEngle12}
C.~Brownlees and R.~Engle.
\newblock Volatility, correlation and tails for systemic risk measurement.
\newblock \emph{SSRN: 1611229}, 2012.

\bibitem[Brunnemeier and Cheridito(2014)]{cheridito2014}
M.~K. Brunnemeier and P.~Cheridito.
\newblock Measuring and allocating systemic risk.
\newblock \emph{SSRN: 2372472}, 2014.

\bibitem[Cascos and Molchanov(2014)]{CascosMolchanov14}
I.~Cascos and I.~Molchanov.
\newblock Multivariate risk measures: a constructive approach based on
  selections.
\newblock \emph{Mathematical Finance}, 2014.
\newblock Forthcoming.

\bibitem[Chen et~al.(2013)Chen, Garud, and Ciamac~C.]{chen2013}
C.~Chen, I.~Garud, and M.~Ciamac~C.
\newblock An axiomatic approach to systemic risk.
\newblock \emph{Management Science}, 59\penalty0 (6):\penalty0 1373--1388,
  2013.

\bibitem[Cheridito and Kromer(2013)]{cheridito2013}
P.~Cheridito and E.~Kromer.
\newblock Reward-risk ratios.
\newblock \emph{Journal of Investment Strategies}, 3\penalty0 (1):\penalty0
  1--16, 2013.

\bibitem[Cheridito and Li(2009)]{cheridito2009}
P.~Cheridito and T.~Li.
\newblock Risk measures on {O}rlicz hearts.
\newblock \emph{Mathematical Finance}, 19\penalty0 (2):\penalty0 189--214,
  2009.

\bibitem[Cont(2015)]{cont2015}
R.~Cont.
\newblock The end of the waterfall: Default resources of central
  counterparties.
\newblock \emph{Journal of Risk Management in Financial Institutions},
  8\penalty0 (4), 2015.

\bibitem[Cont et~al.(2013)Cont, Santos, and Moussa]{Cont2013}
R.~Cont, E.~Santos, and A.~Moussa.
\newblock Network structure and systemic risk in banking systems.
\newblock In J.-P. Fouque and J.~Langsam, editors, \emph{Handbook of Systemic
  Risk}. Cambridge University Press, 2013.

\bibitem[Delbaen(2002)]{delbaen2002}
F.~Delbaen.
\newblock \emph{Coherent Risk Measures on General Probability Spaces}, pages
  1--37.
\newblock Springer Berlin Heidelberg, Berlin, Heidelberg, 2002.

\bibitem[Drapeau and Kupper(2013)]{drapeau2013}
S.~Drapeau and M.~Kupper.
\newblock Risk preferences and their robust representation.
\newblock \emph{Mathematics of Operations Research}, 28\penalty0 (1):\penalty0
  28--62, 2013.

\bibitem[Drapeau et~al.(2014)Drapeau, Kupper, and Papapantoleon]{antonis2012}
S.~Drapeau, M.~Kupper, and A.~Papapantoleon.
\newblock A {F}ourier approach to the computation of {CV@R} and optimized
  certainty equivalents.
\newblock \emph{Journal of Risk}, 16\penalty0 (6):\penalty0 3--29, 2014.

\bibitem[Eberlein et~al.(2010)Eberlein, Glau, and
  Papapantoleon]{EberleinGlauPapapantoleon08}
E.~Eberlein, K.~Glau, and A.~Papapantoleon.
\newblock Analysis of {F}ourier transform valuation formulas and applications.
\newblock \emph{Applied Mathematical Finance}, 17:\penalty0 211--240, 2010.

\bibitem[Eisenberg and Noe(2001)]{eisenberg2001}
L.~Eisenberg and T.~H. Noe.
\newblock Systemic risk in financial systems.
\newblock \emph{Management Science}, 47\penalty0 (2):\penalty0 236--249, 2001.

\bibitem[{European~Parliament}(2012)]{europeanparlement2012}
{European~Parliament}.
\newblock {R}egulation {(EU)} no 648/2012 of the {E}uropean parliament and of
  the council of 4 july 2012 on {OTC} derivatives, central counterparties and
  trade repositories.
\newblock \emph{Official Journal of the European Union}, 2012.

\bibitem[Farkas and Koch-Medina(2015)]{FarkasKoch15}
W.~Farkas and P.~Koch-Medina.
\newblock Measuring risk with multiple eligible assets.
\newblock \emph{Mathematics and Financial Economics}, 9\penalty0 (1):\penalty0
  3--27, 2015.

\bibitem[Feinstein et~al.(2015)Feinstein, Rudlof, and Weber]{feinstein2015}
Z.~Feinstein, B.~Rudlof, and S.~Weber.
\newblock Measures of systemic risk.
\newblock \emph{ArXiV:1502.07961}, 2015.

\bibitem[Fiacco and McCormick(1990)]{fiacco1990}
A.~Fiacco and G.~McCormick.
\newblock \emph{Nonlinear Programming: Sequential Unconstrained Minimization
  Techniques}.
\newblock Classics in Applied Mathematics. Society for Industrial and Applied
  Mathematics, 1990.
\newblock ISBN 9780898712544.

\bibitem[{Fissler} and {Ziegel}(2015)]{ziegel2015}
T.~{Fissler} and J.~F. {Ziegel}.
\newblock Higher order elicitability and osband's principle.
\newblock \emph{ArXiV:1503.08123}, Mar. 2015.

\bibitem[F\"{o}llmer and Schied(2002)]{foellmer2002}
H.~F\"{o}llmer and A.~Schied.
\newblock Convex measures of risk and trading constraint.
\newblock \emph{Finance and Stochastics}, 6\penalty0 (4):\penalty0 429--447,
  2002.

\bibitem[Frittelli and Rosazza~Gianin(2002)]{fritelli2002}
M.~Frittelli and E.~Rosazza~Gianin.
\newblock Putting order in risk measures.
\newblock \emph{Journal of Banking \& Finance}, 26\penalty0 (7):\penalty0
  1473--1486, July 2002.

\bibitem[Ga{\ss} et~al.(2015)Ga{\ss}, Glau, Mahlstedt, and
  Mair]{GassGlauMahlstedtMair15}
M.~Ga{\ss}, K.~Glau, M.~Mahlstedt, and M.~Mair.
\newblock Chebyshev interpolation for parametric option pricing.
\newblock \emph{ArXiV:1505.04648}, 2015.

\bibitem[Ghamami and Glasserman(2016)]{ghamami2016}
S.~Ghamami and P.~Glasserman.
\newblock Does {OTC} derivatives reform incentivize central clearing?
\newblock Technical report, Office of Financial Research, 2016.

\bibitem[Glasserman et~al.(2016)Glasserman, Moallemi, and Yuan]{glasserman2016}
P.~Glasserman, C.~C. Moallemi, and K.~Yuan.
\newblock Hidden illiquidity with multiple central counterparties.
\newblock \emph{Operations Research}, 64\penalty0 (5):\penalty0 1143--1158,
  2016.

\bibitem[Hamel et~al.(2011)Hamel, Heyde, and Rudloff]{hamel2011}
A.~Hamel, F.~Heyde, and B.~Rudloff.
\newblock Set-valued risk measures for conical market models.
\newblock \emph{Mathematics and Financial Economics}, 5\penalty0 (1):\penalty0
  1--28, 2011.

\bibitem[Jouini et~al.(2004)Jouini, Meddeb, and Touzi]{JouiniMeddebTouzi04}
E.~Jouini, M.~Meddeb, and N.~Touzi.
\newblock Vector-valued coherent risk measures.
\newblock \emph{Finance and Stochastics}, 8:\penalty0 531--552, 2004.

\bibitem[Kr\"{a}tschmer et~al.(2014)Kr\"{a}tschmer, Schied, and
  Z\"{a}hle]{schied2014}
V.~Kr\"{a}tschmer, A.~Schied, and H.~Z\"{a}hle.
\newblock Comparative and qualitative robustness for law-invariant risk
  measures.
\newblock \emph{Finance and Stochastics}, 18:\penalty0 271--295, 2014.

\bibitem[Kromer et~al.(2016)Kromer, Overbeck, and Zilch]{kromer2016}
E.~Kromer, L.~Overbeck, and K.~Zilch.
\newblock Systemic risk measures on general measurable spaces.
\newblock \emph{Mathematical Methods of Operations Research}, 84\penalty0
  (2):\penalty0 323--357, 2016.

\bibitem[L\'eonard(2007)]{leonard2007}
C.~L\'eonard.
\newblock Some notes on {O}rlicz spaces.
\newblock 2007.
\newblock URL
  \url{http://www.cmap.polytechnique.fr/~leonard/papers/orlicz.pdf}.

\bibitem[Osband(1985)]{osband1985}
K.~H. Osband.
\newblock \emph{Providing incentives for better cost forecasting}.
\newblock PhD thesis, University California, Berkeley, 1985.

\bibitem[Rockafellar(1970)]{rockafellar1970}
R.~T. Rockafellar.
\newblock \emph{Convex Analysis}.
\newblock Princeton Mathematical Series, No. 28. Princeton University Press,
  Princeton, N.J., 1970.

\bibitem[Rockafellar and Wets(2009)]{rockafellar2009}
R.~T. Rockafellar and R.~J.-B. Wets.
\newblock \emph{Variational Analysis}.
\newblock Springer, Berlin, New York, 3rd edition, 2009.

\bibitem[R\"{u}schendorf(2004)]{rueschendorf2004}
L.~R\"{u}schendorf.
\newblock Comparison of multivariate risks and positive dependence.
\newblock \emph{Journal of Applied Probability}, 41\penalty0 (2):\penalty0
  391--406, 2004.

\bibitem[R\"{u}schendorf(2006)]{rueschendorf2006}
L.~R\"{u}schendorf.
\newblock Law invariant convex risk measures for portfolio vectors.
\newblock \emph{Statistics \& Decisions}, 24:\penalty0 97--108, 2006.

\bibitem[Tasche(2008)]{tasche2008}
D.~Tasche.
\newblock \emph{Pillar II in the New Basel Accord: The Challenge of Economic
  Capital}, chapter Capital allocation to business units and sub-portfolios:
  the Euler principle, pages 423--453.
\newblock Risk Books, 2008.

\bibitem[Weber(2006)]{weber2006}
S.~Weber.
\newblock Distribution-invariant risk measures, information and dynamic
  consistency.
\newblock \emph{Mathematical Finance}, 16\penalty0 (2):\penalty0 419--441,
  2006.

\bibitem[Ziegel(2014)]{ziegel2014}
J.~F. Ziegel.
\newblock Coherence and elicitability.
\newblock \emph{Mathematical Finance}, pages n/a--n/a, 2014.
\newblock ISSN 1467--9965.

\end{thebibliography}

\end{document}